 \documentclass[11pt]{article}
 \newcommand{\palatinoformat}{}

\newcommand{\manuscript}{tutorial}

\newcommand{\adv}{\mathcal A} 
\newcommand{\Adv}{\mathcal B}

\newcommand{\oracle}{\mathcal O}

\newcommand{\kk}{\kappa}

\newcommand{\pk}{\mathit{pk}}
\newcommand{\sk}{\mathit{sk}}

\newcommand{\Succ}{\mathsf{Succ}}

\newcommand{\negl}{\mathsf{negl}}
\newcommand{\neglK}{\negl(\kk)}

\newcommand{\nB}{\mathit{nb}}
\newcommand{\nC}{\mathit{nc}}
\newcommand{\mC}{\mathit{mc}}
\newcommand{\mB}{\mathit{mb}}

\newcommand{\votespace}{\mathcal{M}}

\newcommand{\bb}{\mathfrak{bb}}

\newcommand{\outcome}{\mathfrak v}
\newcommand{\tpf}{\mathit{pf}}

\newcommand{\GenSymb}{\mathsf{Gen}}
\newcommand{\EncSymb}{\mathsf{Enc}}
\newcommand{\DecSymb}{\mathsf{Dec}}
\newcommand{\Gen}[1][\kk]{\GenSymb(#1)}
\newcommand{\Enc}[2][\pk]{\EncSymb(#1,#2)}
\newcommand{\Dec}[2][\sk]{\DecSymb(#1,#2)}

\newcommand{\INDCPA}{\mathsf{IND}\textsf{-}\mathsf{CPA}}

\newcommand{\CNMCPA}{\mathsf{CNM}\textsf{-}\mathsf{CPA}}

\newcommand{\INDPA}{\mathsf{IND}\textsf{-}\mathsf{PA0}}
\newcommand{\gamePINDPA}[3]{\INDPA(#1, #2,#3)}
\newcommand{\gameINDPA}{\gamePINDPA{\Pi}{\adv}{\kk}}

\newcommand{\SetupSymb}{\mathsf{Setup}}
\newcommand{\VoteSymb}{\mathsf{Vote}}
\newcommand{\TallySymb}{\mathsf{Tally}}
\newcommand{\VerifySymb}{\mathsf{Verify}}

\newcommand{\Setup}[1][\kk]{\SetupSymb(#1)}
\newcommand{\Vote}[1][\pk,\allowbreak v,\allowbreak\nC,\allowbreak\kk]{\VoteSymb(#1)}
\newcommand{\Tally}[1][\sk,\allowbreak\bb,\allowbreak\nC,\allowbreak \kk]{\TallySymb(#1)}
\newcommand{\Verify}[1][\pk,\allowbreak\bb,\allowbreak\nC,\allowbreak\outcome,\allowbreak\tpf,\allowbreak\kk]{\VerifySymb(#1)}

\newcommand{\encToVoteSymb}{\mathsf{Enc2Vote}}
\newcommand{\encToVote}[1][\Pi]{\encToVoteSymb(#1)}

\newcommand{\heliosspec}{Helios'12}
\newcommand{\heliosnext}{Helios'16}

\newcommand{\Completeness}{\mathsf{Completeness}}
\newcommand{\Injectivity}{\mathsf{Injectivity}}
\newcommand{\Soundness}{\mathsf{Soundness}}

\newcommand{\IV}{\mathsf{Individual}\textrm{-}\allowbreak\mathsf{Verifiability}}
\newcommand{\UV}{\mathsf{Universal}\textrm{-}\allowbreak\mathsf{Verifiability}}

\newcommand{\IVGame}[1][\Gamma,\allowbreak\adv,\allowbreak\kk]{\IV(#1)}

\newcommand{\CompletenessGame}[1][\Gamma,\allowbreak\adv,\allowbreak\kk]{\Completeness(#1)}
\newcommand{\SoundnessGame}[1][\Gamma,\allowbreak\adv,\allowbreak\kk]{\Soundness(#1)}

\newcommand{\BallotSecrecy}{\mathsf{Ballot}\textrm{-}\allowbreak\mathsf{Secrecy}}
\newcommand{\BallotSecrecyGame}[1][\Gamma,\allowbreak\adv,\allowbreak\kk]{\BallotSecrecy(\allowbreak{}#1)}

\newcommand{\BallotIndependence}{\mathsf{IND}\textsf{-}\allowbreak\mathsf{CVA}}

\newcommand{\BallotIndependenceGame}[1][\Gamma,\allowbreak\adv,\allowbreak\kk]{\BallotIndependence(#1)}

\newcommand{\balanced}{\mathit{balanced}}

\newcommand{\correcttally}{\mathit{correct\text{-}outcome}}

\newcommand{\Exp}{\mathsf{Exp}}

\newcommand{\VerSymb}{\mathsf{Verify}}

\newcommand{\blue}[1]{\textcolor{blue}{#1}} 
\newcommand{\red}[1]{\textcolor{red}{#1}}

\newcommand{\AdvInd}{\mathsf{Ind}} 

\newcommand{\auditoutcome}{s} 

\ifdefined\acmformat
  \usepackage{colortbl}
\else
  \usepackage{palatino}
  \usepackage[margin=1in]{geometry}
  \usepackage[dvipsnames,table]{xcolor}
  \usepackage{amsfonts}
  \usepackage{amsmath}
  \usepackage{amssymb}
    \usepackage{amsthm}
  \usepackage{authblk}
  \usepackage{cite}
  \usepackage[hyphens]{url}
  \usepackage{hyperref}
  \usepackage[utf8]{inputenc}
\fi
\definecolor{light-gray}{gray}{.95}

\usepackage[normalem]{ulem}
\usepackage[vlined,commentsnumbered,linesnumbered]{algorithm2e}
\usepackage{float}\floatstyle{ruled}\restylefloat{figure}
\usepackage{multicol}
\usepackage{multirow}
\usepackage{pifont}\newcommand{\cmark}{\ding{51}}\newcommand{\xmark}{\ding{55}}
\usepackage[framemethod=default]{mdframed}
\mdfdefinestyle{guidebox}{backgroundcolor=white,innertopmargin=0.5\baselineskip,innerbottommargin=0.5\baselineskip,skipabove=\baselineskip,skipbelow=\baselineskip}
\mdfdefinestyle{infobox}{backgroundcolor=light-gray,linecolor=black!50,innertopmargin=0.5\baselineskip,innerbottommargin=0.5\baselineskip,skipabove=\baselineskip,skipbelow=\baselineskip}
\usepackage{verbatim}

\usepackage{tikz}
\usetikzlibrary{arrows,positioning}
\usetikzlibrary{arrows.meta}
\usetikzlibrary{matrix}
\usetikzlibrary{shapes,calc}

\SetKwRepeat{Do}{do}{while}

\newenvironment{inlineexperiment}[1]
 {\medskip\noindent{#1}$\;=$\\
  \begin{algorithm}[H]}
 {\end{algorithm}}

\newcommand{\llIf}[2]{{\let\par\relax\lIf{#1}{#2}}}
\newcommand{\lleIf}[3]{{\let\par\relax\lIf{#1}{#2} \lElse{#3}}}
\newcommand{\llFor}[2]{{\let\par\relax\lFor{#1}{#2}}}

\ifdefined\acmformat
  \newtheorem{guideline}{Design guideline}
  \newtheorem{remark}{Remark}
\else
  \newtheorem{theorem}{Theorem}
  \newtheorem{corollary}[theorem]{Corollary}
  \newtheorem{lemma}[theorem]{Lemma}
  \newtheorem{proposition}[theorem]{Proposition}

  \newtheorem{remark}[theorem]{Remark}
  \newtheorem{example}{Example}
  \newtheorem{definition}{Definition}
  \newtheorem{guideline}{Design guideline}
\fi

\ifdefined\acmformat
\title{Secrecy and Verifiability: An Introduction to Electronic Voting}

\author{Paul Keeler}
\affiliation{%
  \institution{University of Melbourne}
  \department{School of Mathematics and Statistics}
  \city{Melbourne}
  \country{Australia}
}
\email{paul.keeler@unimelb.edu.au}

\author{Ben Smyth}
\affiliation{%
  \institution{Independent Researcher}
  \country{United Kingdom}
}
\email{research@bensmyth.com}

\ccsdesc[500]{Security and privacy~Public key encryption}
\ccsdesc[300]{Security and privacy~Digital signatures}
\ccsdesc[300]{Security and privacy~Cryptanalysis and other attacks}

\keywords{electronic voting, ballot secrecy, verifiability, provable security, cryptographic protocols}
\fi

\ifdefined\palatinoformat
\title{Secrecy and Verifiability: An Introduction to Electronic Voting}

\author[1]{Paul Keeler}
\author[2]{Ben Smyth}
\affil[1]{School of Mathematics and Statistics, University of Melbourne, Melbourne}
\affil[2]{Independent Researcher}

\date{\today{}}
\fi

\begin{document}

\ifdefined\acmformat
  \begin{abstract}
Democracies are built upon secure and reliable voting systems. Electronic voting systems seek to replace ballot papers and boxes with computer hardware and software. Proposed electronic election schemes have been subjected to scrutiny, with researchers spotting inherent faults and weaknesses. Inspired by physical voting systems, we argue that any electronic voting system needs two essential properties: ballot secrecy and verifiability. These properties seemingly work against each other. An election scheme that is a complete black box offers ballot secrecy, but verification of the outcome is impossible. This challenge can be tackled using standard tools from modern cryptography, reaching a balance that delivers both properties.

This tutorial makes these ideas accessible to readers outside electronic voting. We introduce fundamental concepts such as asymmetric and homomorphic encryption, which we use to describe a general electronic election scheme while keeping mathematical formalism minimal. We outline game-based cryptography, a standard approach in modern cryptography, and introduce notation for formulating elections as games. We then give precise definitions of ballot secrecy and verifiability in the framework of game-based cryptography. A principal aim is introducing modern research approaches to electronic voting.
\end{abstract}
  \maketitle
\else
  \maketitle
  
\fi

\section{Introduction to voting systems}\label{sec:intro}
Any democratic state needs to be able to offer an election, but what is an election and what properties should it have to help deliver democracy? Several properties emerge from democratic principles. Two of these appear fundamentally at odds. Voters need assurance that their votes are counted accurately, yet revealing how any individual voted would undermine the democratic process itself. Understanding these properties helps us see what is needed in building an \emph{electronic voting system}, which aims to replace the traditional voting setup of ballot papers and boxes with some combination of computer hardware and software.

An election is considered, essentially, a decision-making procedure for choosing representatives~\cite{Lijphart84,Saalfeld95,Gumbel05:StealThisVote,Alvarez10:ElectronicElections}.
Intergovernmental organizations such as the United Nations recommend voting systems that ensure that voting be done by voters so that each voter has equal influence over the result~\cite{UN:HumanRights,OSCE:HumanRights,OAS:HumanRights}. 
In an election, voters must be able to cast votes with what we call \emph{free-choice}, meaning all voters can vote for any election candidates they desire without fear of repercussions. Another crucial property is that nobody can alter cast votes or create false votes, without leaving evidence of such \emph{undue influence}. Electronic voting systems are already emerging, but unfortunately, these voting systems are routinely broken in ways to reveal that they violate free-choice~\cite{Wallach04:Diebold,Rop07:NetherlandsVoting,DebraBowenCalifornia07,Halderman10:IndiaVoting,Halderman12:DCVoting,Halderman14:EstoniaVoting}
or permit undue influence~\cite{Wallach04:Diebold,ElectoralCommision07,DebraBowenCalifornia07,GermanyCourt09,JonesSimons12:VotingBook}.

\subsection{Ballot secrecy}
To give voters in democratic elections free-choice, a voting system needs to offer \emph{ballot secrecy}, preventing the voters suffering from their cast votes. The secret ballot has a long history, being used in ancient Greece and Rome. The modern system is a relatively recent concept, first implemented in the Australian colonies in 1856 during British colonial rule, and subsequently adopted by Britain, the United States, and other democracies. The seemingly simple yet revolutionary change was to make ballots identical. Previously each political party could produce its own ballot paper, whose style and size clearly revealed anybody's voting intentions. A voting system with this lasting innovation is often called the \emph{Australian system}, which demands that votes be marked on uniform ballots in isolated polling booths and then deposited in ballot boxes~\cite{Brent06:AustralianBallot}. 

By granting ballot secrecy, the Australian system can assure that only voters vote and that they do so with free-choice. In this paper, we employ a definition of ballot secrecy introduced by  Smyth~\cite{Smyth15:BallotSecrecyFull}. 
\begin{itemize}
\item Ballot secrecy. A voter's vote is not revealed to anyone. 
\end{itemize}
To be clear, this definition of ballot secrecy means that not even the organizers of the election can see the vote of any voter. We will return to this definition later in Section~\ref{sec:secrecy:def}. 

\subsection{Verifiability}
A voting system also needs \emph{verifiability}, which is a way to prove that no undue influence has occurred. To provide this traditionally, election monitors or observers can check that only one ballot paper is distributed to each eligible voter who then only deposits one paper, without papers coming from other sources. The observers can then discard incomplete or tarnished ballot papers, with the remaining papers accurately corresponding to the election outcome. But any assurance of verifiability is limited by the monitoring ability of the observers~\cite{Bjornlund04:ElectionMonitoring,Kelley12:ElectionMonitoring,Norris15:ElectionMonitoring}.

\newcommand{\informalDefinitionUV}[1][a]{#1{}nyone can check whether an outcome corresponds to votes expressed in collected ballots}
\newcommand{\InformalDefinitionUV}{\informalDefinitionUV[A]}

The property of verifiability can be considered in a more specific manner.  In the setting of electronic voting, Smyth, Frink and  Clarkson~\cite{Smyth15:ElectionVerifiability} introduced  a definition for \emph{universal verifiability}, which is designed for anyone to check that no undue influence has occurred in the election and the election outcome is the true one. 
\begin{itemize}
\item Universal verifiability. \InformalDefinitionUV.
\end{itemize}

Universal verifiability ensures that the election outcome reflects the collected votes. But it is not enough for a voter to merely cast a vote and to assume it is properly collected, because
somebody may discard or modify the vote. We argue that voters must be able to
uniquely identify \emph{their} ballots.  Consequently, Smyth, Frink and  Clarkson~\cite{Smyth15:ElectionVerifiability} also introduced the notion of \emph{individual verifiability}.

\begin{itemize}
\item Individual verifiability. A voter can check whether their ballot is collected.
\end{itemize}

\subsection{Voting: a fundamental balance}
The two essential properties of ballot secrecy and verifiability seem to compete against each other. It is easy to create a trivial voting system that adheres to one of these properties. For example, if all voters attach their names to their ballot papers (being written on a piece paper or encoded electronically), we can immediately verify that all votes were cast by the proper parties, but then ballot secrecy is gone. 

On the surface these two conflicting properties may seem to destroy the chances of any electronic voting scheme possessing both of them in fullness: too much of one renders the other one not possible. We later see how this problem has repeatedly appeared in the research on electronic voting, due to the definitions of properties being too strong. But we argue that it is possible to arrive at a subtle balance, rather than a trade-off, between these two properties, resulting in voting systems that offer both secrecy and verifiability.

\subsection{Terminology}
We use \emph{election scheme} to refer to the cryptographic protocol, and \emph{voting system} to refer to implementations. Of course a secure election scheme does not guarantee a secure voting system.

The terms \emph{secrecy} and \emph{privacy} occasionally appear as synonyms in the literature; we prefer ballot secrecy to avoid confusion with other privacy notions. We discuss related concepts such as \emph{coercion resistance} and \emph{receipt-freeness} in Section~\ref{sec:secrecy:outlook}.

\subsection*{Structure of article}
The \manuscript{} is organized as follows:
\begin{itemize}
\item Section~\ref{sec:election-preview} previews an election scheme, giving a high-level picture before the technical details.
\item Section~\ref{sec:basics} covers cryptography basics, including negligible functions and encryption.
\item Section~\ref{sec:game-crypto} introduces game-based cryptography.
\item Section~\ref{sec:notation-primer} summarizes the notation used throughout.
\item Section~\ref{sec:election} gives an overview of election schemes.
\item Section~\ref{sec:syntax} defines the formal syntax for election schemes.
\item Section~\ref{sec:voting} surveys attacks on electronic voting and their countermeasures.
\item Section~\ref{sec:verifiability} formalizes individual and universal verifiability.
\item Section~\ref{sec:secrecy} formalizes ballot secrecy and proves sufficient conditions.
\item Section~\ref{sec:casestudies} presents case studies of Helios, Helios Mixnet, and Belenios.
\item Section~\ref{sec:summary} summarizes lessons learnt and concludes.
\end{itemize}
Readers familiar with cryptographic foundations may wish to skim or skip Sections~\ref{sec:basics}--\ref{sec:notation-primer} and proceed directly to Section~\ref{sec:election}.
\clearpage

\section{An election scheme at a glance}\label{sec:election-preview}

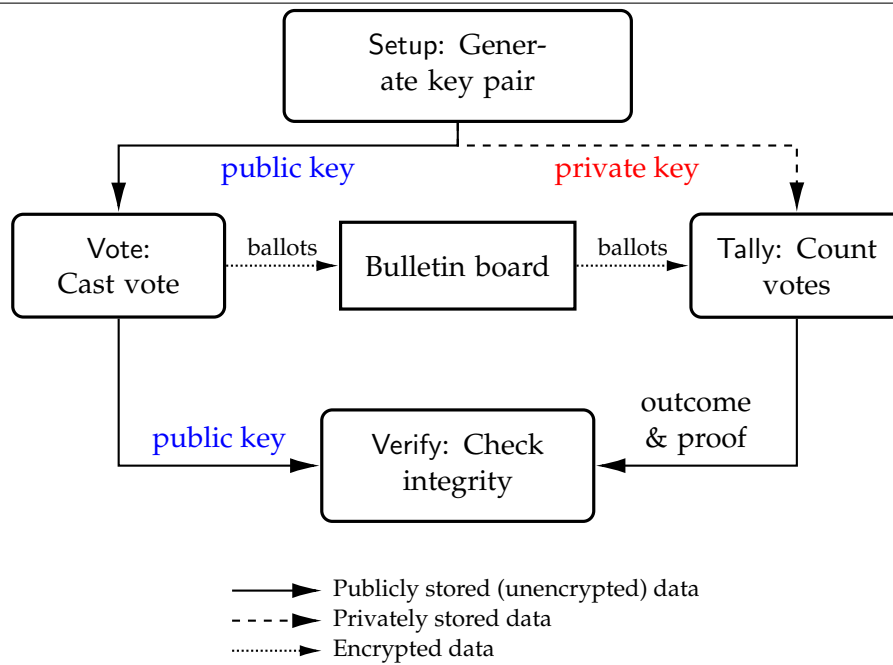
\begin{figure}
\centering
\begin{tikzpicture}[thick]  
\tikzset{
    primitive/.style={rectangle, rounded corners, draw=black, very thick, inner sep=0.8em, minimum size=3em, text centered},
    votestorage/.style={rectangle, draw=black, very thick, inner sep=0.8em, minimum size=3em, text centered},       
    arrowbig/.style={-{Latex[length=4mm,width=2mm]}, thick},
    arrowsmall/.style={-{Latex[length=3mm,width=1.5mm]}, thick},
    mylabel/.style={text width=7em, text centered} 
}
\node[primitive, text width=4cm] (setup) {$\SetupSymb$: Generate key pair};  
\node[votestorage, below=1.3cm of setup, text width=2.5cm] (bboard) {Bulletin board}; 
\node[primitive, left=1.5cm of bboard, text width=2.2cm] (vote) {$\VoteSymb$: Cast vote};  
\node[primitive, right=1.5cm of bboard, text width=2.2cm] (tally) {$\TallySymb$: Count votes};
\node[primitive, below=1.3cm of bboard, text width=3cm] (verify) {$\VerifySymb$: Check integrity};

\draw[arrowbig] (setup.south) -- ++(0,-0.3) -| node[pos=0.25, below] {\blue{public key}} (vote.north);
\draw[arrowbig, dashed] (setup.south) -- ++(0,-0.3) -| node[pos=0.25, below] {\red{private key}} (tally.north);

\draw[arrowsmall, densely dotted] (vote.east) -- node[above, font=\footnotesize] {ballots} (bboard.west); 
\draw[arrowsmall, densely dotted] (bboard.east) -- node[above, font=\footnotesize] {ballots} (tally.west); 

\draw[arrowbig] (vote.south) |- (verify.west) node[pos=0.75, above] {\blue{public key}};

\draw[arrowbig] (tally.south) |- (verify.east) node[pos=0.75, above, text width=2.5cm, align=center] {outcome\\ \& proof};

\node[below=0.9cm of verify, anchor=north] (legend) {};
\draw[arrowbig] ([xshift=-3cm]legend.north) -- ++(1.2,0) node[right, font=\footnotesize] {Publicly stored (unencrypted) data};
\draw[arrowbig, dashed] ([xshift=-3cm, yshift=-0.4cm]legend.north) -- ++(1.2,0) node[right, font=\footnotesize] {Privately stored data};
\draw[arrowsmall, densely dotted] ([xshift=-3cm, yshift=-0.8cm]legend.north) -- ++(1.2,0) node[right, font=\footnotesize] {Encrypted data};

\end{tikzpicture} 
\medskip
\caption{Overview of an election scheme with four algorithms. $\SetupSymb$ generates a key pair; the public key is used by $\VoteSymb$ to create encrypted ballots, while the private key is used by $\TallySymb$ to decrypt. Ballots flow through the bulletin board (dotted arrows indicate encrypted data). $\VerifySymb$ checks the election outcome using public information. A detailed version appears in Figure~\ref{figure-voting}.}
\label{figure-simplevoting}
\end{figure}

Before introducing the cryptographic tools that make electronic voting possible, it helps to see what an election scheme actually looks like. Many election schemes have been proposed; as a running example, we use one due to Smyth, Frink and Clarkson~\cite{Smyth15:ElectionVerifiability}, which we return to throughout the \manuscript{}. Figure~\ref{figure-simplevoting} shows its high-level structure, built from four algorithms.

The algorithm $\SetupSymb$ generates a pair of keys: a \emph{public key}, which is shared openly, and a \emph{private key}, which must be kept secret. Votes are encrypted using the public key, meaning anyone can create an encrypted vote, but only the holder of the private key can decrypt it. The algorithm $\VoteSymb$ takes a voter's choice and the public key, and produces an encrypted ballot that reveals nothing about the vote inside. These encrypted ballots are posted to a \emph{bulletin board}, which is a public, append-only log that anyone can read but nobody can secretly alter. Once voting ends, the algorithm $\TallySymb$ uses the private key to compute the election outcome and produce a proof that the count is correct. Finally, $\VerifySymb$ allows anyone to check this proof using only public information: the public key, the ballots on the bulletin board, and the claimed outcome. No private key is needed to verify.

The rest of this \manuscript{} makes all of this precise. We begin with the cryptographic building blocks, such as encryption, key generation, and related concepts, before returning to election schemes with formal definitions and security analysis.

\section{Cryptography basics}\label{sec:basics}
Delivering a voting system that promises both ballot secrecy and verifiability is a daunting task, but fortunately, the tools from modern cryptography offer a way.  We will now cover, in a rather informal fashion, some basic concepts from cryptography under the assumption that the reader has some fundamental knowledge in mathematics, logic, and algorithms. We will link concepts in cryptography to election schemes. Readers familiar with the fundamentals of cryptography can skip this section. 

Historically cryptography focused on ways of sending messages between two parties without anyone else being able to read the messages. In the cryptography literature, the two parties are often called Alice and Bob. There are other named characters appearing in cryptography narratives, with a prominent figure being Eve the eavesdropper, who wants to listen into communications between Alice and Bob. If Alice wants to send Bob a secret message, she can, for example, simply mangle or encrypt it in some way, making it seemingly unrecognizable, and then tells Bob how she did that. Then Bob can reverse Alice's encryption step by decrypting the message. 

\subsection{Symmetric encryption with private keys}
Encryption transforms a message, called the \emph{plaintext}, into an unreadable form called the \emph{ciphertext}. To decrypt, one needs a secret \emph{key}. In \emph{symmetric} (or \emph{private-key}) cryptography, the same key is used for both encryption and decryption. If we write $\EncSymb(k,m)$ for encrypting message $m$ with key $k$, then decryption satisfies $\DecSymb(k,\EncSymb(k,m))=m$. The limitation of symmetric cryptography is that Alice must somehow share the secret key with Bob while keeping it hidden from Eve, which is a challenge that asymmetric cryptography addresses.

\subsection{Asymmetric encryption with private and public keys} \label{sec:asym} 
To avoid Alice having to give Bob the private key, modern day cryptography uses a \emph{key pair} consisting of one key $\pk$ for only encrypting plaintexts and another key $\sk$ for only decrypting ciphertexts. If Bob wants to receive messages from Alice, he generates a pair of keys $(\pk,\sk)$, respectively called the \emph{public key} and the \emph{private key}. He then puts the public key $\pk$ somewhere it can be accessed by Alice (and anyone else), while keeping the private key $\sk$ secret. Alice uses the public key $\pk$ to encrypt her message and sends it to Bob who then uses his private key $\sk$, which only he knows, to decrypt the encrypted message. Alice and everyone else knows the public key $\pk$, but they never know the corresponding private key $\sk$ kept secret by Bob. It is computationally hard (or practically impossible in implementation) to infer $\sk$ from $\pk$, so only the holder of the private key $\sk$ can decrypt Alice's message.

One key for encrypting, another for decrypting. This revolutionary concept is called \emph{asymmetric} or \emph{public-key} cryptography. Mathematically, we can write asymmetric encryption as
$$c=\EncSymb(\pk,m)\,$$  
where $c$ is again a ciphertext and $\EncSymb$ is an algorithm representing asymmetric encryption. Decrypting the ciphertext $c$ with $\sk$ can then be written as 
$$\DecSymb(\sk,c)=\DecSymb(\sk,\EncSymb(\pk,m))=m\, ,$$
where $\DecSymb$ is the decryption algorithm that uses the private key $\sk$. But after using the public key for encryption, it cannot be used for decryption, implying $\DecSymb(\pk,c)\neq m\, .$ 

More formally, we can interpret the encryption algorithm $\EncSymb(\pk,m)$ as a function that maps  from the space of plaintexts $\votespace$ to the space of ciphertexts $\mathcal{C}$, while the encryption algorithm $\DecSymb(\pk,c)$ maps from  $\mathcal{C}$ to $\votespace$. 
A more precise description of asymmetric encryption is located in Section~\ref{app:sec:asym} of the appendix. 

\subsection{Examples of asymmetric cryptography}
It would be difficult to overstate the importance of asymmetric cryptography in modern computing. Since the 1970s, researchers have proposed various public-key schemes. The most famous is RSA, named for its inventors Rivest, Shamir, and Adleman~\cite{rivest1978method}. RSA exploits a fundamental asymmetry: it is easy to multiply two large prime numbers together, but incredibly hard to factor the product back into its prime components. This asymmetry, meaning easy in one direction, hard in reverse, is what makes public-key cryptography possible.

Other important schemes include Diffie--Hellman and ElGamal, based on the difficulty of the discrete logarithm problem. We will encounter ElGamal in the Helios voting system. Alternative schemes based on lattices, codes, and other structures are also an active area, particularly for post-quantum security (Section~\ref{sec:secrecy:outlook}).

\begin{example}[ElGamal encryption]\label{ex:elgamal}
We illustrate ElGamal with small numbers. Let $p = 23$ be
a prime and $g = 5$ a generator of the multiplicative group modulo~$p$.

\paragraph{Key generation.}
The key holder chooses a random private key $\sk = 6$ and computes the
public key
$$\pk = g^{\sk} \bmod p = 5^{6} \bmod 23 = 8.$$
The public parameters $(p, g, \pk) = (23, 5, 8)$ are published; the
private key $\sk = 6$ is kept secret.

\paragraph{Encryption.}
Suppose a voter wishes to encrypt a vote for candidate~$2$. In the
variant of ElGamal used for voting (sometimes called
\emph{exponential ElGamal}), the voter encrypts $g^m$ rather than $m$
directly.  With vote $m = 2$ and random value $r = 3$, the ciphertext
is the pair $(c_1, c_2)$:
\begin{align*}
c_1 &= g^{r} \bmod p = 5^{3} \bmod 23 = 10, \\
c_2 &= g^{m} \cdot \pk^{r} \bmod p
     = 5^{2} \cdot 8^{3} \bmod 23
     = 2 \cdot 6 \bmod 23
     = 12.
\end{align*}
The ciphertext $(10, 12)$ is posted to the bulletin board. A different
random~$r$ would produce a different ciphertext for the same
vote. This is why ElGamal encryption is \emph{probabilistic}, and why
two ballots for the same candidate look different.

\paragraph{Decryption.}
The key holder recovers $g^m$ using the private key:
$$g^{m} = c_2 \cdot (c_1^{\,\sk})^{-1} \bmod p
        = 12 \cdot (10^{6})^{-1} \bmod 23
        = 12 \cdot 6^{-1} \bmod 23
        = 12 \cdot 4 \bmod 23
        = 2.$$
Because the number of candidates is small, the key holder recovers~$m$
by checking a short table: $g^{1} \bmod 23 = 5$,
$g^{2} \bmod 23 = 2$, $g^{3} \bmod 23 = 10, \ldots$\;
The decrypted value matches $g^{2}$, so $m = 2$, confirming the voter voted for
candidate~$2$.

\paragraph{Homomorphic property.}
Suppose a second voter encrypts a vote for candidate~$1$ (so
$g^{m'} = g^{1} = 5$) with randomness $r' = 5$, obtaining ciphertext
$(c_1', c_2') = (20, 11)$.  Multiplying ciphertexts component-wise:
$$(c_1 \cdot c_1' \bmod 23,\; c_2 \cdot c_2' \bmod 23)
  = (10 \cdot 20 \bmod 23,\; 12 \cdot 11 \bmod 23)
  = (16, 17).$$
Decrypting this combined ciphertext yields $g^{m + m'} = g^{3} = 10$,
encoding the sum $m + m' = 3$, the total of both votes, without ever
decrypting either ballot individually.  This is how homomorphic
tallying works in Helios.
\end{example}

\subsubsection{Decisional Diffie--Hellman (DDH) assumption}
The security of both Diffie--Hellman key exchange and ElGamal encryption rests on the \emph{Decisional Diffie--Hellman (DDH) assumption}.  Informally, DDH states: given a group generator $g$ and the values $g^a$ and $g^b$ for random secret exponents $a$ and $b$, no efficient algorithm can distinguish $g^{ab}$ from a random group element.  That is, even knowing $g^a$ and $g^b$, the shared secret $g^{ab}$ looks random to any computationally bounded observer.  For ElGamal, DDH ensures that the ciphertext component $g^m \cdot \pk^r = g^m \cdot g^{\sk \cdot r}$ is indistinguishable from a random group element, hiding the vote $m$. 

\subsubsection{Perceived quantum threat}\label{ssub:perceived_quantum_threat}
A common misconception is that quantum computers would 
break all encryption. In fact, the famed quantum algorithm of 
Shor~\cite{shor1999polynomial} provides an \emph{exponential} 
speedup only for schemes whose security relies on 
number-theoretic problems, or more abstractly, problems with 
group-theoretic 
structure~\cite[Chapter~2]{lindell2017tutorials}, such as 
integer factorisation (RSA) and the discrete logarithm 
(Diffie--Hellman, ElGamal). No comparable exponential quantum 
speedup is known for lattice-based, code-based, or hash-based 
schemes. Symmetric encryption such as the Advanced Encryption Standard (AES), which secures most 
stored data in practice, is also unaffected. The best known 
quantum attack uses Grover's algorithm, which offers only a quadratic 
speedup, easily countered by doubling key lengths. The election 
schemes in this tutorial rely on ElGamal and are therefore 
vulnerable; we discuss post-quantum alternatives in 
Section~\ref{sec:secrecy:outlook}.

\subsection{Key generation}
The key generation algorithm $\GenSymb$ produces keys for an encryption scheme.

For asymmetric encryption, the key generation algorithm produces a \emph{key pair} $(\pk, \sk)$ consisting of a public key and a private key. The algorithm takes as input a \emph{security parameter} $\kk$, typically written in unary notation as $1^\kk$, which determines the size or strength of the generated keys; we write $(\pk, \sk) \leftarrow \GenSymb(\kk)$ to denote this. Larger values of $\kk$ produce longer keys that are harder to break but slower to use.

Key generation must incorporate randomness. Otherwise, if keys were predictable, an adversary could compute the same key and break the scheme. We write $(\pk, \sk) \leftarrow_R \GenSymb(\kk)$ to emphasize that generation involves random choices. For election systems, key generation is particularly sensitive: a compromised private key $\sk$ would allow an adversary to decrypt all ballots and violate ballot secrecy.

\subsection{Homomorphic encryption}\label{sec:homomorph}
A traditional problem with encrypted data, such as messages or votes, is that we need to decrypt it in order to perform operations on it, which then leaves it vulnerable to adversaries like Eve. But this issue is circumvented by the concept of \emph{homomorphic encryption}, which allows certain mathematical operations, such as addition, to be performed on the encrypted data, without ever decrypting it or even using the private key.

More formally, for all possible messages $m$ and $m'$ and public keys $\pk$ allowed in the encryption scheme, homomorphic under a binary operation $\oplus$, such as addition, means that 
$$\EncSymb(pk,m)\odot \EncSymb(pk,m')=\EncSymb(pk,m\oplus m')\,,$$  where $\oplus$ and $\odot$ are two binary operations, the first being operation for the plaintexts $m$ and $m'$, while the second is the corresponding operation for the ciphertexts, which is not necessarily the same type of operation. 
 Decrypting the ciphertext with the private key $\sk$ yields
$$\DecSymb(sk,\EncSymb(pk,m)\odot \EncSymb(pk,m'))= m\oplus m'. $$ 

Several homomorphic encryption schemes exist. Paillier~\cite{paillier1999public} introduced a scheme homomorphic under addition (where ciphertext multiplication corresponds to plaintext addition). ElGamal is homomorphic under multiplication but can be adapted for addition. These are \emph{partially} homomorphic, because they support only one operation. Gentry~\cite{gentry2009fully} showed that \emph{fully} homomorphic encryption, which allows arbitrary computations on encrypted data, is possible, though this remains too computationally expensive for practical voting systems. The schemes used in electronic voting, including those we study, rely on partially homomorphic encryption.

For a voting scheme, homomorphic encryption is extremely useful, as it is possible to count or tally the votes of the voters without ever decrypting the individual votes or even accessing the private key. For example, in a two-candidate election, voters encrypt either $0$ or $1$. These encrypted votes are homomorphically combined to derive a ciphertext encoding the total votes for candidate $1$, which is then decrypted to reveal the outcome, without ever revealing individual votes.

As discussed by Paillier~\cite{paillier1999public},  Park and Rivest~\cite{park2017towards} and others, homomorphic encryption is important for preserving secrecy in the voting system. 

\subsection{Malleable and non-malleable encryption}
An encryption scheme is \emph{malleable} if, given a ciphertext, one can produce a valid ciphertext for a related plaintext, possibly without learning the original plaintext. A \emph{non-malleable} scheme prevents this: given a ciphertext, it is infeasible to create another valid ciphertext for a related message.

Homomorphic encryption is inherently malleable (combining ciphertexts produces a valid ciphertext for the combined plaintexts), while non-malleable encryption cannot be homomorphic. For voting, this creates a tension: homomorphic encryption enables efficient tallying, but malleability can enable attacks. We return to this connection when discussing ballot privacy. 

\subsection{Threshold encryption}\label{sec:threshold}
In \emph{threshold encryption}, decryption requires collaboration among multiple parties. With $n$ total \emph{shares} (partial decryption keys), at least $t$ shares must cooperate to decrypt, since fewer than $t$ shares reveal nothing. A \emph{perfect threshold} scheme ensures that $t-1$ shares provide no information about the missing shares.

For voting, threshold encryption is essential because it distributes trust: no single authority can decrypt ballots alone. Consider an election where the private key is split among five trustees with threshold $t=3$. Even if two trustees are corrupt or compromised, they cannot decrypt any ballot without cooperation from at least one honest trustee. This prevents a single point of failure that could compromise ballot secrecy.

Threshold encryption also enables \emph{distributed key generation}, where trustees jointly generate the key pair without any single party ever knowing the complete private key. The public key is published for voters to encrypt their ballots, but decryption requires the trustees to collaborate, typically after the election closes. Katz and Lindell~\cite[Section 13.3.3]{katz2014introduction} cover threshold encryption in relation to a simple election scheme.

\subsection{Algorithms and primitives}
In cryptography, algorithms such as $\EncSymb$, $\DecSymb$, and $\GenSymb$ are called \emph{primitives} when treated as building blocks of larger schemes. We use the terms algorithm and primitive interchangeably.

There is some specific language used for describing the running of primitives or algorithms. To start an algorithm is to \emph{invoke} it, and the running of an algorithm is an \emph{invocation}. To provide input parameters to the algorithm is to \emph{instantiate} it, so an \emph{instantiation} of an algorithm is its set of inputs for one invocation. 

\subsection{Complexity approach to cryptography}
Modern cryptography is based on \emph{computational} security rather than information-theoretic security: adversaries could in principle break the system, but doing so would require non-polynomial time. Schemes are considered secure if the probability of an efficient adversary breaking them is negligible. In practical terms, security means the scheme will not be broken for hundreds or thousands of years with foreseeable technology.

\subsection{Proofs}
A \emph{proof} is an object or interaction that convinces the reader of some mathematical claim. 
When proofs are interactive, we introduce two more characters from cryptography: Peggy the prover and Victor the verifier. Peggy needs to convince Victor in a series of interactions. At the end Victor outputs a response, either accepting or rejecting the claim by Peggy, based on whether Peggy's actions have managed to convince him of the validity of a certain mathematical statement. 

Two fundamental properties characterize the quality of proof systems. \emph{Completeness} requires that if a statement is true, an honest prover can always convince the verifier to accept. \emph{Soundness} requires that if a statement is false, no cheating prover can convince the verifier to accept (except with negligible probability). In other words, completeness ensures that valid proofs succeed, while soundness ensures that invalid proofs fail. These concepts extend beyond proof systems: in the context of election verification, soundness means that the verification algorithm rejects incorrect outcomes, while completeness means it accepts correct ones. 

A \emph{zero-knowledge proof} allows Peggy to convince Victor of a statement without revealing \emph{why} it is true. A classic illustration: suppose you have a red ball and a green ball, identical except for colour, and a colour-blind friend. By repeatedly asking whether the friend switched the balls behind their back, you can convince them the colours differ; yet the friend never learns which ball is which. For voting, zero-knowledge proofs are essential: the system must convince everyone that votes were tallied correctly without revealing any individual vote.

\subsection{Fiat-Shamir transformation}\label{sec:fiat-shamir}
The \emph{Fiat-Shamir transformation} converts interactive proofs into non-interactive ones.

The key insight of the Fiat-Shamir transformation is to replace the verifier's random challenge with the output of a cryptographic hash function. In an interactive protocol, the prover sends a commitment to the verifier, the verifier responds with a random challenge, and the prover computes a response based on both the commitment and challenge. The verifier then checks the response. In the non-interactive version, instead of receiving a random challenge from the verifier, the prover computes the challenge themselves by hashing the commitment (and typically other public data). Since hash functions behave unpredictably, the prover cannot choose a commitment that will produce a favorable challenge.

More formally, suppose an interactive proof has the prover send commitment $a$, receive challenge $c$, and respond with $z$. The Fiat-Shamir transformation produces a non-interactive proof $(a, c, z)$ where $c = H(a, m)$ for a hash function $H$ and message $m$ being proved. The verifier checks both that $c = H(a, m)$ and that $(a, c, z)$ is a valid transcript.

The security of the Fiat-Shamir transformation is typically analyzed in the \emph{random oracle model}, which treats the hash function as a truly random function. While no real hash function is truly random, this idealization captures the intuition that a good hash function is unpredictable. Proofs in the random oracle model provide strong evidence of security, though they do not guarantee security when the random oracle is instantiated with a concrete hash function.

A subtle but important variant is the \emph{weak Fiat-Shamir transformation}, which hashes only the commitment $a$ rather than including the message $m$. This seemingly minor difference can introduce serious vulnerabilities in some contexts, as we will see when analyzing the Helios voting system. The weak variant allows an adversary more freedom in constructing proofs, potentially enabling attacks that the standard transformation would prevent.

For electronic voting, the Fiat-Shamir transformation is essential because it allows voters to construct non-interactive proofs that their ballots are well-formed without requiring real-time interaction with a verifier. These proofs can then be posted to a public bulletin board and verified by anyone at any time.

\subsection{Simulator}
In cryptography, an important concept related to proofs is that of a \emph{simulator}. This is essentially a fast (probabilistic polynomial-time) algorithm that can produce outputs when called upon  by the verifier without interacting with the real prover, and the outputs are indistinguishable from those resulting from interactions with the real prover. In other words, given Victor the verifier and Peggy the prover, the simulator is able to \emph{simulate} Victor's interaction with Peggy.

\subsection{Some useful functions}
We now introduce functions that appear throughout cryptographic definitions.

\subsubsection{Hash functions}
A \emph{hash function} maps data of arbitrary size to a fixed-size output. The main requirement is \emph{collision resistance}, meaning it should be computationally infeasible to find two distinct inputs that produce the same output. In the voting context, collisions could allow manipulation of election outcomes by creating duplicate ballots.

\subsubsection{Negligible functions}
A \emph{negligible function} $\negl(\kk)$ is one that decreases faster than the inverse of any polynomial as the security parameter $\kk$ grows. For example, $2^{-\kk}$ and $\kk^{-\log \kk}$ are negligible. Events occurring with negligible probability are considered infeasible in practice, because the time needed for them to occur would be more than the lifetime of our universe. A cryptographic scheme is secure if the probability of any efficient adversary breaking it is negligible; see Katz and Lindell~\cite[Section 3.1.2]{katz2014introduction} for numerical examples.

\subsection{Importance of randomness}
Why does randomness appear everywhere in cryptography? Two principal reasons. First, randomness produces unpredictable data that is secret to all parties. This is essential for key generation: if a key generation algorithm produced predictable keys, an adversary could compute the same key and break the scheme.

Second, there is always some probability that an adversary could guess a key by chance. A good cryptographic system makes this probability negligible, meaning vanishingly small as the security parameter increases. The negligible functions defined above make this notion precise: as the key length grows, the probability of a successful guess decreases faster than any polynomial, rendering brute-force attacks infeasible.

\subsection{Probabilistic algorithms}
A \emph{probabilistic algorithm} (also called a \emph{randomized algorithm}) takes a deterministic input, performs operations using random choices (such as coin flips), and produces an output influenced by that randomness. Two executions on the same input will likely produce different outputs. For voting, this is crucial: if encryption were deterministic, an adversary could encrypt each candidate's name and compare the results to voters' ballots, breaking secrecy.

\subsection{Random variables}
We write the output of a probabilistic algorithm as $A(x_1,\dots,x_n; r)$, where $x_1,\dots,x_n$ are deterministic inputs and $r$ represents the random choices. The randomness of $r$ typically stems from  a sequence of independent random bits, called \emph{coins} in cryptography (or \emph{Bernoulli random variables} in probability theory). We write $x \leftarrow_R S$ to denote sampling an element $x$ uniformly at random from set $S$. A \emph{nonce} is a number used only once, often generated randomly for authentication.

\subsection{Polynomial-time algorithms}
An algorithm is \emph{polynomial-time} if the number of steps it takes is bounded by a polynomial in the input size. A \emph{probabilistic polynomial-time (PPT)} algorithm runs in polynomial time regardless of its random choices. Security definitions in cryptography restrict adversaries to probabilistic polynomial-time algorithms, which captures efficient attackers, while excluding those requiring exponential resources.


\subsection{Cryptographic tools for electronic voting}\label{sec:crypto-voting}
The cryptographic primitives described above, such as encryption, proofs, and hash functions, are general-purpose tools. Electronic voting systems combine these primitives in specific ways to achieve ballot secrecy and verifiability. This section introduces cryptographic constructions that are particularly important for voting applications.

\subsubsection{Bulletin boards}\label{sec:bulletin-board}
The \emph{bulletin board} is a central component of electronic voting systems. It serves as a public ledger where encrypted ballots are posted and stored until tallying. The bulletin board must satisfy several properties:

\begin{itemize}
\item \emph{Public readability}: Anyone can read the contents of the bulletin board. This enables universal verifiability, as any observer can check that the announced outcome corresponds to the posted ballots.

\item \emph{Append-only}: Once a ballot is posted, it cannot be removed or modified. New ballots can only be added. This ensures that voters can verify their ballot remains on the board throughout the election.

\item \emph{Consistency}: All observers see the same bulletin board contents. Different parties cannot be shown different versions of the board.

\item \emph{Availability}: The bulletin board remains accessible throughout the election and verification period.
\end{itemize}

In our formal model, we represent the bulletin board as a set $\mathfrak{bb}$ of ballots. The set representation abstracts away implementation details while capturing the essential property that ballots are collected for tallying. In practice, bulletin boards may be implemented using web servers or distributed ledgers.

We typically assume an honest bulletin board in security definitions: ballots posted by voters appear on the board, and the board accurately reflects all posted ballots. Relaxing this assumption leads to stronger adversary models where the adversary controls ballot collection.

\subsubsection{Encoding votes}\label{sec:vote-encoding}
Election schemes need to represent votes in a form suitable for cryptographic operations. Several encoding strategies exist, each with trade-offs for efficiency and the types of elections they support.

\paragraph{Integer encoding.}
The simplest approach encodes a vote for candidate $v \in \{1, \ldots, \mathit{nc}\}$ directly as the integer $v$. This works well with homomorphic encryption schemes that support addition: the sum of encrypted votes yields an encrypted sum, but this sum alone does not reveal individual vote counts per candidate. Additional techniques (such as computing the sum of $g^{v_i}$ values and solving a discrete logarithm for small results) can extract the tally.

\paragraph{Bitstring encoding.}
A more common approach for homomorphic tallying encodes each vote as a binary vector. For an election with $\mathit{nc}$ candidates, a vote for candidate $v$ becomes a vector of length $\mathit{nc} - 1$:
\begin{itemize}
\item If $v < \mathit{nc}$: position $v$ contains $1$, all other positions contain $0$
\item If $v = \mathit{nc}$: all positions contain $0$
\end{itemize}

For example, with $\mathit{nc} = 4$ candidates:
\begin{center}
\begin{tabular}{cl}
Vote for candidate 1: & $(1, 0, 0)$ \\
Vote for candidate 2: & $(0, 1, 0)$ \\
Vote for candidate 3: & $(0, 0, 1)$ \\
Vote for candidate 4: & $(0, 0, 0)$
\end{tabular}
\end{center}

Each component is encrypted separately, producing a tuple of ciphertexts. Homomorphically combining ciphertexts column-wise across all ballots yields encrypted sums; decrypting these reveals the vote count for each candidate without decrypting individual ballots.

The Helios voting system uses bitstring encoding. This is why Helios ballots contain tuples of ciphertexts $(c_1, \ldots, c_{\mathit{nc}-1})$ rather than a single ciphertext.

\begin{example}[Bitstring encoding and homomorphic tallying]\label{ex:tallying-trace}
We trace the full pipeline for a small election with
$\nC = 3$ candidates and three voters.

\paragraph{Step 1: Encode.}
Each voter encodes their vote as a binary vector of length
$\nC - 1 = 2$:
\begin{center}
\begin{tabular}{clc}
& Vote & Encoding \\[2pt]
Voter 1: & candidate 2 & $(0, 1)$ \\
Voter 2: & candidate 1 & $(1, 0)$ \\
Voter 3: & candidate 1 & $(1, 0)$
\end{tabular}
\end{center}

\paragraph{Step 2: Encrypt.}
Each voter encrypts the components of their encoding separately and
posts the resulting ballot to the bulletin board. Writing $E(x)$ for an
encryption of~$x$, the bulletin board contains:
\begin{center}
\begin{tabular}{l@{\qquad}cc}
& Column 1 & Column 2 \\[2pt]
Voter 1: & $E(0)$ & $E(1)$ \\
Voter 2: & $E(1)$ & $E(0)$ \\
Voter 3: & $E(1)$ & $E(0)$
\end{tabular}
\end{center}
Note that each $E(\cdot)$ uses fresh randomness, so no two ciphertexts
look alike, even those encrypting the same value.

\paragraph{Step 3: Combine.}
The tallier combines ciphertexts \emph{column-wise} using the
homomorphic property, without decrypting any individual ballot:
\begin{align*}
\underbrace{E(0) \otimes E(1) \otimes E(1)}_{\text{Column 1}}
  &= E(0 + 1 + 1) = E(2), \\
  \underbrace{E(1) \otimes E(0) \otimes E(0)}_{\text{Column 2}}
  &= E(1 + 0 + 0) = E(1).
\end{align*}  

\paragraph{Step 4: Decrypt and read off the result.}
The tallier decrypts each column sum exactly once:
$$\text{Column 1} \to 2, \qquad \text{Column 2} \to 1.$$
These are the vote counts for candidates~$1$ and~$2$, respectively.
The count for candidate~$3$ is obtained by subtraction:
$3 - 2 - 1 = 0$, where $3$ is the total number of ballots.

The election outcome is the vector $(2, 1, 0)$: candidate~$1$ received
two votes, candidate~$2$ received one vote, and candidate~$3$ received
none. At no point was any individual ballot decrypted; only the
column-wise sums were. This is the mechanism used by Helios.
\end{example}

\subsubsection{Mix networks}\label{sec:mixnets}
A \emph{mix network} (or \emph{mixnet}) is a cryptographic protocol for anonymous communication, introduced by Chaum~\cite{Chaum:1981}. Mix networks break the link between voters and their votes while still allowing votes to be tallied.

\paragraph{Basic concept.}
A set of encrypted messages enters the network, and the same messages exit, re-encrypted and reordered, so that an observer who sees both inputs and outputs cannot determine which input corresponds to which output, provided at least one mix server behaves honestly.

Consider an analogy. Several people each place a sealed envelope containing a message into a box. A trusted party shakes the box thoroughly, opens all envelopes, and reads out the messages in random order. An observer learns all messages but cannot determine who wrote which one. A mix network achieves this electronically, replacing physical shuffling with cryptographic operations.

\paragraph{Structure.}
A mix network consists of a sequence of \emph{mix servers} $M_1, M_2, \ldots, M_k$. Each server receives a batch of ciphertexts, randomly reorders (shuffles) them, and \emph{re-encrypts} each one with fresh randomness. Re-encryption produces a new ciphertext that looks entirely different but still encrypts the same plaintext; this is possible with encryption schemes like ElGamal that support homomorphic re-encryption. Even if an observer knows some servers' permutations, as long as one server's permutation remains secret, the overall mapping from inputs to outputs is hidden.

\begin{figure}
\centering
\begin{tikzpicture}[thick]
\tikzset{
    mixserver/.style={rectangle, rounded corners, draw=black, very thick, inner sep=0.5em, minimum height=3.5em, minimum width=2.2cm, text width=2cm, align=center},
    ciphertext/.style={rectangle, draw=black, thick, inner sep=0.3em, minimum height=1.5em},
    arrowmix/.style={-{Latex[length=2.5mm,width=1.2mm]}, thick},
    labelstyle/.style={font=\footnotesize}
}

\node[ciphertext] (c1) at (0, 1.2) {$c_1$};
\node[ciphertext] (c2) at (0, 0) {$c_2$};
\node[ciphertext] (c3) at (0, -1.2) {$c_3$};

\node[mixserver] (mix1) at (3.2, 0) {Mix Server 1\\{\footnotesize shuffle \& re-encrypt}};
\node[mixserver] (mix2) at (7.2, 0) {Mix Server 2\\{\footnotesize shuffle \& re-encrypt}};

\node[ciphertext] (c1out) at (10.4, 1.2) {$c'_3$};
\node[ciphertext] (c2out) at (10.4, 0) {$c'_1$};
\node[ciphertext] (c3out) at (10.4, -1.2) {$c'_2$};

\draw[arrowmix] (c1.east) -- (mix1.west |- c1);
\draw[arrowmix] (c2.east) -- (mix1.west);
\draw[arrowmix] (c3.east) -- (mix1.west |- c3);

\draw[arrowmix] (mix1.east) -- (mix2.west);

\draw[arrowmix] (mix2.east |- c1out) -- (c1out.west);
\draw[arrowmix] (mix2.east) -- (c2out.west);
\draw[arrowmix] (mix2.east |- c3out) -- (c3out.west);

\node[above=0.3cm of c1, labelstyle] {Input};
\node[above=0.3cm of c1out, labelstyle] {Output};

\draw[dashed, gray] (1.3, -2) -- (1.3, 2);
\draw[dashed, gray] (9.1, -2) -- (9.1, 2);
\node[below=0.3cm of mix1, xshift=2cm, text width=5cm, align=center, labelstyle, text=gray] {Unlinkable if at least one server is honest};

\end{tikzpicture}
\caption{Operation of a mix network. Ciphertexts enter and pass through a sequence of mix servers. Each server randomly permutes its inputs and re-encrypts them, producing ciphertexts that encrypt the same plaintexts but appear different and are in an unpredictable order. An observer cannot link inputs to outputs provided at least one server keeps its permutation secret.}
\label{fig:mixnet}
\end{figure}

Figure~\ref{fig:mixnet} illustrates how ciphertexts pass through a sequence of mix servers.

\paragraph{Verifiable shuffles.}
For voting, mix servers must prove they shuffled correctly without revealing the permutation they used. Each server therefore produces a zero-knowledge proof showing that its output ciphertexts are exactly a re-encryption and permutation of its inputs: nothing added, removed, duplicated, or modified. These proofs allow anyone to verify mixing integrity while learning nothing about which input maps to which output.

\paragraph{Application to voting.}
In a mixnet-based voting system, voters encrypt their votes and post ciphertexts to the bulletin board. The mix servers sequentially shuffle and re-encrypt the ciphertexts, each providing a proof of correct shuffling. After the final shuffle, the ciphertexts are decrypted (using threshold decryption) to reveal votes, which are then tallied. Mixing breaks the link between voters and encrypted votes, so decryption reveals only shuffled votes rather than who cast each one. The shuffle proofs ensure universal verifiability. The Helios Mixnet system (Section~\ref{sec:heliosMixnet}) uses this approach.

\subsubsection{Homomorphic tallying vs.\ mixnet tallying}\label{sec:tally-approaches}
Electronic voting systems use two main approaches to compute election outcomes while preserving ballot secrecy.

\paragraph{Homomorphic tallying.}
With additively homomorphic encryption (Section~\ref{sec:homomorph}), encrypted votes can be combined without decryption:
$$\mathsf{Enc}(v_1) \otimes \mathsf{Enc}(v_2) \otimes \cdots \otimes \mathsf{Enc}(v_n) = \mathsf{Enc}(v_1 + v_2 + \cdots + v_n)$$
Only the final sum is decrypted, revealing the aggregate tally but never individual votes. This is efficient and provides strong privacy, but it is limited to elections where the outcome is a sum of votes. It cannot support complex voting methods like ranked-choice or approval voting. The basic Helios system uses homomorphic tallying.

\paragraph{Mixnet tallying.}
With mix networks, the encrypted votes are first shuffled through the mixnet, then each shuffled ciphertext is decrypted to reveal an individual vote. The revealed votes can then be tallied using any counting method. This supports arbitrary voting methods and provides flexibility for complex elections, but it reveals individual votes (though not who cast them) and requires verification of the shuffle proofs. Helios Mixnet uses this approach to support elections beyond simple plurality voting.

\begin{figure}
\centering
\begin{tikzpicture}[thick]
\tikzset{
    procbox/.style={rectangle, rounded corners, draw=black, thick, inner sep=0.6em, minimum height=2em, align=center},
    databox/.style={rectangle, draw=black, inner sep=0.4em, minimum height=1.8em},
    arrowtally/.style={-{Latex[length=2.5mm,width=1.5mm]}, thick},
    titlestyle/.style={font=\bfseries}
}

\node[titlestyle] at (-3.5, 2.5) {Homomorphic Tallying};

\node[databox] (hinput) at (-3.5, 1.5) {Encrypted votes};
\node[procbox] (hcombine) at (-3.5, 0) {Combine homomorphically};
\node[procbox] (hdec) at (-3.5, -1.5) {Decrypt once};
\node[databox] (hout) at (-3.5, -3) {Tally (sum only)};

\draw[arrowtally] (hinput) -- (hcombine);
\draw[arrowtally] (hcombine) -- (hdec);
\draw[arrowtally] (hdec) -- (hout);

\node[titlestyle] at (3.5, 2.5) {Mixnet Tallying};

\node[databox] (minput) at (3.5, 1.5) {Encrypted votes};
\node[procbox] (mshuffle) at (3.5, 0) {Shuffle (break links)};
\node[procbox] (mdec) at (3.5, -1.5) {Decrypt each};
\node[databox] (mout) at (3.5, -3) {Individual votes};

\draw[arrowtally] (minput) -- (mshuffle);
\draw[arrowtally] (mshuffle) -- (mdec);
\draw[arrowtally] (mdec) -- (mout);

\draw[dashed, gray] (0, 2.8) -- (0, -3.5);

\end{tikzpicture}
\caption{Two approaches to tallying encrypted votes. \emph{Left:} Homomorphic tallying combines ciphertexts mathematically, then decrypts once to reveal only the aggregate sum, so the individual votes are never exposed. \emph{Right:} Mixnet tallying shuffles ciphertexts to break the link between voters and ballots, then decrypts each vote individually. Homomorphic tallying reveals less information but only supports summation; mixnet tallying supports arbitrary voting methods but reveals individual votes (though not who cast them).}
\label{fig:tally-comparison}
\end{figure}

Figure~\ref{fig:tally-comparison} contrasts the two approaches.

\subsubsection{Well-definedness}\label{sec:well-defined}
An encryption scheme satisfies \emph{well-definedness} if there is an efficient way to distinguish properly formed ciphertexts from malformed ones. A ciphertext is \emph{well-formed} if it could have been produced by the encryption algorithm on some valid plaintext; otherwise it is \emph{ill-formed}.

For many encryption schemes, any bit string of appropriate length could potentially be a ciphertext, making tampering difficult to detect. A well-defined scheme provides a method to check whether a ciphertext lies in the valid ciphertext space.

Well-definedness is important for voting because:
\begin{itemize}
\item It allows detection of malformed ballots before tallying.
\item It ensures tallying algorithms behave predictably on all inputs.
\item It prevents attacks where adversaries submit crafted invalid ciphertexts that cause unexpected behavior during decryption.
\end{itemize}

Well-definedness is distinct from non-malleability. Non-malleability prevents producing a \emph{related valid} ciphertext from an existing one. Well-definedness allows detecting \emph{invalid} ciphertexts regardless of how they were produced. Voting systems typically need both properties.

\subsubsection{Ballot weeding}\label{sec:ballot-weeding}
Even with non-malleable encryption, an adversary might submit a ballot that is \emph{related} to another voter's ballot in a way that leaks information during tallying. For example, if the adversary can copy another voter's encrypted ballot and submit it as their own, the tally would reveal whether that voter's choice matched a particular candidate (by showing two votes instead of one).

\emph{Ballot weeding} is a countermeasure that removes or rejects ballots deemed to be meaningfully related to other ballots before tallying. The precise definition of \emph{meaningfully related} depends on the voting system, but typically includes duplicate ballots and ballots that can be shown to encrypt related plaintexts. Ballot weeding complements non-malleability: non-malleability makes it hard to create related ballots, while ballot weeding provides a second line of defense by detecting and removing any that slip through.

\subsection{Further reading on cryptography}
For introductions to the modern theory of cryptography, see, for example, the texts by Goldreich~\cite{goldreich2003foundations} or Katz and Lindell~\cite{katz2014introduction}. The classic text by Menezes, Van Oorschot and Vanstone~\cite{menezes1996handbook} gives a wide-encompassing treatment of applied cryptography. 
More advanced readers can also see the recent edited collection of cryptography tutorials~\cite{lindell2017tutorials}. Gentry\cite{gentry2010intro} covers the basics of fully homomorphic encryption in an approachable fashion.

\section{Game-based cryptography}\label{sec:game-crypto}
A modern approach to cryptography is to detect \emph{breaks} (that is, develop suitable attacks that find holes) in cryptography systems by using the concept of a \emph{game} or an \emph{experiment}. In game-based cryptography, a game formulates a series of interactions between a benign challenger, a malicious adversary $\adv$, and a cryptography scheme. The adversary $\adv$ wins the game by completing a \emph{task} that captures an execution of the scheme in which security is broken. In other words, the adversary $\adv$ wins by doing what should be unachievable. Formally, games in cryptography are interpreted as probabilistic (or randomized) algorithms that output Booleans, that is, answers such as true or false; 	$\top$ or $\bot$; $0$ or  $1$; and so on. Importantly, adversaries are said to be \emph{stateful}, meaning information persists across invocations of an adversary in a game, so adversaries can access earlier assignments.

In the electronic voting literature, it is standard to analyze voting systems by using methods from game-based cryptography. In the current setting, we consider a benign challenger, a malicious adversary and a voting system. The game is that the adversary seeks to break the security of the voting system by, for example, changing votes or revealing the votes of voters. For example, Smyth, Frink and  Clarkson~\cite{Smyth15:ElectionVerifiability} used this approach to study their definitions of universal and individual verifiability. 

We will now illustrate the basics of game-based cryptography. Notation used throughout this section and the rest of the \manuscript{} is collected in Section~\ref{sec:notation-primer}.

\subsection{Indistinguishability as a game}\label{sec:indgame}
For a concrete example of a game, we introduce the concept of \emph{indistinguishability}, which is when an adversary observes a single ciphertext but then is incapable of determining which of two message the ciphertext corresponds to.  In general, indistinguishability is important in cryptography. For voting systems, ballot secrecy can be expressed as the inability to distinguish between an instance of a voting system in which voters cast some votes, from another instance in which the voters cast a permutation of those votes. 

An encryption scheme is said to be \emph{perfectly indistinguishable} if no adversary $\adv$ can succeed with probability better than one half, meaning, no attacker can do any better than guessing the correct message half the time. But this concept is an idealization, as the probability will not equal exactly one half in a finite system. The probability should approach or converge exceedingly fast to one half as some security parameter $\kk$, such as the key length, approaches infinity.  

Consequently we will define a game that formalize the notion of an encryption scheme being \emph{indistinguishable}. To consider this game formally, we write $\Pi=(\EncSymb,\DecSymb,\GenSymb)$ to denote an encryption scheme with the message space $\votespace$, where we recall the respective primitives $\EncSymb$, $\DecSymb$ and $\GenSymb$ for encryption, decryption and key generation.  We now define a game called \emph{indistinguishability}, which we denote by $\AdvInd$. 
	
{\upshape
\begin{inlineexperiment}{Game $\AdvInd(\Pi, \adv,\kk) $}
The adversary  $\adv$ outputs two messages $m_0$ and $m_1$, which both belong to the space of all possible messages $\votespace$, meaning  $m_0,m_1\in \votespace$\;
A key pair $\pk$ and $\sk$ is generated using the key generation primitive $\GenSymb(\kk)$\;
A random uniform bit is also generated $\beta$, meaning a random variable $\beta\in \{0,1\}$, which is used to randomly choose a message $m_{\beta}$\;
Applying the encryption primitive $\EncSymb$ to the (randomly chosen) message $m_{\beta}$, a ciphertext $c$ is computed, written as $c\leftarrow \EncSymb(\pk,m_{\beta})$\; 
Ciphertext $c$ is given to the adversary $\adv$\;
The  adversary $\adv$ outputs a bit $\beta'$\;
The output of the game is defined to be $\top$ (true) if the bit $\beta' =\beta$, and $\bot$ (false) otherwise\;
\end{inlineexperiment}
}
We say that the adversary $\adv$ succeeds if the outcome of the above game is $\top$, which we write as $\top\leftarrow \AdvInd(\cdot)$. We write  $\Succ(\AdvInd(\cdot))=\mathrm{Pr}[\top\leftarrow \AdvInd(\cdot)]$  to denote the probability that the adversary succeeds in the game $\AdvInd$.  We can now formally define the concept of an encryption scheme being indistinguishable.
\begin{definition}[Indistinguishable]
Given the game $\AdvInd$, an encryption scheme $\Pi=(\EncSymb,\DecSymb,\GenSymb)$ with the message space $\votespace$, and a security parameter $\kk$, is indistinguishable if for every adversary $\adv$ the following inequality
$$
\Succ(\AdvInd(\cdot)) \leq  \frac{1}{2}  +\neglK \,.
$$
holds, where $\neglK$ is a negligible function in $\kk$.
\end{definition}
The probability of the adversary succeeding is at most one half plus a negligible amount. We see that as the (finite) security parameter $\kk$ approaches infinity, the probability of the adversary succeeding approaches one half (overwhelmingly fast). This can be expressed as $\Succ(\AdvInd(\cdot)) \rightarrow 1/2$ as  $\kk\rightarrow \infty$, which in the limit gives a perfectly indistinguishable encryption scheme.

\begin{mdframed}[style=infobox,frametitle={\textbf{How to read a security game}}]
A security game is \emph{not} a procedure to execute; it is a \emph{probability experiment} used to define what security means.  When reading a game:

\begin{enumerate}
\item \textbf{Identify the secret.}  There is usually a random bit $\beta$ (or similar hidden value) that the adversary must guess.

\item \textbf{Identify the adversary's view.}  What information does the adversary receive?  This is typically limited to public keys, ciphertexts, and oracle responses, and never the private key or the secret bit directly.

\item \textbf{Read the \textsf{Return} line as the winning condition.}  The game outputs $\top$ (true) when the adversary wins.  The conditions in the \textsf{Return} line define \emph{what counts as a win}; they are not steps the adversary performs.

\item \textbf{Think probabilistically.}  Every $\leftarrow_R$ introduces randomness.  The question is: over all these random choices, how often does the adversary win?  Security means the adversary cannot win much more often than by random guessing.
\end{enumerate}

\noindent
Each game defines a specific threat model.  Different games for the same scheme capture different attack scenarios (for example, passive eavesdropping vs.\ active manipulation).
\end{mdframed}

\subsection{Hierarchy of security notions}\label{sec:security-hierarchy}
The indistinguishability game from Section~\ref{sec:indgame} captures a basic notion of security. Cryptographers have defined several variants, forming a hierarchy from weaker to stronger guarantees. Understanding this hierarchy helps in selecting appropriate security assumptions for proofs. To orient the reader, the key notions and their logical relationships are
$$\text{IND-CCA} \rightarrow \text{NM-CPA} \leftrightarrow \text{IND-PA0} \rightarrow \text{IND-CPA}$$
where arrows indicate implication. The weakest notion, IND-CPA, captures security against passive eavesdroppers. The strongest, IND-CCA, implies non-malleability (NM-CPA). For election schemes, the intermediate notion IND-PA0 turns out to be the most natural, because its parallel decryption oracle models the tallying phase where all submitted ballots are decrypted together. We now define each notion.

\paragraph{IND-CPA: Indistinguishability under Chosen Plaintext Attack.}
This is the basic notion of semantic security that we introduced above. An adversary who can obtain encryptions of chosen messages still cannot distinguish encryptions of two messages of their choice. This captures security against passive eavesdroppers who observe ciphertexts but cannot interfere with the system.

\paragraph{IND-CCA: Indistinguishability under Chosen Ciphertext Attack.}
A stronger notion where the adversary additionally has access to a decryption oracle and can obtain decryptions of chosen ciphertexts (except the challenge ciphertext). This models active attackers who might trick honest parties into decrypting messages. IND-CCA security implies non-malleability.

\paragraph{IND-PA0: Indistinguishability under Parallel Attack.}
An intermediate notion, introduced by Bellare and Sahai~\cite{Bellare99:IND-k-CPA}, where the adversary receives a challenge ciphertext and can then submit a vector of ciphertexts for decryption (all at once, not adaptively), excluding the challenge itself. This notion is particularly relevant for voting because tallying is fundamentally a decryption step: the election authority decrypts all submitted ballots together to produce the outcome. The parallel decryption oracle in IND-PA0 models exactly this situation, where an adversary constructs ballots that will all be decrypted in a single batch during tallying. The formal definition appears in Section~\ref{sec:INDPA} of the appendix.

\paragraph{NM-CPA: Non-Malleability under Chosen Plaintext Attack.}
Rather than an indistinguishability requirement, this captures a qualitatively different property: given a ciphertext, the adversary cannot produce a ciphertext of a \emph{meaningfully related} plaintext. For example, given an encryption of $m$, the adversary should not be able to produce an encryption of $m+1$ without knowing $m$.

\paragraph{Relationships.}
The equivalence NM-CPA $\leftrightarrow$ IND-PA0 is a theorem of Bellare and Sahai~\cite{Bellare99:IND-k-CPA}, showing that non-malleability and this form of indistinguishability are the same notion expressed differently. This equivalence is useful because IND-PA0 is often easier to work with in proofs than the more complex definition of non-malleability.

For election schemes, we typically require IND-PA0 (equivalently, NM-CPA) because ballot secrecy requires indistinguishability, verifiability requires non-malleability, and the parallel attack model matches the voting scenario where multiple ballots are decrypted together during tallying.

\subsection{Game formulation}
Using our notation,  we can formulate a game, denoted by $\Exp(H,\mathcal{S},\adv)$, which gives an adversary $\adv$ the task of distinguishing between a function $H$ 
and a simulator $\mathcal{S}$.  (The idea of a simulator is important in game-based cryptography, as it gives a means for the adversary to generate false information with the aim of winning the game.)

 The game $\Exp(H,\mathcal{S},\adv)$ is as follows.

\begin{inlineexperiment}{Game $\Exp(H,\mathcal{S},\adv) $}
$m \leftarrow \adv();  \beta \leftarrow_R \{0,1\}$ \;
\lleIf{$\beta=0$}{$x \leftarrow H(m)$}{$x\leftarrow  S(m)$} \;
$  g \leftarrow \adv(x)$ \;
$\Return\ (g = \beta)$ \;
\end{inlineexperiment}

In words, the adversary $\adv$ first creates a message $m$ and then a random bit $\beta$ is generated. If this random bit is equal zero (so $\beta=0$), a function $H$ is applied to the message; otherwise, the simulator $S$ is applied. The adversary receives the output $x$ and must guess the value of $\beta$. The adversary wins if their guess $g$ matches $\beta$.

An adversary \emph{wins} a game by causing it to output $\top$, and \emph{loses} the game if the output is $\bot$. Since adversaries are stateful, information persists across invocations of an adversary in a game, meaning adversaries can access earlier assignments.  

We say that the adversary's \emph{success} in a game $\Exp(\cdot)$, denoted $\Succ(\Exp(\cdot))$, is the probability that the adversary wins. That is, adversary's success is
\begin{align*}
\Succ(\Exp(\cdot))& =\mathrm{Pr}[x\leftarrow \Exp(\cdot) : x = \top]\\
&=\allowbreak \mathrm{Pr}[\top \leftarrow \Exp(\cdot)]\,.
\end{align*}
Interpreting the game $\Exp(\cdot)$ as a random variable, which takes the value $\top$ or $\bot$, we can use standard probability notation $\Succ(\Exp(\cdot)) =\mathrm{Pr}[\Exp(\cdot) = \top]$. We have used the probability to formulate game success, because we focus on computational security, rather than information-theoretic  security. This means we tolerate breaks by adversaries in non-polynomial time and  breaks with negligible success, since such breaks are infeasible in practice.

\subsection{Oracles}
An important notion in cryptography is that of an oracle, which formalizes the idea of accessing certain information to help break a cryptographic scheme. The oracle is treated as a blackbox that can be given certain questions or task to aid the adversary, which can be answered or done quickly. (The notion of an oracle can be made formal by modifying a Turing machine, but such details are not important here.)

To give an example of an oracle at work, an adversary $\adv$ in a game could use an oracle to encrypt messages by using a key that is unknown to the adversary. Using specifically chosen messages, the adversary $\adv$ could infer the key by examining the ciphertexts produced by the oracle. The adversary $\adv$  can interact with the oracle as many times as needed and retain information from previous interactions. For another example, an oracle may access game parameters such as the bit $\beta$ used in the game $\Exp$.

\subsection{Games with many interactions}
The aforementioned game $\Exp$ captures a single interaction between the challenger and the adversary. But if we couple games with oracles, we can extend the games so that they capture arbitrarily many interactions. For instance, we can formulate a strengthening of $\Exp$ as follows.

{\upshape
\begin{inlineexperiment}{Game $\Exp^\oracle(H,\mathcal{S},\adv)$}
$\beta \leftarrow_R \{0,1\}$ \;
$g \leftarrow \adv^\oracle()$ \;
$\Return\ (g = \beta)$ \;
\end{inlineexperiment}
}

\noindent
where $\adv^\oracle$ denotes $\adv$'s access to oracle $\oracle$, and the oracle $\oracle$ is defined as follows:
\begin{itemize}
\item $\oracle(m)$ computes \lleIf{$\beta=0$}{$x \leftarrow H(m)$}{$x\leftarrow  S(m)$} and outputs $x$.
\end{itemize}

In words, the game $\Exp^\oracle$ first generates a random bit $\beta$, then allows the adversary to interact with the oracle $\oracle$ as many times as desired. Each time the adversary submits a message $m$ to the oracle, it receives either $H(m)$ or $S(m)$ depending on the hidden bit $\beta$. After gathering information from these interactions, the adversary outputs a guess $g$, and wins if this guess matches $\beta$.

\subsection{Anatomy of a security game}\label{sec:reading-games}
Security definitions in cryptography are typically expressed as games between a \emph{challenger} (who runs the cryptographic scheme honestly) and an \emph{adversary} (who tries to break the scheme). Understanding how to read these game definitions is essential for the technical sections that follow.

A security game has several standard components:

\paragraph{Setup phase.} The challenger initializes the cryptographic scheme, typically by generating keys. For example, $(\pk, \sk, \mB, \mC) \leftarrow \Setup$ generates a key pair and system parameters. The adversary usually receives the public key but not the private key.

\paragraph{Adversary queries.} The adversary interacts with the challenger, often through an \emph{oracle} that provides specific capabilities. For instance, an encryption oracle allows the adversary to obtain encryptions of chosen messages. We write $\adv^\oracle$ to denote adversary $\adv$ with access to oracle $\oracle$.

\paragraph{Challenge phase.} The adversary attempts to accomplish a task that should be infeasible if the scheme is secure. This often involves distinguishing between two possibilities (such as which of two messages was encrypted) or forging something (such as a valid signature).

\paragraph{Output and winning condition.} The game returns a Boolean value: $\top$ (true) if the adversary wins, $\bot$ (false) otherwise. The winning condition is specified as a predicate that must hold for the adversary to win.

\paragraph{Example: Reading the IND game.}
Consider the indistinguishability game from Section~\ref{sec:indgame}. Let us walk through it line by line:

\begin{enumerate}
\item \emph{``The adversary $\adv$ outputs two messages $m_0$ and $m_1$''}: The adversary chooses which two messages to be challenged on. This models a chosen-plaintext attack.

\item \emph{``A key pair $\pk$ and $\sk$ is generated''}: The challenger runs key generation honestly.

\item \emph{``A random uniform bit $\beta$ is generated''}: This is the secret the adversary must guess. The adversary wins by determining $\beta$.

\item \emph{``Ciphertext $c$ is computed as $c \leftarrow \EncSymb(\pk, m_\beta)$''}: The challenger encrypts one of the two messages, chosen by the random bit.

\item \emph{``Ciphertext $c$ is given to the adversary''}: The adversary sees only the ciphertext, not $\beta$.

\item \emph{``The adversary outputs a bit $\beta'$''}: The adversary's guess for which message was encrypted.

\item \emph{``Output $\top$ if $\beta' = \beta$''}: The adversary wins if they guessed correctly.
\end{enumerate}

The scheme is secure if no efficient adversary can win with probability significantly better than $1/2$ (random guessing).

\paragraph{Bounded winning probability.}
Security definitions typically require that the adversary's success probability is bounded:
$$\Succ(\mathsf{Game}(\cdot)) \leq \frac{1}{2} + \neglK$$

The term $\frac{1}{2}$ represents random guessing (for bit-guessing games). The term $\neglK$ is a negligible function, meaning the adversary's advantage over guessing vanishes as the security parameter grows. For games where random success is not $1/2$ (such as games where the adversary wins only by achieving something specific), the bound might instead be $\Succ(\mathsf{Game}(\cdot)) \leq \neglK$.

\subsection{Security proofs by reduction}\label{sec:reductions}
The standard technique for proving that a cryptographic construction is secure is \emph{proof by reduction}. The idea is to show that if an adversary $\adv$ can break the construction, then we can use $\adv$ as a subroutine to build another adversary $\Adv$ that breaks some underlying assumption believed to be hard.

\paragraph{Structure of a reduction proof.}
A reduction proof typically proceeds as follows:
\begin{enumerate}
\item \emph{Assume the contrary}: Suppose the construction is insecure, meaning there exists an efficient adversary $\adv$ that wins the security game with non-negligible advantage.

\item \emph{Construct a new adversary}: Build adversary $\Adv$ that uses $\adv$ as a black box. Adversary $\Adv$ plays the role of challenger to $\adv$ while itself being a challenger in a different game.

\item \emph{Simulation}: Show that $\Adv$ can simulate the environment that $\adv$ expects. When $\adv$ makes oracle queries, $\Adv$ must respond in a way that is indistinguishable from the real game.

\item \emph{Extraction}: When $\adv$ succeeds in its task, show how $\Adv$ can extract a solution to the underlying hard problem.

\item \emph{Conclude}: Since breaking the underlying assumption is believed to be hard, and $\Adv$'s success probability is related to $\adv$'s, we conclude that $\adv$ cannot have non-negligible advantage.
\end{enumerate}

\paragraph{Example sketch.}
To prove that election scheme $\encToVote$ satisfies ballot secrecy (Proposition~\ref{prop:encToVoteBS}), we show: if adversary $\adv$ breaks ballot secrecy, then we can construct adversary $\Adv$ that breaks IND-PA0 of the encryption scheme $\Pi$.

Adversary $\Adv$ works as follows:
\begin{itemize}
\item $\Adv$ receives a public key $\pk$ from its own challenger and forwards it to $\adv$.
\item When $\adv$ outputs votes $(v_0, v_1)$, adversary $\Adv$ forwards these to its challenger as the two messages.
\item $\Adv$ receives a challenge ciphertext $c$ (an encryption of either $v_0$ or $v_1$) and gives it to $\adv$ as the ballot.
\item $\Adv$ simulates the rest of the ballot secrecy game for $\adv$.
\item When $\adv$ outputs its guess $g$, adversary $\Adv$ outputs $g$ as its own guess.
\end{itemize}

If $\adv$ can distinguish which vote was encrypted, then $\Adv$ can distinguish which message was encrypted, contradicting the security of $\Pi$. The key insight is that $\Adv$ perfectly simulates the ballot secrecy game for $\adv$, so $\adv$'s success probability transfers to $\Adv$.

\paragraph{Why reductions matter.}
Reduction proofs provide \emph{provable security}: the security of a complex construction (like an election scheme) is reduced to the security of simpler, well-studied primitives (like encryption schemes). Rather than trying to prove absolute security (which is generally impossible), we show that breaking the construction is \emph{at least as hard as} breaking the underlying primitive.

\begin{figure}
\centering
\begin{tikzpicture}[thick]
\tikzset{
    challenger/.style={rectangle, rounded corners, draw=black, very thick, inner sep=0.8em, minimum width=5cm, align=center},
    adversary/.style={rectangle, rounded corners, draw=black, thick, inner sep=0.6em, minimum width=3.5cm, align=center},
    arrowred/.style={-{Latex[length=3mm,width=1.5mm]}, thick},
    labelstyle/.style={font=\footnotesize}
}

\node[challenger] (chall) at (0, 3.5) {Challenger\\{\footnotesize (security game for primitive)}};

\node[adversary, minimum height=3.5cm, minimum width=5.5cm, dashed] (advB) at (0, -0.5) {};
\node[above] at (advB.north) {Reduction $\mathcal{B}$};

\node[adversary] (advA) at (0, -0.8) {Adversary $\mathcal{A}$\\{\footnotesize (attacks construction)}};

\node[labelstyle, text=gray] at (0, 0.8) {simulates construction};

\draw[arrowred] ([xshift=-1.4cm]chall.south) -- ([xshift=-1.4cm]advB.north);
\node[labelstyle, left=0.1cm] at (-1.4, 2.3) {challenge};

\draw[arrowred] ([xshift=1.4cm]advB.north) -- ([xshift=1.4cm]chall.south);
\node[labelstyle, right=0.1cm] at (1.4, 2.3) {response};

\end{tikzpicture}
\caption{Structure of a security reduction. To prove a construction secure, we assume an adversary $\mathcal{A}$ can break it and build a reduction $\mathcal{B}$ that uses $\mathcal{A}$ as a subroutine. The reduction $\mathcal{B}$ plays the security game for the underlying primitive while simulating the construction for $\mathcal{A}$. If $\mathcal{A}$ succeeds, $\mathcal{B}$ uses this to win its own game. Since the primitive is assumed secure, no efficient $\mathcal{A}$ can exist.}
\label{fig:reduction}
\end{figure}

Figure~\ref{fig:reduction} illustrates the structure of a reduction proof.

\subsection{Further reading on game-based cryptography}

Bellare and Rogaway have made foundational contributions to game-based security definitions. Their work on relations between cryptographic definitions~\cite{Bellare98:RelationsCryptoDefs} clarifies how different security notions (such as IND-CPA, IND-CCA, and non-malleability) relate to one another. Their treatment of the random oracle model~\cite{Bellare93:RandomOracles} provides an important idealization used in many security proofs.

For readers interested in the application of game-based techniques to voting systems specifically, Smyth, Frink, and Clarkson~\cite{Smyth15:ElectionVerifiability} provide game-based definitions of verifiability, while Smyth~\cite{Smyth15:BallotSecrecyFull} develops game-based definitions of ballot secrecy that we use in later sections.

The edited volume by Lindell~\cite{lindell2017tutorials} contains accessible tutorials on various aspects of modern cryptography, including both game-based and simulation-based approaches. This collection is particularly useful for understanding how these two paradigms complement each other.

\subsection{Games for voting security}
In subsequent sections, we use different styles of security games to capture different properties of election schemes. \emph{Verifiability} (Section~\ref{sec:verifiability}) uses games that task an adversary to reach a bad state---producing a fraudulent tally that passes verification, or causing ballots to collide. These are \emph{reachability} games: the scheme is secure if no efficient adversary can reach the forbidden state. \emph{Ballot secrecy} (Section~\ref{sec:secrecy}) uses \emph{indistinguishability} games, where the adversary must distinguish between two scenarios (such as which of two votes was encrypted). The scheme is secure if the adversary cannot do better than guessing. Recognizing which style of game applies to which property helps when reading the formal definitions that follow.


\section{Notation}\label{sec:notation-primer}

Having introduced cryptographic primitives and security games, we now consolidate the notation used throughout this \manuscript{}. This section serves as a reference for describing cryptographic games, election schemes, and security proofs; readers familiar with this material may skip ahead and return here as needed.

\subsection{Algorithms and randomness}

Formally, games and cryptographic constructions are probabilistic (or randomized) algorithms that may output Booleans. We write 
$$A(x_1,\dots,x_n; r)$$ 
to denote the output of a probabilistic algorithm $A$, where $x_1,\dots,x_n$ are the inputs and $r$ is a collection of independent coins that are each chosen uniformly at random from a suitable space of coin values. We often adopt the shorthand $A(x_1,\dots,x_n)$ to denote $A(x_1,\dots,x_n;r)$, leaving the randomness implicit.

\subsection{Assignments}

To denote an assignment of $T$ to $x$, we write $x \leftarrow T$, meaning $x$ takes on the value of $T$. For a random assignment, we write $x\leftarrow_R S$ to express that $x$ is chosen uniformly at random from a finite set $S$. We write $x,x'\leftarrow_R S$ for the independent random assignments $x\leftarrow_R S$ and $x'\leftarrow_R S$.

We write $\mathrm{Pr}(B)$ to denote the probability of a random event $B$ happening.

\subsection{Vectors and indexing}

Let $x[i]$ denote the $i$-th component of vector $x$, while $|x|$ denotes the length of vector $x$. To express an assignment with vectors, we write 
$(x_1,\allowbreak\dots,\allowbreak x_{|T|}) \leftarrow T$ for 
$x \leftarrow T;\allowbreak x_1\leftarrow x[1];\allowbreak \dots;\allowbreak x_{|T|}\leftarrow x[|T|]$, 
when $T$ is a vector.

\subsection{Logic notation}

When defining games and algorithms, we use standard logic notation to write predicates (or Boolean-valued functions), which return true or false. A predicate is a mathematical statement that tests whether something is true or false. You can think of it as a yes-or-no question applied to particular values. For example, the predicate $(n\leq m)$ returns true if the inequality is satisfied for $n$ and $m$. 

We combine predicates with the standard logical operators: \emph{conjunction} $p \wedge q$ (true when both hold), \emph{disjunction} $p \vee q$ (true when at least one holds), \emph{negation} $\neg p$ (true when $p$ is false), and \emph{implication} $p \Rightarrow q$ (if $p$ then $q$). For example, we might write $(a\neq b) \wedge (n\leq m)$ or $(a\neq b) \vee (n\leq m)$ to combine conditions.

Predicates are also used with quantifiers to express counting conditions. For instance, the notation $(\exists^{=\ell} x : P(x))$ means that exactly $\ell$ elements satisfy the property $P$. In other words, there are precisely $\ell$ values of $x$ (no more, no fewer) for which the test $P(x)$ returns true. This counting notation appears throughout our formal definitions of election properties.

\subsection{Quantifiers}

The \emph{universal quantifier} $\forall$ (``for all'') and the \emph{existential quantifier} $\exists$ (``there exists'') have their standard meanings: $\forall x \in S : P(x)$ asserts that property $P$ holds for every element $x$ in set $S$, while $\exists x \in S : P(x)$ asserts that $P$ holds for at least one such element.

We also use a \emph{counting quantifier} $\exists^{=\ell} x \in S : P(x)$ to express that \emph{exactly} $\ell$ distinct elements $x$ in $S$ satisfy property $P(x)$. (More specifically, the notation $\exists^{=\ell}$ denotes a generalized quantifier; for more details on first-order logic, see, for example, the introduction by Schweikardt~\cite{Schweikardt05}. Variable $x$ is bounded by the predicate, while the integer $\ell$ is free.) This quantifier appears in our definition of correct election outcomes; for example, $\exists^{=\ell} b \in \bb \setminus \{\bot\} : \exists r : b = \Vote[\pk, v, \nC, \kk; r]$ expresses that exactly $\ell$ ballots on the bulletin board are valid encryptions of vote $v$.

\subsection{Set notation}

We use standard set-theoretic notation: $x \in S$ denotes membership; $S \subseteq T$ denotes that $S$ is a subset of $T$; $S \cup T$, $S \cap T$, and $S \setminus T$ denote union, intersection, and set difference respectively; $|S|$ denotes cardinality; and $\emptyset$ denotes the empty set. Set-builder notation $\{x \in S \mid P(x)\}$ denotes the set of all elements $x$ in $S$ satisfying property $P$.

\subsection{Tuples and parsing}

A \emph{tuple} is an ordered collection of elements, written $(x_1, x_2, \ldots, x_n)$. Unlike sets, tuples preserve order and allow duplicates. We say we \emph{parse} an object as a tuple when we interpret it as having component parts---for example, parsing $\pk'$ as a pair $(\pk, \votespace)$ means interpreting $\pk'$ as consisting of a public key $\pk$ and a message space $\votespace$. Parsing may fail if the object does not have the expected structure, in which case algorithms typically output the error symbol $\bot$.

With these notational conventions established, we now turn to the formal description of voting systems and their security properties.

\section{Overview of an election scheme }\label{sec:election}

Now that we have the cryptographic building blocks in hand, let us revisit the election scheme introduced in Section~\ref{sec:election-preview} in more detail. We will soon introduce suitable properties for any election scheme at a more technical level. For now, we expand on one example of an election scheme, which we then use to introduce more technical concepts and definitions. Ultimately, we want to see if this and other proposed election schemes satisfies certain properties, with a focus on ballot secrecy and verifiability. 

Recall that the election scheme proposed by Smyth, Frink and  Clarkson~\cite{Smyth15:ElectionVerifiability} uses four algorithms or primitives, denoted by  $\SetupSymb$, $\VoteSymb$, $\TallySymb$, and $\VerifySymb$, as illustrated in Figure~\ref{figure-simplevoting}. The \emph{bulletin board} is where encrypted votes are stored until the election ends. We now describe the function of each primitive in more detail and then in the next section give more specific details. 

\begin{itemize}
\item $\SetupSymb$ randomly creates a public and private key pair, where the voters will use the public key and the election authority will use the private key. For illustration purposes, we only consider one key pair, but we can easily have two or more key pairs, and each candidate (such as a political party) in the election keeps their own private key secret. $\SetupSymb$ is simply an asymmetric encryption scheme that is suitably homomorphic.

\item $\VoteSymb$ allows the voters to vote using the public key. Each voter's vote is encrypted and the $\VoteSymb$ primitive must use homomorphic asymmetric encryption with  vanishingly small probability of collisions, meaning each voter has a unique but secret encrypted vote. If more than one key pair was created by the primitive $\SetupSymb$, then each voter must use the corresponding public keys to vote.

\item $\TallySymb$ adds up the encrypted votes, which is possible with a homomorphic asymmetric encryption scheme. This primitive  outputs the election outcome as well as proof for verifying that the election outcome is the true outcome based on the encrypted votes produced by the voters using $\VoteSymb$. $\TallySymb$ does not need the private key, originally generated by $\SetupSymb$, to tally the votes, but it \emph{does} need the private key to produce a proof.

\item $\VerifySymb$ uses the public key or keys from the primitive $\SetupSymb$ and a proof generated by the primitive $\TallySymb$ to verify the election outcome, also from $\TallySymb$, is correct. $\VerifySymb$ outputs \emph{yes} or \emph{no} to answer the question whether election integrity has been kept. 
\end{itemize}
Fortunately these four primitives can be represented as four probabilistic polynomial-time algorithms. We have omitted a couple of system parameters that flow between the primitives, such as the maximum number of ballots, which can easily be made large enough to eliminate all limitations, but these are not crucial for security purposes. We will soon formally detail the specifics of these four primitives, but first we introduce some notation.

\section{Election scheme syntax}\label{sec:syntax} 
In Section~\ref{sec:election} we gave an informal overview of a proposed election scheme, illustrated in Figure~\ref{figure-simplevoting}. The \emph{tallier} is the entity (or group of entities) responsible for generating election keys and computing the final tally from collected ballots. In the schemes we consider, the tallier runs the $\SetupSymb$ and $\TallySymb$ algorithms and possesses the secret key needed for decryption. More sophisticated schemes distribute the tallier's role among multiple parties using threshold encryption (Section~\ref{sec:threshold}), but we defer such extensions to future work.

First, the tallier generates a key pair using $\SetupSymb$. Secondly, each voter constructs and casts a ballot for their vote using $\VoteSymb$. All the cast ballots are collected and recorded on a bulletin board. Third, the tallier  tallies the collected ballots and announces an outcome as a tally of the votes by using $\TallySymb$. The elected candidate is derived from the vote count by using some voting system, such as the  candidate with the most votes. (We note for the first-past-the-post voting systems,  Smyth~\cite{2017-FPTP-suffices-for-ranked-voting} has shown that the syntax can model ranked-choice voting systems too.) Finally, voters and other interested parties check that the election outcome corresponds to the votes expressed in collected ballots by using $\VerifySymb$. We now give specific syntax that captures the four steps of this voting system.
  
\begin{definition}[Election scheme~\cite{Smyth15:ElectionVerifiability}] \label{def:election}
An \emph{election scheme} is a tuple of probabilistic polynomial-time algorithms 
$(\SetupSymb,\allowbreak
	\VoteSymb,\allowbreak\TallySymb,\allowbreak\VerifySymb)$ such that:
	\begin{description}
\item $\SetupSymb$, denoted $(\pk,\sk,\mB,\mC) \leftarrow \Setup$,
is run by the tallier. The algorithm takes a security parameter $\kk$ as input and outputs a key pair $(\pk,\sk)$, a maximum number of ballots $\mB$, 
and a maximum number of candidates $\mC$.

\item  $\VoteSymb$, denoted $b\leftarrow\Vote$, is run by voters. The algorithm
takes as input a public key $\pk$, a voter's vote $v$, some number of candidates 
$\nC$, and a security parameter $\kk$. Vote $v$ should be selected from a
sequence  $1,\dots,\nC$ of candidates.
The algorithm outputs an encrypted ballot $b$ or error symbol~$\perp$. 

\item  $\TallySymb$, denoted $(\outcome, \tpf)\leftarrow\Tally$, is run by the 
tallier. The algorithm takes as input a private key $\sk$, a bulletin board $\bb$ (of encrypted ballots), 
some number of candidates $\nC$, and a security parameter $\kk$. The algorithm outputs an election outcome $\outcome$ and a non-interactive tallying
proof $\tpf$. The election outcome must be a vector of length $\nC$ 
and each index $v$ of that vector should indicate the number of votes for candidate $v$. 
Moreover, the tallying proof should demonstrate that the outcome corresponds to votes 
expressed in ballots on the bulletin board.

\item $\VerifySymb$, denoted $\auditoutcome \leftarrow \Verify$, is run to audit an election.
The algorithm takes as input a public key $\pk$, a bulletin board $\bb$,
some number of candidates $\nC$, an election outcome $\outcome$, a tallying proof 
$\tpf$, and a security parameter $\kk$. The algorithm outputs a bit $\auditoutcome$, which is $1$  
if the 
outcome should be accepted and $0$ otherwise. We require the algorithm to be 
deterministic.

\end{description}
\end{definition}
\noindent
Beyond having proper syntax, election schemes must satisfy \emph{correctness}: the probability of an incorrect election outcome should be negligible. We now state this notion formally.
\begin{definition}[Election correctness]
An election scheme satisfies correctness if there exists a negligible function $\negl$, such that for all security parameters $\kk$, integers 
$\nB$ and $\nC$, and votes $v_1,\dots,v_{\nB}\in\{1,\dots,\nC\}$, it holds that, 
given a zero-filled vector $\outcome$ of length $\nC$, such that:\\
\LinesNotNumbered
{\upshape
\noindent $\Pr[(\pk,\sk,\mB, \mC) \leftarrow \Setup$;\\
\begin{algorithm}[H]
	\For{$1 \leq i \leq \nB$}{
		$b_i \leftarrow \Vote[\pk,v_i,\nC,\kk]$;\\
		$\outcome[v_i] \leftarrow \outcome[v_i] + 1$;
	}
	$(\outcome',\tpf) \leftarrow \Tally[\sk,\{b_1,\dots,b_{\nB}\},\nC, \kk]:$
	$ ((\nB \leq \mB) \wedge (\nC\leq \mC)) \Rightarrow (\outcome = \outcome')] > 1 - \neglK$.
\end{algorithm} 
}
\end{definition}
\noindent
The above definition of correctness means that, given an election setup created by $\Setup$,
for all $\nB$ voters, each voter $i$ casts their vote $v_i$ by using the $\Vote$ primitive, and then the votes are summed up, giving an outcome $\outcome$. Then the $\Tally$ primitive also returns a vote count $\outcome'$ and a proof $\tpf$. Provided the number of ballots $\nB$ is less than the ballot maximum $\mB$ and the number of candidates $\nC$ is less than the candidate maximum $\mC$, we then say the election scheme is correct if the probability of two outcomes  $\outcome$ and $\outcome'$  being the same is greater than one minus a negligible function of the security parameter, meaning the probability of the two outcomes being different is negligible.

The syntax bounds the number of ballots $\mB$ and candidates $\mC$ to broaden the scope of the correctness definition. For example, the voting scheme Helios requires $\mB$ and $\mC$ to be less than or equal to the size of the underlying encryption scheme's message space, and represents votes as integers, rather than alphanumeric strings, for brevity. 

This syntax allows us to model voting systems, while correctness ensures that election outcomes correspond to collected votes when constructed and tallied as prescribed. We will use our syntax to express verifiability and secrecy properties of election schemes. Then we will model and study implemented electronic voting systems.

Figure~\ref{figure-voting} illustrates the data flow between the four algorithms of Definition~\ref{def:election}.

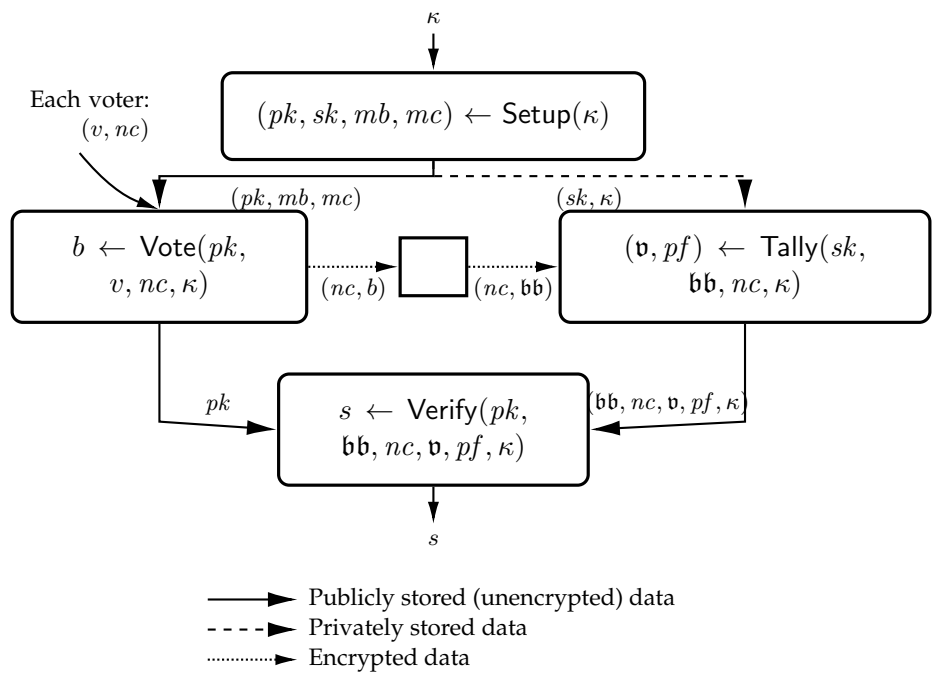
\begin{figure}
\centering
\begin{tikzpicture}[thick]  
\tikzset{
    primitive/.style={rectangle, rounded corners, draw=black, very thick, inner sep=0.8em, minimum size=3em, text centered},
    votestorage/.style={rectangle, draw=black, very thick, inner sep=0.5em, minimum size=2em, text centered},       
    arrowbig/.style={-{Latex[length=4mm,width=2mm]}, thick},
    arrowsmall/.style={-{Latex[length=3mm,width=1.5mm]}, thick},
    mylabel/.style={font=\footnotesize} 
}

\node[primitive, text width=5cm] (setup) {$(\pk,\sk,\mB,\mC) \leftarrow \Setup$};
\node[votestorage, below=1cm of setup, text width=0.5cm] (bboard) {};
\node[primitive, left=1.2cm of bboard, text width=3.3cm] (vote) {$b\leftarrow\Vote$};
\node[primitive, right=1.2cm of bboard, text width=4.3cm] (tally) {$(\outcome, \tpf)\leftarrow\Tally$};
\node[primitive, below=1cm of bboard, text width=3.5cm] (verify) {$\auditoutcome \leftarrow \Verify$};

\draw[arrowbig] (setup.south) -- ++(0,-0.2) -| node[pos=0.25, below, mylabel] {$(\pk, \mB, \mC)$} (vote.north);
\draw[arrowbig, dashed] (setup.south) -- ++(0,-0.2) -| node[pos=0.25, below, mylabel] {$(\sk,\kk)$} (tally.north);

\draw[arrowsmall, densely dotted] (vote.east) -- node[below, mylabel] {$(\nC,b)$} (bboard.west); 
\draw[arrowsmall, densely dotted] (bboard.east) -- node[below, mylabel] {$(\nC,\bb)$} (tally.west); 

\draw[arrowbig] (vote.south) -- ++(0,-1.3) -- (verify.west) node[midway, above, mylabel] {$\pk$};

\draw[arrowbig] (tally.south) -- ++(0,-1.3) -- (verify.east) node[midway, above, mylabel] {$(\bb,\nC,\outcome,\tpf,\kk)$};

\node[above=0.5cm of setup, mylabel] (kappa) {$\kk$};
\node[left=0.8cm of setup, mylabel, text width=2.5cm, align=right] (voters) {Each voter: $(v,\nC)$};
\node[below=0.5cm of verify, mylabel] (s) {$\auditoutcome$};

\draw[arrowsmall] (kappa) -- (setup.north); 
\draw[arrowsmall] (verify.south) -- (s); 
\draw[arrowsmall, bend right=15] (voters) to (vote.north); 

\node[below=1.5cm of verify, anchor=north] (legend) {};
\draw[arrowbig] ([xshift=-3cm]legend.north) -- ++(1.2,0) node[right, font=\footnotesize] {Publicly stored (unencrypted) data};
\draw[arrowbig, dashed] ([xshift=-3cm, yshift=-0.4cm]legend.north) -- ++(1.2,0) node[right, font=\footnotesize] {Privately stored data};
\draw[arrowsmall, densely dotted] ([xshift=-3cm, yshift=-0.8cm]legend.north) -- ++(1.2,0) node[right, font=\footnotesize] {Encrypted data}; 
\end{tikzpicture} 
\medskip
\caption{Detailed view of the election scheme. The four algorithms $\SetupSymb$, $\VoteSymb$, $\TallySymb$, and $\VerifySymb$ are defined in Definition~\ref{def:election}. Dashed arrows indicate secret data (the private key $\sk$); dotted arrows indicate encrypted ballots.}
\label{figure-voting}
\end{figure}	
\section{Voting}\label{sec:voting}

Now that we have formally defined election schemes and their syntax, we can examine the threats these systems face. Electronic voting systems are vulnerable to a range of security attacks that can undermine either ballot secrecy or election integrity. Understanding these attacks is essential for designing robust systems and for appreciating why the formal security properties we define in subsequent sections matter. We categorize attacks by their primary target and mechanism.

\subsection{Attacks}

\subsubsection{Coercion attacks}
In a \emph{coercion attack}~\cite{JCJ05}, an adversary forces a voter to cast a particular vote against their will. The coercer may be present during voting, may demand proof of how the voter voted, or may threaten consequences for non-compliance. Coercion attacks are particularly concerning because they can occur outside the voting system itself, making them difficult to prevent through technical means alone.

Several variants of coercion attacks exist:
\begin{itemize}
\item \emph{Forced abstention}: The coercer prevents a voter from participating in the election entirely, for example by confiscating credentials or physically preventing access to voting.

\item \emph{Randomization attack}: The coercer instructs the voter to use a specific randomization method when choosing their vote, such as flipping a coin. This allows the coercer to manipulate election outcomes statistically without knowing individual votes.

\item \emph{Simulation attack}: The coercer obtains the voter's credentials (such as private keys) after registration but before voting, then casts a vote on behalf of the voter.
\end{itemize}

\subsubsection{Vote-selling}
Closely related to coercion is \emph{vote-selling}, where a voter willingly sells their vote to a buyer. Unlike coercion, vote-selling involves a voluntary transaction, but it similarly undermines the principle that votes should reflect genuine preferences. Vote-selling requires a mechanism for the buyer to verify how the seller voted, which creates tension with ballot secrecy: strong secrecy makes verification difficult, while receipts that enable verification also enable vote-selling.

\subsubsection{Ballot manipulation attacks}
Several attacks target the ballots themselves:

\begin{itemize}
\item \emph{Ballot copying attack}: An adversary copies another voter's encrypted ballot and submits it as their own. If successful, this duplicates the copied voter's choice without the adversary knowing what that choice was. This attack exploits malleable encryption schemes.

\item \emph{Modification attack}: An adversary intercepts a ballot in transit and modifies it, changing the encoded vote. This requires the ability to meaningfully alter ciphertexts, which non-malleable encryption prevents.

\item \emph{Clash attack}~\cite{Kusters12:ClashAttacks}: Two or more voters' ballots produce the same identifier or hash value, making it impossible to distinguish them on the bulletin board. An adversary can exploit collisions to substitute ballots or cause confusion during verification.
\end{itemize}

\subsubsection{Infrastructure attacks}
\begin{itemize}
\item \emph{Man-in-the-middle attack}: An adversary positions themselves between the voter and the voting server, intercepting and potentially modifying communications in both directions. The adversary can alter votes, steal credentials, or present false information to voters.

\item \emph{Brute-force attack}: When the space of possible votes is small (as in most elections with few candidates), an adversary who obtains encrypted ballots may attempt to decrypt them by encrypting all possible votes and comparing the results. Probabilistic encryption defends against this by ensuring identical plaintexts produce different ciphertexts.
\end{itemize}

\subsection{Countermeasures}

Various countermeasures have been developed to defend against the attacks described above.

\subsubsection{Ballot weeding}
\emph{Weeding} is a technique where duplicate or malformed ballots are removed from the bulletin board before tallying. This defends against ballot copying attacks by ensuring that each valid ballot appears exactly once. Weeding can be performed by checking that each voter credential is used at most once, or by detecting duplicate ciphertexts.

\subsubsection{Non-malleable encryption}
Using \emph{non-malleable encryption}~\cite{katz2014introduction} prevents adversaries from meaningfully modifying ciphertexts. If an adversary cannot transform a ciphertext encrypting vote $v$ into a ciphertext encrypting a related vote $v'$, then modification and ballot copying attacks become infeasible.

\subsubsection{Zero-knowledge proofs}
Requiring voters to provide \emph{zero-knowledge proofs} that their ballots are well-formed (encrypting a valid vote) prevents submission of malformed ballots that might exploit vulnerabilities in the tallying process.

\subsubsection{Receipt-freeness}
A voting system is \emph{receipt-free}~\cite{BT94:ReceiptFreeVoting} if voters cannot prove to a third party how they voted. Receipt-freeness defends against both vote-selling (the buyer cannot verify the seller's compliance) and coercion (the voter can claim to have voted as instructed while actually voting freely). Achieving receipt-freeness while maintaining verifiability is challenging, as both properties involve what voters can prove.

\subsubsection{Coercion-resistance}
Stronger than receipt-freeness, \emph{coercion-resistance}~\cite{JCJ05} ensures that even if a coercer actively participates in the voting process (for example, by providing credentials or observing the voter), the voter can still cast their intended vote. This typically requires mechanisms for voters to cast ``fake'' votes that satisfy the coercer but are later replaced or ignored.

\begin{figure}[t]
\centering
\small
\begin{tabular}{@{}p{3.2cm}p{4.2cm}p{4.5cm}@{}}
\hline
\textbf{Attack} & \textbf{Property Violated} & \textbf{Countermeasure} \\
\hline
\multicolumn{3}{@{}l}{\emph{Coercion attacks}} \\[2pt]
\quad Coercion & Coercion-resistance & Coercion-resistant protocols \\
\quad Forced abstention & Coercion-resistance & Credential recovery; deniable voting \\
\quad Randomization & Coercion-resistance & Coercion-resistant protocols \\
\quad Simulation & Coercion-resistance & Credential binding \\[4pt]
\hline
Vote-selling & Receipt-freeness & Receipt-free protocols \\[4pt]
\hline
\multicolumn{3}{@{}l}{\emph{Ballot manipulation attacks}} \\[2pt]
\quad Ballot copying & Ballot secrecy; Non-malleability & Ballot weeding; NM-CPA encryption \\
\quad Modification & Universal verifiability & Non-malleable encryption; ZK proofs \\
\quad Clash attack & Individual verifiability & Collision-resistant hashing \\[4pt]
\hline
\multicolumn{3}{@{}l}{\emph{Infrastructure attacks}} \\[2pt]
\quad Man-in-the-middle & Ballot secrecy; Verifiability & Authenticated channels; ZK proofs \\
\quad Brute-force & Ballot secrecy & Probabilistic encryption \\[2pt]
\hline
\end{tabular}
\caption{Taxonomy of attacks on electronic voting systems, the security properties they violate, and corresponding countermeasures. Properties in the middle column correspond to formal definitions: Ballot secrecy (Section~\ref{sec:secrecy:def}), Universal and Individual verifiability (Section~\ref{sec:verifiability}), and Non-malleability (Section~\ref{sec:security-hierarchy}). Receipt-freeness and coercion-resistance are discussed in Section~\ref{sec:secrecy:outlook}.}
\label{fig:attacks-table}
\end{figure}

\subsection{Attacks and countermeasures summary}
Figure~\ref{fig:attacks-table} summarizes the relationships between attacks, the security properties they violate, and their countermeasures. This taxonomy illustrates why election schemes require multiple security properties working in concert: defending against one class of attacks may leave the system vulnerable to others.

A recurring suggestion is to use blockchain technology to secure elections, for example by recording ballots on a distributed ledger to prevent tampering. However, blockchains do not address the fundamental challenges of ballot secrecy and software independence. As Rivest remarked at the RSA Conference in 2020:
\begin{quote}
``Blockchain is the wrong security technology for voting. I like to think of it as bringing a combination lock to a kitchen fire.''\\
\mbox{}\hfill---Ronald Rivest~\cite{Rivest20:RSAPanel}
\end{quote}
Park, Specter, Narula and Rivest~\cite{Park20:BlockchainVoting} argue that blockchain-based voting would greatly increase the risk of undetectable, nation-scale election failures, and that any convenience gains would come at the cost of losing meaningful assurance that votes have been counted as cast.

\section{Verifiability}\label{sec:verifiability}
To ensure that election outcomes correctly correspond to the votes expressed in collected ballots, the Australian election system relies upon monitoring. 
The mere depositing of ballots into ballot boxes is enough to ensure that they are collected. 
(Though several years ago the state of Western Australia lost more than a thousand ballot papers in a federal election.)

This approach is in direct contrast with electronic election schemes, which compute election outcomes in a manner that should not 
be monitored. Otherwise such monitoring would reveal the tallier's private key, 
compromising the ballot secrecy. Furthermore, just casting a ballot is insufficient
to ensure it is collected, because an adversary may discard or modify ballots. 
Nevertheless, election schemes generate tallying proofs to provide 
evidence that the election outcomes are correctly computed, as well as allowing voters to check whether
their ballot are properly collected. These two concepts are formalized, respectively, by universal 
verifiability and individual verifiability.

Before formalizing these concepts, we briefly survey the broader landscape of verifiability concepts. Researchers have proposed different definitions of verifiability. For example, voting systems may satisfy a stronger notion of universal verifiability: \informalDefinitionUV\ 
that are authorized, except votes cast by the same voter. A related concept is \emph{unforgeability}: only voters can construct authorized ballots. (Previously Smyth~\cite{Smyth15:ElectionVerifiability} used the term \emph{eligibility verifiability} as a synonym for \emph{unforgeability}, but we prefer the latter.)

Verifiability is often decomposed into finer-grained properties~\cite{Cortier16:VerifiabilitySoK}, including \emph{cast-as-intended}, \emph{stored-as-cast}, and \emph{tallied-as-stored} verifiability, which together comprise \emph{end-to-end verifiability}. Beyond verifying that votes are counted correctly, \emph{eligibility verifiability} ensures only authorized voters can cast ballots. In this \manuscript{}, we focus on formalizing individual and universal verifiability as defined below.


\subsection{Universal verifiability}\label{sec:verifiability:uv}
Universal verifiability asserts that anyone must be able to check whether
an election outcome corresponds to the votes expressed in collected ballots. This can be expressed in terms of \emph{reachability}: whether a system, like a game, equipped with a set of rules, can reach a certain state. 

Since checks can be performed by algorithm $\VerifySymb$, it is sufficient that the algorithm accepts if and only if 
the outcome corresponds to votes expressed in the collected ballots. 
The \emph{only if} requirement is captured by \emph{soundness}, which requires 
algorithm $\VerifySymb$ to only accept correct outcomes, and the \emph{if} 
requirement is captured by completeness, which requires election outcomes 
produced by algorithm $\TallySymb$ to be accepted by algorithm $\VerifySymb$.

\paragraph{Soundness.}

Soundness captures the requirement that verification should reject incorrect outcomes. An adversary attempting to announce a fraudulent tally should be unable to produce a proof that passes verification. The formal definition tasks an adversary to produce any inputs, such as public key, bulletin board, outcome, and proof, that cause verification to accept an outcome that does not match the votes actually present in the ballots.

We formalize correct outcomes by using the function $\correcttally$. Recall that the counting quantifier $(\exists^{=\ell} x : P(x))$ expresses that exactly $\ell$ distinct values of $x$ satisfy property $P(x)$. 

Using this predicate, function $\correcttally$ is defined by
\begin{multline*}
\correcttally(\pk,\nC,\bb,\kk)[v] = \ell
\iff \\
  \exists^{=\ell} b\in \bb \setminus \{\bot\} : 
   \exists r : b=\Vote[\pk,\allowbreak v,\allowbreak\nC,\allowbreak\kk;\allowbreak r],
\end{multline*}
where $\correcttally(\pk,\nC,\bb,\kk)$ is a vector of length $\nC$ and 
$1\leq v \leq \nC$, and where $r$ denotes the random coins used by the probabilistic algorithm $\Vote$. We see that component $v$ of vector 
$
  \correcttally(\allowbreak\pk,\allowbreak \nC,\bb,\allowbreak\kk)
$ 
equals $\ell$ if and only if there exist $\ell$ ballots for vote $v$ in the bulletin 
board. The function requires that ballots be interpreted for only one candidate, 
which we can ensure with the concept of \emph{injectivity}.


\begin{definition}[$\Injectivity$~\cite{Smyth15:ElectionVerifiability,Smyth18:HeliosMixnetVerifiability}]
An election scheme $(\SetupSymb,\allowbreak\VoteSymb,\allowbreak\TallySymb,\allowbreak\VerSymb)$ satisfies $\Injectivity$, if for all probabilistic polynomial-time adversaries $\adv$, security parameters $\kk$
and computations 
$
  (\pk,\nC,v,v') \leftarrow \adv(\kk);
  b\leftarrow \Vote[\pk,v,\nC,\kk];\allowbreak
  b'\leftarrow \Vote[\pk,v',\nC,\kk]
$
such that $(v\neq v') \wedge( b\neq \bot )\wedge (b'\neq \bot)$, we have
$b\neq b'$.
\end{definition}

\noindent
We use a definition of injectivity that ensures that a ballot for vote $v$ can never
be interpreted for another vote $v'$, meaning the votes expressed in 
ballots correspond to unique outcomes.

Equipped with a notion of correct outcomes, we can formalize \emph{soundness}.
\begin{definition}[$\Soundness$~\cite{Smyth15:ElectionVerifiability}]\label{def:UVSoundness}
Let $\Gamma = (\SetupSymb,\allowbreak\VoteSymb,\allowbreak\TallySymb,\allowbreak\VerSymb)$
be an election scheme, $\adv$ be an adversary, $\kk$ be a security parameter, and 
$\SoundnessGame$ be the following game.

{\upshape
\begin{inlineexperiment}{$\SoundnessGame$}
$(\pk,\bb,\nC,\outcome,\tpf) \leftarrow \adv(\kk)$\label{alg:SoundnessGame:adv}\;
\Return~$(\VerSymb(\pk,\bb,\nC, \outcome,\tpf,\kk) = 1) \mathrel\wedge (\outcome \not= \correcttally(\pk,\nC,\bb, \kk))$\label{alg:SoundnessGame:ret}\;
\end{inlineexperiment}
}

\noindent
We say election scheme $\Gamma$ satisfies $\Soundness$, if $\Gamma$ satisfies injectivity and 
for all probabilistic polynomial-time adversaries $\adv$, there exists a 
negligible function $\negl$, such that for all security parameters $\kk$, 
we have $\Succ(\SoundnessGame) \leq \neglK$.
\end{definition}

\noindent
In Definition~\ref{def:UVSoundness}, soundness is defined as a game that tasks the adversary to compute inputs to algorithm $\VerifySymb$ (Line~\ref{alg:SoundnessGame:adv}), including an election outcome and some ballots, that cause 
the algorithm to accept when the outcome does not correspond to the votes expressed in 
those ballots (Line~\ref{alg:SoundnessGame:ret}). When algorithm $\VerifySymb$ only accepts election outcomes that correspond to votes expressed in collected ballots, we say that an election scheme satisfies $\Soundness$. 

\begin{mdframed}[style=guidebox]
\begin{guideline}
Verification must only accept outcomes that correspond to votes expressed in
collected ballots.
\end{guideline}
\end{mdframed}

\paragraph{Completeness.}

Completeness is the dual of soundness: it ensures that honestly-computed outcomes are always accepted. Without completeness, a correct election could be rejected, allowing a malicious party to void legitimate results. The formal definition tasks an adversary to produce a bulletin board for which the honestly-computed tally fails verification.

\begin{definition}[$\Completeness$~\cite{Smyth15:ElectionVerifiability}]\label{def:UVComplete}
\begin{sloppypar}
Let $\Gamma = (\SetupSymb,\allowbreak\VoteSymb,\allowbreak\TallySymb,\allowbreak\VerSymb)$ 
be an election scheme, $\adv$ be an adversary, $\kk$ be a security parameter, and 
$\CompletenessGame$ be the following game.
\end{sloppypar}

{\upshape
\begin{inlineexperiment}{$\CompletenessGame$}
  $(\pk,\allowbreak\sk,\allowbreak\mB,\allowbreak \mC) \leftarrow\allowbreak \Setup$\label{alg:CompletenessGame:setup}\;
  $(\bb,\nC)\leftarrow\adv(\pk,\kk)$\label{alg:CompletenessGame:adv}\;
  $(\outcome,\allowbreak\tpf) \leftarrow\allowbreak \Tally$\label{alg:CompletenessGame:tally}\;
  \Return $(\VerSymb(\pk,\allowbreak\bb,\allowbreak\nC,\allowbreak\outcome,\allowbreak\tpf,\allowbreak\kk) \not=\allowbreak 1) \wedge (|\bb| \leq \mB) \wedge\allowbreak (\nC\leq \mC)\allowbreak$\label{alg:CompletenessGame:ret}\;
\end{inlineexperiment}
}

\noindent
We say election scheme $\Gamma$ satisfies $\Completeness$, if for all probabilistic polynomial-time 
adversaries $\adv$, there exists a negligible function $\negl$, such that 
for all security parameters $\kk$, we have
$
\Succ(\CompletenessGame) \leq \neglK
$.
\end{definition}

\noindent
In Definition~\ref{def:UVComplete} we formalize completeness as a game that, given the key pair is computed by
algorithm $\SetupSymb$ (Line~\ref{alg:CompletenessGame:setup}), tasks the adversary to compute a bulletin board and some number of candidates (Line~\ref{alg:CompletenessGame:adv}) such that the corresponding election outcome computed by algorithm $\TallySymb$ (Line~\ref{alg:CompletenessGame:tally}) is rejected by algorithm
$\VerifySymb$ (Line~\ref{alg:CompletenessGame:ret}).

An election scheme satisfies $\Completeness$ when algorithm $\VerifySymb$ accepts 
outcomes computed by algorithm $\TallySymb$, for key pairs computed by algorithm $\SetupSymb$.
It follows that completeness implies an aspect of accountability. Indeed, if verification fails, then
the tallier is responsible for that failure, in particular, they must have incorrectly computed 
their key pair or the election outcome.

\begin{mdframed}[style=guidebox]
\begin{guideline}
Tallying must produce outcomes that will be accepted during verification.
\end{guideline}
\end{mdframed}

\noindent
We formalize universal verifiability by combining the above notions.

\begin{definition}[$\UV$~\cite{Smyth15:ElectionVerifiability,Smyth18:HeliosMixnetVerifiability}]\label{def:UV}
An election scheme $\Gamma$ satisfies $\UV$, if 
$\Soundness$ and $\Completeness$ are satisfied.
\end{definition}

\subsection{Individual verifiability}\label{sec:verifiability:iv}

Individual verifiability ensures that voters can confirm their ballot appears in the election record. Without this property, a malicious system could silently discard votes. The key challenge is that merely seeing a ballot on the bulletin board is insufficient. Rather, a voter must be confident it is \emph{their} ballot, not a coincidentally identical one constructed by someone else. This requires ballots to be uniquely identifiable.

Individual verifiability asserts that voter Alice must be able to check whether
her ballot is among those that have been collected. Given that ballots should be collected and recorded on a bulletin board, which must be available to everyone, then it is enough for voters to be able to check that their ballots (which they constructed) are on the bulletin board. This means that voters must be able to check that their ballots have not been omitted from
the bulletin board. Nevertheless, this alone is not sufficient, because the existence of
a ballot identical to an individual voter's ballot does not imply that the ballot constructed by the voter exists. In other words, if a ballot identical to voter Alice's exists, it does not mean Alice's vote exists, because Alice's ballot might have been constructed by another voter.

Consequently, individual verifiability requires that voters must be able to uniquely identify their ballots, so that the ballots have not collided. We now give a definition of individual verifiability that captures this requirement.
\begin{definition}[Individual verifiability~\cite{Smyth15:ElectionVerifiability}]\label{def:iv}
Let $\Gamma = (\SetupSymb,\allowbreak\VoteSymb,\allowbreak\TallySymb,\allowbreak\VerSymb)$ be an election scheme, $\adv$ be an adversary, $\kk$ be a security parameter, and $\IVGame$ be the following game.

{\upshape
\begin{inlineexperiment}{$\IVGame$}
  $(\pk, \nC,v,v') \leftarrow \adv(\kk)$\label{alg:IVGame:adv}\;
  $b\leftarrow\Vote[\pk,\nC,v,\kk]$\label{alg:IVGame:ballot}\;
  $b'\leftarrow\Vote[\pk,\nC,v',\kk]$\label{alg:IVGame:ballotB}\;
  \Return $(b = b') \wedge (b \not= {\perp} ) \wedge (b' \not= {\perp}) $\label{alg:IVGame:ret}\;
\end{inlineexperiment}
}

\noindent
We say election scheme $\Gamma$ satisfies $\IV$, if for all probabilistic polynomial-time 
adversaries $\adv$, there exists a negligible function $\negl$, such that 
for all security parameters $\kk$, the inequality $$\Succ(\IVGame)\leq \neglK \,,$$
holds.
\end{definition}

\noindent
In Definition~\ref{def:iv}, we formalize individual verifiability as a game that tasks the adversary to compute inputs to algorithm $\VoteSymb$ (Line~\ref{alg:IVGame:adv}) that cause the algorithm to output ballots (Lines~\ref{alg:IVGame:ballot} \&~\ref{alg:IVGame:ballotB})
that collide (Line~\ref{alg:IVGame:ret}).

When algorithm $\VoteSymb$ generates uniquely identifiable ballots, meaning ballots that do not collide, then we say an election scheme satisfies $\IV$.
Elaborating further, the respective concepts of correctness, individual verifiability and injectivity all hinge upon ballots not colliding, but they do so under different assumptions. 
Correctness requires that ballots do not collide, with overwhelming 
probability, for public keys computed by algorithm $\SetupSymb$; 
$\Injectivity$ requires that ballots for distinct votes never collide; 
and $\IV$ requires that ballots do not collide with overwhelming 
probability. Consequently, $\IV$ implies that ballots do not collide in the 
context of correctness. But $\IV$ and $\Injectivity$ are in a sense orthogonal to each other.
Specifically, $\IV$ allows collisions with negligible probability, whereas $\Injectivity$ allows collisions between ballots for the same vote.

The recurring requirement of no ballot collisions motivates another design guideline.
\begin{mdframed}[style=guidebox]
\begin{guideline}
Ballots must be distinct.
\end{guideline}
\end{mdframed}

\section{Ballot secrecy}\label{sec:secrecy}

As noted in Section~\ref{sec:intro}, we prefer the term \emph{ballot secrecy} over \emph{privacy} to avoid confusion with other privacy notions.

The Australian system relies upon uniform ballots to ensure voters' 
votes are not revealed. This uniformity ensures that ballots are 
indistinguishable during distribution, whereas the isolated polling 
booths ensures votes are not revealed while marking. Folded ballots are indistinguishable during collection and indistinguishability of markings can ensure votes are not revealed while tallying. In short, the Australian system derives ballot secrecy from physical characteristics
of the world. By comparison, electronic election schemes cannot rely on such physical
characteristics, so they must rely on cryptography to ensure voters' votes are not revealed. 
(Admittedly, some electronic voting systems do rely upon physical characteristics. For instance, MarkPledge~\cite{Neff04:MarkPledge}, Pret \`{a} Voter~\cite{ChaumRyanSchneider2005},  
 Remotegrity~\cite{Zagorski13}, Scantegrity II~\cite{Chaum08II} and 
  Three Ballot~\cite{Rivest07:ThreeBallot}  
  all use features implemented with paper, such as scratch-off surfaces and detachable columns.
  But these systems fall outside the scope of our election scheme
  syntax.) 

Some scenarios inevitably reveal voters'
votes: Unanimous election outcomes reveal how everyone voted and, more generally,
outcomes can be coupled with partial knowledge 
of voters' votes to deduce voters' votes.
For example, suppose Alice, Bob and Mallory participate in a referendum and the outcome has frequency two for `yes' and
one for `no.' Mallory and Alice can deduce Bob's vote by pooling knowledge of their
own votes. Similarly, Mallory and Bob can deduce Alice's vote. Furthermore, Mallory
can deduce that Alice and Bob both voted yes, if she voted no. For simplicity,
our informal definition of ballot secrecy (\S\ref{sec:intro}) deliberately omitted
side-conditions that exclude these inevitable revelations and that are necessary for satisfiability.
We now refine that definition as follows:
\begin{quote}
A voter's vote is not revealed to anyone, except when  the vote can be deduced from the election outcome and any partial knowledge of voters' votes.
\end{quote}

\noindent
This refinement ensures the aforementioned examples are not violations of
ballot secrecy.  By comparison, if Mallory votes yes and she can deduce the vote of
Alice, without knowledge of Bob's vote, then ballot secrecy is violated.

\subsection{Security definition}
\label{sec:secrecy:def}
The core idea behind ballot secrecy is
an \emph{indistinguishability test}. Imagine two parallel elections
that are identical in every way, except that two voters, say Alice and
Bob, swap their votes. In one election Alice votes for candidate~$1$
and Bob for candidate~$2$; in the other, Alice votes for candidate~$2$
and Bob for candidate~$1$. Every other voter casts the same vote in
both elections. Because the same collection of votes is cast in total,
the election outcome is identical in both cases.

Then we ask, can anyone, by examining the encrypted ballots, the
tally, and the proof, determine which of the two elections actually
took place? If not, then the scheme keeps ballots secret: no one can
link a specific vote to a specific voter, even with access to all
public election data. The formal game below captures exactly this: a
secret coin flip determines which election is run, an adversary sees
everything that is made public, and ballot secrecy holds if no
efficient adversary (meaning a probabilistic polynomial-time one) can guess the coin with probability meaningfully
better than~$\frac{1}{2}$.

We formalize ballot secrecy (Definition~\ref{def:ballotSecrecy}) as a game  that tasks the 
adversary to: select two lists of votes; construct a bulletin board from ballots for 
votes in one of those lists, which list is decided by a coin flip; and (non-trivially) determine the result of the coin flip from the resulting election outcome and tallying proof. That is, the game tasks 
the adversary to distinguish between an instance of the voting system for one list of votes, 
from another instance with the other list of votes, when the votes cast from each list
are permutations of each other (hence, the distinction is non-trivial). 
The game proceeds as follows: The challenger generates a key pair (Line~\ref{alg:BallotSecrecyGame:setup}), the adversary chooses some number of candidates 
(Line~\ref{alg:BallotSecrecyGame:adv:nc}), and the challenger flips a coin
(Line~\ref{alg:BallotSecrecyGame:beta}) and initialises a set to record lists 
of votes (Line~\ref{alg:BallotSecrecyGame:L}).
The adversary computes a bulletin board from ballots for votes in one of two possible lists
(Line~\ref{alg:BallotSecrecyGame:adv:bb}), where the lists are chosen by the 
adversary, the choice between lists is determined by the coin flip, and the 
ballots (for votes in one of the lists) are 
constructed by an oracle (further ballots may be constructed by
the adversary).
The challenger tallies the bulletin board to derive the election outcome and tallying 
proof (Line~\ref{alg:BallotSecrecyGame:tally}), which are given to the adversary
and the adversary is tasked with determining the result of the coin flip
(Line~\ref{alg:BallotSecrecyGame:adv:guess} \& \ref{alg:BallotSecrecyGame:ret}). 

The formal definition includes a \emph{balanced condition} that prevents trivial attacks. An adversary querying the oracle with input $(1, 2)$ receives a ballot $b$ that encrypts either vote~$1$ or vote~$2$, depending on the secret bit $\beta$. Placing only this ballot on the bulletin board would let the adversary win by simply observing whether candidate~$1$ or~$2$ received a vote. The balanced condition excludes such trivial wins by requiring that the votes for $\beta = 0$ be a permutation of the votes for $\beta = 1$, ensuring both cases produce the same election outcome.

As an example, an adversary making queries $(1,2)$, $(2,1)$, and $(3,3)$ receives ballots $b_1$, $b_2$, $b_3$. A bulletin board containing all three is balanced: when $\beta=0$ the votes are $1,2,3$, and when $\beta=1$ they are $2,1,3$---both yield one vote per candidate. The adversary must exploit some weakness in the voting scheme itself to win, not merely observe the outcome. Omitting $b_2$ and $b_3$ would make the board unbalanced (one vote for candidate~$1$ versus one for candidate~$2$), and such configurations are excluded.

\begin{definition}[$\BallotSecrecy$~\cite{Smyth15:BallotSecrecyFull}]\label{def:ballotSecrecy}
Let $\Gamma = (\SetupSymb,\VoteSymb,\TallySymb,\VerifySymb)$ be an election scheme, 
$\adv$ be an adversary, $\kk$ be a security parameter, and $\BallotSecrecyGame$ be 
the following game.

{\upshape
\begin{inlineexperiment}{$\BallotSecrecyGame$}
$(\pk,\sk,\mB,\mC) \leftarrow \Setup$\label{alg:BallotSecrecyGame:setup}\;
$\nC \leftarrow \adv(\pk,\kk)$\label{alg:BallotSecrecyGame:adv:nc}\;
$\beta \leftarrow_R \{0,1\}$\label{alg:BallotSecrecyGame:beta}\;
$L \leftarrow \emptyset$\label{alg:BallotSecrecyGame:L}\;
$\bb \leftarrow \adv^{\oracle}()$\label{alg:BallotSecrecyGame:adv:bb}\;
$(\outcome, \tpf) \leftarrow \Tally$\label{alg:BallotSecrecyGame:tally}\;
$g\leftarrow \adv(\outcome, \tpf)$\label{alg:BallotSecrecyGame:adv:guess}\;
\Return $(g = \beta )\wedge\allowbreak \balanced(\bb,\nC,L) 
	 \wedge\allowbreak (1\leq\nC\leq \mC )\wedge\allowbreak (|\bb| \leq \mB)	$\label{alg:BallotSecrecyGame:ret}\;
\end{inlineexperiment}
}

\noindent
Predicate $\balanced(\bb,\nC,L)$ holds when: for all votes  
$v\in\{1,\dots,\nC\}$ we have 
$$|\{ b \mid b \in \bb \wedge \exists v_1 \mathrel. (b,v,v_1) \in L\}|
  = |\{b \mid b \in \bb \wedge \exists v_0 \mathrel. (b,v_0,v) \in L\}|\, .$$ Oracle $\oracle$ is defined as follows:

\begin{itemize}
\item $\oracle{(v_0,v_1)}$ computes 
$b\leftarrow \Vote[\pk,v_\beta,\nC,\kk];  L \leftarrow L \cup \{ (b, v_0, v_1)\}$ and 
outputs $b$, where $v_0,v_1\in \{1, ...,\nC \}$.
\end{itemize}

\begin{sloppypar}
\noindent
We say election scheme $\Gamma$ satisfies $\BallotSecrecy$, if for all 
probabilistic polynomial-time 
adversaries $\adv$, there exists a negligible 
function 
$\negl$, such that for all security parameters $\kk$, the inequality $$\Succ(\BallotSecrecyGame) \leq 
\frac{1}{2} + \neglK\,,$$ holds.
\end{sloppypar}
\end{definition} 

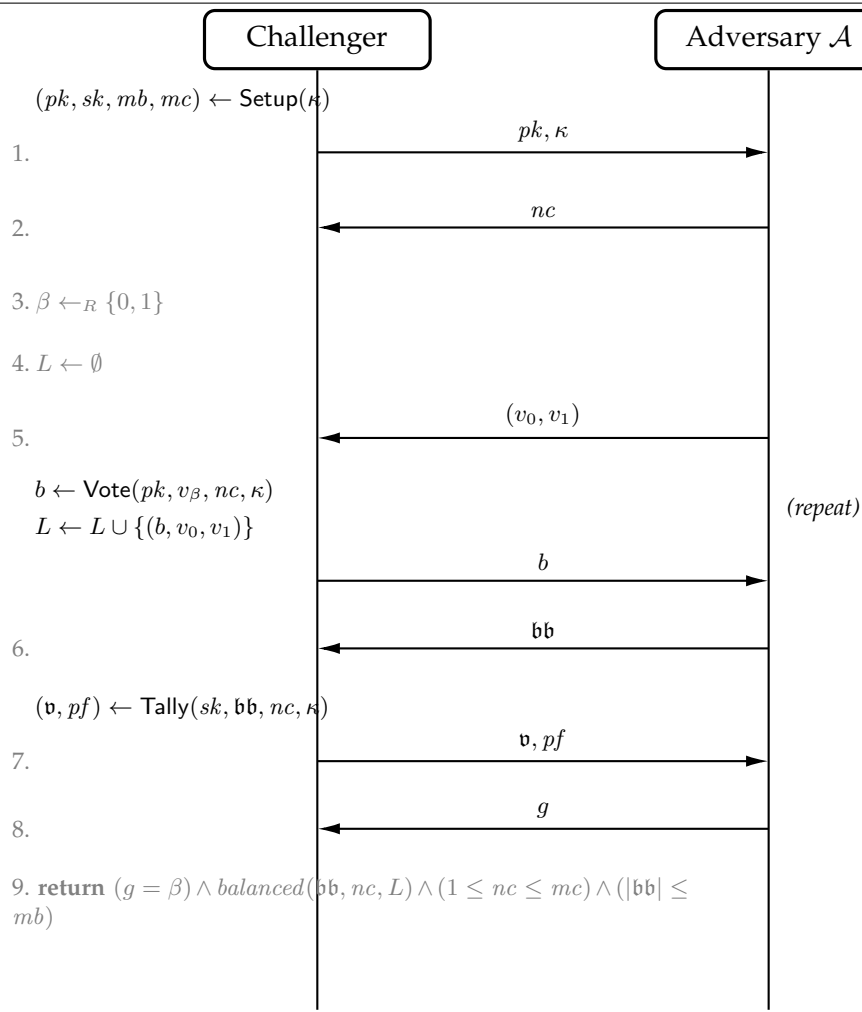
\begin{figure}
\centering
\begin{tikzpicture}[thick]
\tikzset{
    entity/.style={rectangle, rounded corners, draw=black, very thick, inner sep=0.5em, minimum width=3cm, minimum height=1.5em, text centered},
    arrowmsg/.style={-{Latex[length=3mm,width=1.5mm]}, thick},
    msglabel/.style={font=\footnotesize},
    steplabel/.style={font=\footnotesize, text=gray, anchor=west}
}

\node[entity] (chall) at (-3, 0) {Challenger};
\node[entity] (adv) at (3, 0) {Adversary $\adv$};

\draw[thick] (chall.south) -- ++(0, -12.5);
\draw[thick] (adv.south) -- ++(0, -12.5);

\node[msglabel, anchor=west] at (-6.9, -0.8) {$(\pk, \sk, \mB, \mC) \leftarrow \Setup$};
\node[steplabel] at (-7.2, -1.5) {1.};
\draw[arrowmsg] (-3, -1.5) -- node[above, msglabel] {$\pk, \kk$} (3, -1.5);

\node[steplabel] at (-7.2, -2.5) {2.};
\draw[arrowmsg] (3, -2.5) -- node[above, msglabel] {$\nC$} (-3, -2.5);

\node[steplabel] at (-7.2, -3.5) {3.\ $\beta \leftarrow_R \{0,1\}$};

\node[steplabel] at (-7.2, -4.3) {4.\ $L \leftarrow \emptyset$};

\node[steplabel] at (-7.2, -5.3) {5.};
\draw[arrowmsg] (3, -5.3) -- node[above, msglabel] {$(v_0, v_1)$} (-3, -5.3);
\node[msglabel, anchor=west] at (-6.9, -6.0) {$b \leftarrow \VoteSymb(\pk, v_\beta, \nC, \kk)$};
\node[msglabel, anchor=west] at (-6.9, -6.5) {$L \leftarrow L \cup \{(b, v_0, v_1)\}$};
\draw[arrowmsg] (-3, -7.2) -- node[above, msglabel] {$b$} (3, -7.2);
\node[msglabel, anchor=west] at (3.1, -6.25) {\textit{(repeat)}};

\node[steplabel] at (-7.2, -8.1) {6.};
\draw[arrowmsg] (3, -8.1) -- node[above, msglabel] {$\bb$} (-3, -8.1);

\node[msglabel, anchor=west] at (-6.9, -8.9) {$(\outcome, \tpf) \leftarrow \TallySymb(\sk, \bb, \nC, \kk)$};
\node[steplabel] at (-7.2, -9.6) {7.};
\draw[arrowmsg] (-3, -9.6) -- node[above, msglabel] {$\outcome, \tpf$} (3, -9.6);

\node[steplabel] at (-7.2, -10.5) {8.};
\draw[arrowmsg] (3, -10.5) -- node[above, msglabel] {$g$} (-3, -10.5);

\node[steplabel, text width=9cm] at (-7.2, -11.5) {9.\ $\Return\ (g = \beta) \land \balanced(\bb, \nC, L) \land (1 \leq \nC \leq \mC) \land (|\bb| \leq \mB)$};

\end{tikzpicture}
\medskip
\caption{The ballot secrecy game (Definition~\ref{def:ballotSecrecy}). The challenger generates keys and samples a secret bit $\beta$. The adversary repeatedly queries an oracle with vote pairs $(v_0, v_1)$ and receives a ballot encrypting $v_\beta$; the set $L$ records these queries. After submitting a bulletin board $\bb$, the adversary receives the tally and proof, then guesses $\beta$. The adversary wins if the guess is correct and the bulletin board is balanced (the multiset of left votes equals the multiset of right votes).}
\label{fig:ballot-secrecy-game}
\end{figure}

\noindent
Figure~\ref{fig:ballot-secrecy-game} illustrates the interactions between challenger and adversary in the ballot secrecy game.
An election scheme delivers ballot secrecy when the voting algorithm $\VoteSymb$ produces ballots that conceal voters' choices, and when the tallying algorithm $\TallySymb$ generates outcomes and proofs without exposing how individual ballots map to the final result.
In the game $\BallotSecrecy$, the adversary must construct a bulletin board using ballots that an oracle creates for votes from one of two possible lists. The adversary then attempts to identify which list was chosen, based solely on the election outcome and tallying proof derived from that board.
The choice between lists is determined by the result $\beta$ 
of a coin flip, and the oracle outputs a ballot for vote $v_\beta$ on input of a pair of votes
$v_0,v_1$.
Hence, the  oracle constructs ballots for votes in one of two possible lists,
where the lists are chosen by the adversary, and the bulletin board may 
contain those ballots in addition to 
ballots constructed by the adversary. 

Election schemes reveal the number of votes for each candidate, meaning the election outcome).
To avoid trivial distinctions in game $\BallotSecrecy$,
 we require that runs of the game are \emph{balanced}: inputs to the  oracle for $\beta = 0$ are equivalent to inputs for $\beta = 1$, when the corresponding 
outputs appear on the bulletin board.
For example, suppose the inputs to the oracle 
are $(v_{1,0},v_{1,1}),\dots, (v_{n,0},v_{n,1})$ and the corresponding
outputs are $b_1,\dots,b_n$, further suppose the bulletin board 
is $\{b_1,\dots,b_\ell\}$ such that $\ell \leq n$.
That game is balanced if the inputs for $\beta = 0$, namely $v_{1,0},\dots,v_{\ell,0}$, 
are a permutation of the inputs for $\beta = 1$, namely $v_{1,1},\dots,v_{\ell,1}$.
To avoid trivial wins, we require the balanced condition. Consider an adversary who constructs a bulletin board with only the ballot from oracle query $(1, 2)$. Although such an adversary could easily identify $\beta$ by observing whether candidate $1$ or $2$ received a vote, this configuration violates the balanced condition and therefore doesn't count as a valid win. (We leave formally defining a winning adversary as an exercise.)

We can easily see the intuition behind the game $\BallotSecrecy$. A winning adversary possesses a method for distinguishing ballots. Such an adversary can tell apart a voting system instance where voters submit certain votes from another instance where they submit a permutation of those same votes—thereby exposing voters' choices. Otherwise, the adversary is unable to distinguish between a voter casting a ballot for vote $v_0$ 
and another voter casting a ballot for vote $v_1$, hence, voters' votes cannot be 
revealed. 
\subsection{Example}
\label{sec:secrecy:example}
We outlined asymmetric encryption schemes in 
Section~\ref{sec:asym}, while more precise details are given in Section~\ref{sec:asym} of the appendix. Using an asymmetric scheme, we now
model our $\encToVoteSymb$ voting system (\S\ref{sec:intro}) as the 
following election scheme.

\begin{definition}[$\encToVoteSymb$\cite{Smyth15:BallotSecrecyFull}]\label{def:mv}
Given an asymmetric encryption scheme 
$\Pi = (\GenSymb,\allowbreak\EncSymb,\allowbreak\DecSymb)$, 
we define $\encToVote =\allowbreak (\SetupSymb,\allowbreak\VoteSymb,\allowbreak\TallySymb,\allowbreak\VerifySymb)$ 
such that:

\begin{itemize}
\item $\Setup$ computes $(\pk,\sk,\votespace)\leftarrow\Gen;\pk'\leftarrow(\pk,\votespace);\allowbreak\sk'\leftarrow(\pk,\sk)$, 
derives $\mC$ as the largest integer such that $\{0,\dots,\mC\}\subseteq\{0\} \cup\votespace$
and for all $m_0,m_1\in\{1,\dots,\mC\}$ we have $|m_0|=|m_1|$,
and outputs $(\pk',\allowbreak\sk',\allowbreak p(\kk),\mC)$, 
where $p$ is a polynomial function.

\item $\Vote[\pk',v,\nC,\kk]$ parses $\pk'$ as pair $(\pk,\votespace)$, 
outputting $\perp$ if parsing fails or
$(v\not\in\{1,\dots,\nC\}) \vee (\{1,\allowbreak\dots,\allowbreak\nC\}\not\subseteq\votespace)$,
computes $b\leftarrow\Enc{v}$, and outputs $b$.

\item $\Tally[\sk',\bb,\nC,\kk]$ initialises $\outcome$ as a zero-filled vector of length $\nC$,
parses $\sk'$ as pair $(\pk,\sk)$, outputting $(\outcome,\perp)$ if parsing fails,  
computes 
{ \DontPrintSemicolon
\llFor{$b\in\bb$}{%
$v\leftarrow\Dec{b};$
\llIf{$1\leq v\leq \nC$}{$\outcome[v] \leftarrow \outcome[v]+1$}%
}%
}, and outputs 
$(\outcome,\epsilon)$, where $\epsilon$ is a constant symbol.

\item $\Verify$ outputs 1. 
\end{itemize}

\end{definition}

Note that $\pk'$ and $\sk'$ have special structure in this construction: the election's public key $\pk' = (\pk, \votespace)$ bundles the encryption public key with the message space, while the election's private key $\sk' = (\pk, \sk)$ bundles both the encryption public key and private key.

\noindent
To ensure $\encToVote$ is an election scheme, we require asymmetric
encryption scheme $\Pi$ to produce distinct ciphertexts with overwhelming
probability, otherwise correctness cannot be satisfied, as the following
lemma demonstrates.

\begin{lemma}\label{lemma:enc2vote:correctnessIssue}
\begin{sloppypar}
There exists an asymmetric encryption scheme $\Pi$ such that 
$\encToVote$ is not an election scheme. 
\end{sloppypar}
\end{lemma}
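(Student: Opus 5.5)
The plan is to exhibit a \emph{deterministic} asymmetric encryption scheme $\Pi$. It satisfies the syntax and correctness of asymmetric encryption (the appendix definition), yet makes two honest ballots for the same vote identical. Because $\TallySymb$ in $\encToVote$ is applied to the \emph{set} $\{b_1,\dots,b_{\nB}\}$, identical ballots collapse into one element and are counted once. This breaks election correctness, so $\encToVote$ is not an election scheme.

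\textbf{Construction.} Take the trivial scheme:
\begin{itemize}
\item $\Gen$ outputs a fixed $\pk$, a fixed $\sk$, and a message space $\votespace=\{1,2\}$ with both messages encoded at the same bit length;
\item $\Enc{m}=m$;
\item $\Dec{c}=c$.
\end{itemize}
Clearly $\Dec{\Enc{m}}=m$ for all $m\in\votespace$, so $\Pi$ is a (completely insecure but correct) asymmetric encryption scheme. Security is not required by the statement. For this $\Pi$, $\Setup$ derives $\mC=2$, since $\{0,1,2\}\subseteq\{0\}\cup\votespace$ and the messages have equal length. It also outputs $\mB=p(\kk)$.

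\textbf{Violating correctness.} Choose $\nC=1$, $\nB=2$ and votes $v_1=v_2=1$. Then $\Vote$ outputs $b_1=b_2=1$ with probability $1$, so the bulletin board passed to $\TallySymb$ is the singleton $\{1\}$. $\TallySymb$ decrypts this single ballot and returns $\outcome'=(1)$. Meanwhile the reference vector computed in the correctness definition is $\outcome=(2)$. Provided $\nB\le\mB$ and $\nC\le\mC$, the correctness event therefore has probability $0$ rather than $>1-\neglK$, so correctness fails.

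\textbf{Expected obstacle.} The only delicate point is ensuring the premise $(\nB\le\mB)\wedge(\nC\le\mC)$ actually holds, so that the implication in the correctness definition is not vacuously true. The condition $\nC=1\le 2=\mC$ is immediate. For $\nB=2\le p(\kk)$, we use that correctness must hold for \emph{all} security parameters $\kk$, so it suffices to pick one $\kk$ with $p(\kk)\ge 2$; this exists for any non-constant polynomial $p$, or any $p$ with value at least $2$. If one wants to avoid relying on $p$, take $\nB=p(\kk)$ identical votes (for $p(\kk)\ge 2$) and argue the same way: the set collapses to one ballot and the tally is $1\neq p(\kk)$.
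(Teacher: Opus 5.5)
Your construction and argument are essentially the paper's. The identity scheme $\EncSymb(\pk,m)=m$, $\DecSymb(\sk,c)=c$ makes two honest ballots for the same vote collide. $\TallySymb$ then runs over the set $\{b_1,b_2\}=\{b_1\}$, so the outcome records one vote instead of two. Choosing $\nC=1$ is a nice touch, since it makes $\nC\le\mC$ hold however the equal-length condition in $\SetupSymb$ is read.

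The one step that fails as written is the claim that ``it suffices to pick one $\kk$ with $p(\kk)\ge 2$.''

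\begin{itemize}
\item \textbf{Why one $\kk$ is not enough.} Correctness asks for \emph{some} negligible $\negl$ with $\Pr[\cdots]>1-\negl(\kk)$ for \emph{all} $\kk$. Negligibility constrains $\negl$ only asymptotically, so $\negl$ may exceed $1$ at finitely many points. Failure probability $0$ at a single $\kk_0$ is therefore consistent with correctness: take $\negl(\kk_0)=2$.
\item \textbf{What you need instead.} The failure must occur on an unbounded set of security parameters.
\item \textbf{Your construction already provides it.} The probability is $0$ at \emph{every} $\kk$ with $p(\kk)\ge 2$. That is all sufficiently large $\kk$ when $p$ has positive leading coefficient or is a constant $\ge 2$; ``non-constant'' alone is not enough. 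Every negligible function is eventually below $1$. So for each candidate $\negl$, some such $\kk$ gives $0\le 1-\negl(\kk)$, which refutes correctness.
\end{itemize}

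The paper sidesteps this by simply assuming $2\le\mB$. With this repair, your handling of the premise $(\nB\le\mB)\wedge(\nC\le\mC)$ is more careful than the paper's.
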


\noindent
To prove our lemma, we show that colliding ciphertexts suffice to ensure
that $\encToVoteSymb$ cannot satisfy the correctness property 
of Definition~\ref{def:election}. 

\begin{proof}
Let $\encToVote = (\SetupSymb,\allowbreak	\VoteSymb,\allowbreak\TallySymb,\allowbreak\VerifySymb)$.
Suppose $(\pk',\allowbreak\sk',\allowbreak\mB,\allowbreak \mC)$ is an 
output of $\Setup$ and $b$ and $b'$ are outputs of $\Vote[\pk',v,\nC,\kk]$
such that $(2 \leq \mB ) \wedge (1\leq v \leq \nC \leq \mC)$, where
$\kk$ is a security parameter. Further suppose $\outcome$ is 
a zero-filled vector of length $\nC$, except for index $v$ which contains
the value $2$. 
Moreover, suppose $(\outcome',\tpf)$ is an output of 
$\Tally[\sk',\{b,b'\},\nC, \kk]$. If $b$ and $b'$ collide, then outcome 
$\outcome'$ is computed from the set $\{b,b'\} = \{b\}$, therefore, 
the correct outcome cannot have been computed, implying
$\outcome \not= \outcome'$ with a non-negligible probability, so correctness is not satisfied.
By definition of algorithm $\SetupSymb$, $\pk'$ is a pair, and, 
by definition of algorithm $\VoteSymb$, $b$ and $b'$ are ciphertexts
on plaintext $v$. Consequently, we now need to show that asymmetric encryption schemes 
can produce ciphertexts that collide. Indeed, they can: Consider an encryption scheme
$\Pi = (\GenSymb,\EncSymb,\allowbreak\DecSymb)$ such that $\Enc{m}$ outputs $m$ 
and $\Dec{c}$ outputs $c$. Although $\Pi$ is clearly not secure, it 
is straightforward to see that $\Pi$ satisfies correctness, because 
$\Dec{\Enc{m;r}} = m$ for all key pairs $(\pk,\sk)$, plaintexts $m$, and 
coins $r$.
\end{proof}

\noindent 
It follows from Lemma~\ref{lemma:enc2vote:correctnessIssue} that we must 
restrict the class of asymmetric encryption schemes used to 
instantiate $\encToVoteSymb$. We could consider a broad class of 
schemes that produce distinct ciphertexts with overwhelming
probability, but we favour the narrower class of non-malleable 
schemes, since we require non-malleability for ballot secrecy.
For non-malleability in general, the definitions are complex and proofs are relatively difficult. This motivates us to adopt the definition of indistinguishability 
under parallel attack ($\INDPA$) by Bellare \& Sahai~\cite{Bellare99:IND-k-CPA},
which is simpler, yet equivalent to their definition of comparison based 
non-malleability ($\CNMCPA$). We recall the definition of $\INDPA$ in Section~\ref{sec:INDPA} of
the appendix.

\begin{lemma}
If asymmetric encryption scheme $\Pi$ satisfies $\INDPA$, then
$\encToVote$ is an election scheme. 
\end{lemma}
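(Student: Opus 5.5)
To show that $\encToVote$ is an election scheme, I would check two things: the four algorithms have the syntax and efficiency demanded by Definition~\ref{def:election}, and the scheme satisfies election correctness. Syntax is routine. $\SetupSymb$ runs $\GenSymb$ and outputs the pairs $\pk'$ and $\sk'$, the polynomial bound $p(\kk)$ and $\mC$. $\VoteSymb$ makes one call to $\EncSymb$. $\TallySymb$ decrypts $|\bb|$ ciphertexts and outputs a vector of length $\nC$. $\VerifySymb$ is constant, hence deterministic. All of these are probabilistic polynomial-time, given that $\Pi$'s algorithms are, and that $\mC$ is efficiently derivable from $\votespace$, which I will take as part of $\Pi$'s syntax in the appendix.

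For correctness, fix $\kk$, $\nB$, $\nC$ and votes $v_1,\dots,v_{\nB}$ with $\nB\le\mB$ and $\nC\le\mC$. Since $\nC\le\mC$, we have $\{1,\dots,\nC\}\subseteq\votespace$, so each $\VoteSymb$ call outputs $b_i=\Enc{v_i}$ rather than $\perp$. By correctness of $\Pi$, $\Dec{b_i}=v_i\in\{1,\dots,\nC\}$. Hence, if the $b_i$ are pairwise distinct, the tally over the set $\{b_1,\dots,b_{\nB}\}$ increments $\outcome'[v_i]$ exactly once per voter, and $\outcome'=\outcome$. Ballots for distinct votes cannot collide, because they decrypt differently. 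So the only failure event is $b_i=b_j$ for some $i\neq j$ with $v_i=v_j$. By a union bound over at most $\nB^2\le p(\kk)^2$ pairs, it suffices to show that two independent encryptions of the same plaintext under an honestly generated key collide with only negligible probability.

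The main step is deriving this non-collision property from $\INDPA$; I would argue by reduction. Suppose that, with non-negligible probability $\epsilon$, two encryptions of some $m$ collide; the message may be chosen efficiently given $\pk$, but here the votes are fixed in advance, so a non-uniform choice suffices. The $\INDPA$ adversary $\Adv$ picks $m_0=m$ and some distinct $m_1$ of the same length; such an $m_1$ exists when $\mC\ge 2$, and the case $\mC\le 1$ must be handled separately. It receives the challenge $c$, computes its own $c'\leftarrow\Enc{m_0}$, and decides as follows:
\begin{itemize}
\item if $c'=c$, it outputs guess $0$;
\item otherwise, it outputs a random guess.
\end{itemize}
When $\beta=1$, $c$ decrypts to $m_1\neq m_0$, so $c'=c$ is impossible by correctness. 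When $\beta=0$, the event $c'=c$ occurs with probability $\epsilon$. The advantage is therefore $\epsilon/2$, which contradicts $\INDPA$ (already $\INDCPA$ suffices, via the hierarchy in Section~\ref{sec:security-hierarchy}). The parallel decryption query can simply be left empty.

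I expect the main obstacle to be this reduction's bookkeeping: matching the appendix's exact $\INDPA$ game (equal-length messages, the order of key generation and message selection) and handling quantification over all $\kk$ and $\nB$ with a single negligible function, using the bound $\nB\le p(\kk)$. Everything else is direct unfolding of the definitions.
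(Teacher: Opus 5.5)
\paragraph{Same route as the paper.} Your outline matches the paper's proof. $\SetupSymb$ forces $\{1,\dots,\nC\}\subseteq\votespace$, so each $b_i$ is an output of $\Enc{v_i}$. If the $b_i$ are pairwise distinct, tallying the set $\{b_1,\dots,b_{\nB}\}$ is the same as tallying the list, and correctness of $\Pi$ gives $\outcome'=\outcome$. The paper only asserts that $\INDPA$ makes the ballots distinct with overwhelming probability. Your argument justifies that step soundly whenever an equal-length $m_1\neq m_0$ exists: only equal-vote collisions matter, a union bound via $\nB\le p(\kk)$, then a reduction that re-encrypts $m_0$ to distinguish it from $m_1$. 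Two small corrections:
\begin{itemize}
\item Under $\beta=1$ the event $c'=c$ has negligible probability rather than being impossible, since $\Pi$ is correct only with overwhelming probability.
\item Hard-wiring the bad vote needs $\INDPA$ to hold against non-uniform adversaries, because correctness quantifies over worst-case votes.
\end{itemize}

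\paragraph{The deferred case is a counterexample.} The genuine gap is the case you set aside. It cannot be ``handled separately'', because there the lemma is false. If $\mC=1$ and $\votespace$ contains no plaintext other than $1$ of the same length as $1$, then $\INDPA$ says nothing about encryptions of $1$. Take the identity scheme from the proof of Lemma~\ref{lemma:enc2vote:correctnessIssue}, restricted to $\votespace=\{1\}$: $\Enc{m}=m$ and $\Dec{c}=c$.
\begin{itemize}
\item It is correct.
\item It satisfies $\INDPA$ trivially: the adversary must output $m_0=m_1=1$, so its whole view is independent of $\beta$.
\item $\SetupSymb$ yields $\mC=1$.
\item Take $\nC=1$, $\nB=2$ (for $\kk$ with $p(\kk)\ge 2$) and $v_1=v_2=1$. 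Both ballots equal $1$, so $\TallySymb$ sees the board $\{1\}$ and reports one vote instead of two, with probability $1$.
\end{itemize}
So your argument proves the lemma only under an added hypothesis. One suitable hypothesis is that $\votespace$ contains a second plaintext of the same length as $1$. This is automatic when $\mC\ge 2$, since $1,\dots,\mC$ all have equal length by definition of $\mC$. The paper's one-line appeal to $\INDPA$ for distinctness silently relies on the same hypothesis, so it belongs in the statement rather than in a separate case of the proof.
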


\noindent
To prove our lemma, we show that $\encToVoteSymb$ satisfies the correctness
property of Definition~\ref{def:election} when ciphertexts do not 
collide.

\begin{proof}
Let $\encToVote = (\SetupSymb,\allowbreak	\VoteSymb,\allowbreak\TallySymb,\allowbreak\VerifySymb)$ and $\Pi = (\GenSymb,\EncSymb,\allowbreak\DecSymb)$.
Moreover, let $\kk$ be a security parameter, $\nB$ and $\nC$ be integers,
$v_1,\dots,v_{\nB}\in\{1,\dots,\nC\}$ be votes, and $\outcome$ be a 
zero-filled vector of length $\nC$. Suppose $(\pk',\allowbreak\sk',\allowbreak\mB,\allowbreak \mC)$
is an output of $\Setup$ such that $(\nB \leq \mB) \wedge (\nC\leq \mC)$. 
Further suppose we compute 
{\DontPrintSemicolon\llFor{$1 \leq i \leq \nB$}{%
  $b_i \leftarrow \Vote[\pk,v_i,\nC,\kk];
  \outcome[v_i] \leftarrow \outcome[v_i] + 1$%
}}.
Moreover, suppose $(\outcome',\tpf)$ is an output of 
$\Tally[\sk,\{b_1,\dots,b_{\nB}\},\nC, \kk]$.
To prove correctness, it suffices to prove $\outcome = \outcome'$, 
with overwhelming probability.

By definition of algorithm $\SetupSymb$, we have $\pk'$ is a pair $(\pk,\votespace)$
and $\sk'$ is a pair $(\pk,\sk)$ such that $(\pk,\sk,\votespace)$ was output by 
$\Gen$. Moreover, $\mC$ is the largest integer such that 
$\{0,\dots,\mC\}\subseteq\{0\} \cup\votespace$, hence, 
$\{1,\allowbreak\dots,\allowbreak\nC\}\subseteq\votespace$.
It follows by definition of algorithm $\VoteSymb$ that for each 
$i\in\{1,\dots,\nB\}$ we have $b_i$ is an output of $\Enc{v_i}$. 
By definition of algorithm $\TallySymb$, 
outcome $\outcome'$ is initialized as a zero-filled vector of length $\nC$ 
and computed as follows: 

{\DontPrintSemicolon
\llFor{$b\in\{b_1,\dots,b_{\nB}\}$}{%
$v\leftarrow\Dec{b};$
\llIf{$1\leq v\leq \nC$}{$\outcome'[v] \leftarrow \outcome'[v]+1$}%
}%
}. 

\noindent
Since $\Pi$ satisfies $\INDPA$, ciphertexts $b_1,\dots,b_{\nB}$
are distinct with overwhelming probability, hence, that computation 
is equivalent to the following:

{\DontPrintSemicolon
\llFor{$1 \leq i \leq \nB$}{%
$v\leftarrow\Dec{b_i};$
\llIf{$1\leq v\leq \nC$}{$\outcome'[v] \leftarrow \outcome'[v]+1$}%
}%
}.

\noindent
Moreover, by correctness of $\Pi$, we have $\Dec{b_i} = v_i$
for all $i\in\{1,\dots,\nB\}$, with 
overwhelming probability. Thus, the above computation is equivalent to 
computing

{\DontPrintSemicolon
\llFor{$1 \leq i \leq \nB$}{%
$\outcome'[v_i] \leftarrow \outcome'[v_i]+1$}%
},

\noindent
with overwhelming probability. It follows that outcomes $\outcome$ and $\outcome'$
are computed identically, hence, $\outcome=\outcome'$, as required, 
with overwhelming probability. 
\end{proof}

\noindent
Intuitively, election scheme $\encToVote$ satisfies ballot secrecy until 
tallying, because asymmetric encryption scheme $\Pi$ can ensure that voters'
votes are not revealed, and tallying maintains ballot secrecy by revealing 
only the election outcome.

\begin{proposition}\label{prop:encToVoteBS}
If an asymmetric encryption scheme $\Pi$ satisfies $\INDPA$, then
election scheme $\encToVote$ satisfies $\BallotSecrecy$.
\end{proposition}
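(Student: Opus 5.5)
We prove the contrapositive by reduction, as sketched in Section~\ref{sec:reductions}. Suppose a probabilistic polynomial-time adversary $\adv$ wins $\BallotSecrecyGame$ against $\encToVote$ with probability non-negligibly greater than $\frac{1}{2}$. We build an adversary $\Adv$ against $\INDPA$ of $\Pi$. We use the multi-challenge (left-or-right oracle) form of $\INDPA$ recalled in the appendix: a left-right encryption oracle may be queried polynomially many times, followed by a single parallel decryption query on a vector of ciphertexts excluding challenge ciphertexts. If the appendix only gives the single-challenge version, we first pass to this form by a standard hybrid argument over the at most $p(\kk)$ oracle queries, losing a polynomial factor.

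\textbf{Simulating the game.}
\begin{enumerate}
\item $\Adv$ receives $(\pk,\votespace)$ and sets $\pk'\leftarrow(\pk,\votespace)$. It computes $\mC$ exactly as $\SetupSymb$ does; this needs only $\votespace$ and the length condition.
\item $\Adv$ gives $(\pk',\kk)$ to $\adv$ and obtains $\nC$.
\item Each oracle query $\oracle(v_0,v_1)$ is answered as follows. If parsing fails or $\{1,\dots,\nC\}\not\subseteq\votespace$, $\Adv$ returns $\perp$, just as $\VoteSymb$ would. Otherwise it forwards $(v_0,v_1)$ to its left-or-right oracle, which is legitimate because $|v_0|=|v_1|$ by the choice of $\mC$. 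It records $(b,v_0,v_1)$ in its own copy of $L$ and returns the challenge ciphertext $b$.
\item When $\adv$ outputs $\bb$, $\Adv$ splits $\bb$ into the oracle ballots $\bb\cap\{b \mid (b,\cdot,\cdot)\in L\}$ and the remaining adversarial ballots $c_1,\dots,c_k$.
\item $\Adv$ submits $(c_1,\dots,c_k)$ to its parallel decryption oracle. This is allowed, since none of the $c_j$ is a challenge ciphertext.
\end{enumerate}

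\textbf{Computing the tally without $\sk$.} This is the crux. $\Adv$ obtains the contribution of the adversarial ballots by counting each decrypted $c_j$ with $1\le \Dec{c_j}\le\nC$. For the oracle ballots it adds, for each $b$ in $\bb$ with $(b,v_0,v_1)\in L$, one vote to candidate $v_0$.

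By correctness of $\Pi$, such $b$ decrypts to $v_\beta$ with overwhelming probability. The $\balanced$ predicate says the multiset of $v_0$'s over oracle ballots on $\bb$ equals the multiset of $v_1$'s. So counting by $v_0$ gives exactly the outcome $\TallySymb$ would compute, whatever $\beta$ is.

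Two details need care. First, $L$ might contain two entries with the same ciphertext $b$. Non-malleability, and already IND-CPA, makes such collisions negligible, as used in the preceding lemma, so each $b$ determines a unique $(v_0,v_1)$. Second, if $\bb$ is not balanced or $|\bb|>\mB$ or $\nC$ is out of range, $\adv$ cannot win anyway. In that case $\Adv$ simply outputs a random bit. Its simulated $\outcome$ may then be wrong, but this does not matter because those runs contribute no wins to $\adv$.

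\textbf{Concluding.} The proof $\tpf$ is the constant $\epsilon$, so $\Adv$ hands $(\outcome,\epsilon)$ to $\adv$ and outputs $\adv$'s guess $g$. Conditioned on no collisions and correct decryptions, the view of $\adv$ is identical to the real game with the same $\beta$. Moreover, whenever $\adv$ wins, the board is balanced and $\Adv$ guesses correctly. Hence
$$\Succ(\INDPA)\ \ge\ \Succ(\BallotSecrecyGame)-\neglK,$$
contradicting $\INDPA$. This gives the proposition.

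The main obstacle is the bookkeeping in the decomposition of $\bb$ and in justifying the random-bit fallback for unbalanced boards. In particular, we must argue that the win probabilities compare correctly even though the simulation may be imperfect exactly on losing runs. I would handle this by conditioning on the event that $\balanced(\bb,\nC,L)$ holds and the range conditions hold.
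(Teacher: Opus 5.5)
Your argument is correct, but it does not follow the paper's route. The paper never reduces $\BallotSecrecy$ to $\INDPA$ directly. It reduces the single-challenge game $\BallotIndependence$ to the single-challenge $\INDPA$ of the appendix. Because the challenge ballot is barred from $\bb$, $\Adv$ just forwards all of $\bb$ to its parallel decryption oracle and recounts. The paper then appeals to Smyth's Theorem~4: $\BallotIndependence$ and $\BallotSecrecy$ coincide for schemes with zero-knowledge tallying proofs that tally individual ballots correctly. It notes that $\tpf=\epsilon$ is trivially zero-knowledge, and that individual tallying is correct by correctness of $\Pi$ and non-colliding ciphertexts.

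Your proof in effect inlines that theorem for this one scheme. It contains three pieces:
\begin{itemize}
\item the hybrid over oracle queries, pushed down to the encryption level as multi-challenge $\INDPA$;
\item counting oracle ballots on $\bb$ by $v_0$ via $\balanced$, which is the second simplification of Section~\ref{sec:secrecy:simpleProofs};
\item the random-bit fallback on unbalanced runs.
\end{itemize}
The paper's route buys modularity, since that bookkeeping is proved once and reused for \heliosnext\ and Helios Mixnet. Yours is self-contained and makes the polynomial security loss explicit. It also avoids the paper's opening inference, which derives failure of $\BallotIndependence$ from failure of $\BallotSecrecy$ via ``strictly stronger'' (Smyth's Theorem~1); that direction actually needs the coincidence theorem.

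Three small repairs are needed.
\begin{itemize}
\item The appendix states only the single-challenge game, so the hybrid is a required lemma, not an optional remark. It does go through: the single parallel decryption query comes after all challenges and already excludes every challenge ciphertext, so each hybrid reduction can forward it unchanged.
\item The number of hybrids is bounded by $\adv$'s running time, not by $p(\kk)=\mB$, since $\adv$ may query the oracle more often than it posts ballots.
\item Check $1\le\nC\le\mC$ right after step~2 and output a random bit otherwise. For $\nC>\mC$ the pairs $(v_0,v_1)$ need not have equal length, so forwarding them to the left-or-right oracle would be illegitimate. Your later fallback comes too late to prevent that.
\end{itemize}
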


\noindent
Proving this proposition and other ballot secrecy results is time consuming. 
Indeed, Quaglia \& Smyth's ballot-secrecy proof for our simple $\encToVoteSymb$
voting system fills over six and a half pages~\cite[Appendix C.6]{Smyth15:Hawk}
and 
Cortier \emph{et al.} devoted one person-year to their ballot-secrecy
proof for Helios~\cite{Cortier17:HeliosMachineChecked}. 
To reduce the expense of ballot-secrecy proofs, the following section introduces 
sufficient conditions that enable the simplification of game $\BallotSecrecy$, 
which gives way to simpler proofs. Indeed, we prove 
Proposition~\ref{prop:encToVoteBS} in just over a page.

\LinesNotNumbered
\subsection{Simplifying proofs}\label{sec:secrecy:simpleProofs}

Tallying proofs may reveal voters' votes.  For example, a variant of $\encToVoteSymb$ 
might define tallying proofs that map ballots to votes. Hence, such proofs are rightly
provided to the adversary in game $\BallotSecrecy$ (Line~\ref{alg:BallotSecrecyGame:adv:guess}).
Nevertheless, if tallying proofs
reveal nothing about the votes expressed in ballots on the bulletin board, then they 
can be omitted from the game. This precondition is ensured by election schemes that use 
zero-knowledge tallying proofs. Thus, the adversary need not be 
provided with such proofs in game $\BallotSecrecy$ when analysing such schemes, 
which achieves our first reduction in the expense of ballot-secrecy proofs. Our second
reduction involves modifying the computation of election outcomes. 

Game $\BallotSecrecy$ computes the election outcome from ballots constructed by the 
oracle and ballots constructed by the adversary (Line~\ref{alg:BallotSecrecyGame:tally}).
Intuitively, such an outcome can be equivalently computed as follows:

{\upshape
\begin{algorithm}[H]
$(\outcome, \tpf)\leftarrow \Tally[\sk,\bb\setminus\{b \mid (b,v_0,v_1)\in L\},\nC,\kk]$\;
$(\outcome', \tpf')\leftarrow \Tally[\sk,\bb\cap\{b \mid (b,v_0,v_1)\in L\},\nC,\kk]$\;
$\outcome \leftarrow \outcome + \outcome'$\;
\end{algorithm}
}

\noindent
Yet, a poorly designed tallying algorithm might not ensure equivalence. In particular,
ballots constructed by the adversary can cause the algorithm to behave unexpectedly.
(Such algorithms are nonetheless compatible with our correctness requirement, because
correctness does not consider an adversary.) Nevertheless, the equivalence holds when 
individual ballots are tallied correctly. Moreover, the above computation is equivalent 
to the following:

{\upshape
\begin{algorithm}[H]
$(\outcome, \tpf)\leftarrow \Tally[\sk,\bb\setminus\{b \mid (b,v_0,v_1)\in L\},\nC,\kk]$\;
\For{$(b \in \bb) \wedge ((b,v_0,v_1) \in L)$}{
	$(\outcome', \tpf')\leftarrow \Tally[\sk,\{b\},\nC,\kk]$\;
	$\outcome \leftarrow \outcome + \outcome'$\;
}
\end{algorithm}
}

\noindent
Furthermore, by correctness of the election scheme, the above for-loop 
can be equivalently computed as follows:

{\upshape
\begin{algorithm}[H]
\For{$(b \in \bb) \wedge ((b,v_0,v_1) \in L)$}{
	$\outcome [v_\beta] \leftarrow \outcome [v_\beta] +1$\;
}
\end{algorithm}
}

\noindent
Indeed, for each $(b \in \bb) \wedge ((b,v_0,v_1) \in L)$, we have $b$ is an output of 
$\Vote[\pk,\allowbreak v_\beta,\allowbreak\nC,\allowbreak\kk]$, hence, 
$\Tally[\sk,\{b\},\nC,\kk]$ outputs $(\outcome, \tpf)$ such that $\outcome$ 
is a zero-filled vector, except for index $v_\beta$ which contains one, 
and this suffices to ensure equivalence. In addition, for any adversary that wins 
game $\BallotSecrecy$, we are assured that $\balanced(\bb,\nC,L)$ holds, hence, 
the above for-loop can be computed as

{\upshape
\begin{algorithm}[H]
\For{$(b \in \bb) \wedge ((b,v_0,v_1) \in L)$}{
	$\outcome [v_0] \leftarrow \outcome [v_0] +1$\;
}
\end{algorithm}
}

\noindent
or

{\upshape
\begin{algorithm}[H]
\For{$(b \in \bb )\wedge ((b,v_0,v_1) \in L)$}{
	$\outcome [v_1] \leftarrow \outcome [v_1] +1$\;
}
\end{algorithm}
}

\noindent
without weakening the game. Thus, perhaps surprisingly, tallying ballots constructed 
by the oracle does not 
provide the adversary with an advantage (in determining whether $\beta = 0$ or 
$\beta = 1$) and we can omit such ballots from tallying in game $\BallotSecrecy$.
That is, we need only consider the game derived from $\BallotSecrecy$ by 
replacing $\adv(\outcome, \tpf)$
with $\adv(\outcome)$ and $\balanced(\bb,\nC,L)$ with $\{b\mid (b,v_0,v_1)\in L\} \cap \bb = \emptyset$, 
where the former
modification captures our first reduction in the expense of ballot-secrecy proofs
and the latter captures our second. 

Smyth~\cite{Smyth15:BallotSecrecyFull} further reduces the expense of ballot-secrecy 
proofs by removing the oracle in favour of a single challenge ballot:

\LinesNumbered
\begin{definition}[$\BallotIndependence$~\cite{Smyth15:BallotSecrecyFull}]\label{def:ind-independence}
Let $\Gamma = (\SetupSymb,\VoteSymb,\TallySymb,\VerifySymb)$ be 
an election scheme, $\adv$ be an adversary, $\kk$ be the security parameter,
and $\BallotIndependenceGame$ be the following game.

{\upshape
\begin{inlineexperiment}{$\BallotIndependenceGame$}
			$(\pk,\sk,\mB,\mC) \leftarrow \Setup$\;
			$(v_0,v_1,\nC) \leftarrow \adv(\pk,\kk)$\;
			$\beta \leftarrow_R \{0,1\}$\;
			$b \leftarrow \Vote[\pk,v_\beta,\nC,\kk]$\;
			$\bb \leftarrow \adv(b)$\;
			$(\outcome,\tpf) \leftarrow \Tally$\label{alg:BallotIndependenceGame:tally}\;
			$g\leftarrow \adv(\outcome)$\;
			\Return $(g = \beta)\mathrel\wedge\allowbreak
			(b\not\in\bb )\mathrel\wedge\allowbreak (1 \leq v_0,v_1 \leq \nC \leq \mC \mathrel)\wedge\allowbreak (|\bb| \leq \mB)$\;
\end{inlineexperiment}
}

\noindent
We say election scheme $\Gamma$ satisfies  $\BallotIndependence$, 
if for all probabilistic polynomial-time adversaries $\adv$,
there exists a negligible function $\negl$, such that for all security parameters $\kk$, 
the following inequality 
$$\Succ(\BallotIndependenceGame) \leq \frac{1}{2} + \neglK \,, $$
holds. 
\end{definition}

\LinesNotNumbered

\noindent
An election scheme satisfies $\BallotIndependence$ when algorithm $\VoteSymb$ outputs non-malleable ballots. 

The relationship between $\BallotSecrecy$ and $\BallotIndependence$ (also called IND-CVA, for \emph{indistinguishability under chosen-vote attack}) is central to simplifying ballot-secrecy proofs. In $\BallotSecrecy$, the adversary can query an oracle many times to obtain ballots for vote pairs of their choice, then must distinguish which votes were actually encrypted. In $\BallotIndependence$, the adversary receives only a \emph{single} challenge ballot and must determine which of two votes it contains---but may construct additional ballots themselves and place them on the bulletin board.

Why does this simplification work? The key insight is that if ballots are non-malleable, the adversary gains nothing from oracle access beyond what they could construct themselves. Malleable ballots would allow an adversary to transform an oracle-provided ballot into a related ballot, potentially leaking information about the encrypted vote. Non-malleability blocks this attack vector.

Smyth proves that game $\BallotSecrecy$ is strictly stronger than game $\BallotIndependence$, moreover, 
he proves that the games coincide for election schemes with zero-knowledge tallying proofs 
that tally individual ballots correctly~\cite[Theorems~1 \&~4]{Smyth15:BallotSecrecyFull}. Here strictly stronger means any scheme satisfying $\BallotSecrecy$ also satisfies $\BallotIndependence$, but not necessarily vice versa. However, the games coincide (yield equivalent security guarantees) when two conditions hold: (1) tallying proofs reveal nothing about individual votes (zero-knowledge), and (2) the tally correctly counts each ballot. Under these conditions, proving the simpler $\BallotIndependence$ game suffices to establish $\BallotSecrecy$.

Furthermore, Smyth proves that universally-verifiable election schemes tally 
individual ballots correctly~\cite[Lemmata~8 \&~24]{Smyth15:BallotSecrecyFull}.

\begin{mdframed}[style=guidebox]
\begin{guideline}
Ballots must be non-malleable.
\end{guideline}
\end{mdframed}

These results significantly reduce the expense of ballot-secrecy proofs
and we now use them to prove Proposition~\ref{prop:encToVoteBS}.

\begin{proof}[Proof of Proposition~\ref{prop:encToVoteBS}]
Suppose to the contrary that election scheme $\encToVote[\allowbreak\Pi]$ does not satisfy $\BallotSecrecy$.
Since $\BallotSecrecy$ is strictly stronger than 
$\BallotIndependence$~\cite[Theorem~1]{Smyth15:BallotSecrecyFull},
scheme $\encToVote$ does not satisfy $\BallotIndependence$ either, hence, 
there exists a probabilistic polynomial-time adversary $\adv$ that wins 
$\BallotIndependenceGame[\encToVote,\allowbreak\adv,\allowbreak\kk]$ with 
success greater than negligibly better than guessing, for some security 
parameter $\kk$. From $\adv$, we construct the following adversary $\Adv$ 
that wins $\INDPA$.

\begin{itemize}
\item $\Adv(\pk,\votespace,\kk)$ computes 
  $\pk'\leftarrow (\pk,\votespace);(v_0,v_1,\nC) \leftarrow \adv(\pk',\kk)$ and
  outputs $(v_0,v_1)$.

\item $\Adv(b)$ computes $\bb\leftarrow \adv(b)$, parses $\bb$ as a set
  $\{b_1,\dots,b_{|\bb|}\}$, and outputs vector $(b_1,\dots,b_{|\bb|})$.

\item $\Adv({\bf m})$ initialises $\outcome$ as a zero-filled vector of length $\nC$,
computes
{\DontPrintSemicolon
\llFor{$1 \leq i \leq |{\bf m}|$}{%
$v\leftarrow {\bf m}[i]$;
\llIf{$1\leq v\leq \nC$}{$\outcome[v] \leftarrow \outcome[v]+1;$}
$g \leftarrow \adv(\outcome)$%
}%
} and outputs $g$.
\end{itemize}

\noindent
We prove that the success of adversary $\Adv$ is equivalent to the success of 
adversary~$\adv$, which contradicts our assumption that $\Pi$ satisfies $\INDPA$.

Let $\Pi = (\GenSymb,\allowbreak\EncSymb,\allowbreak\DecSymb)$ 
and $\encToVote =\allowbreak (\SetupSymb,\allowbreak\VoteSymb,\allowbreak\TallySymb,\allowbreak\VerifySymb)$. 
Suppose $(\pk,\sk,\votespace)$ is an output of $\Gen$, $(v_0,v_1)$
is an output of $\Adv(\pk,\allowbreak\votespace,\allowbreak\kk)$, and $b$ is an output of
$\Enc{v_\beta}$, for some bit $\beta$ chosen uniformly at random.
By inspection of algorithms $\SetupSymb$ and $\VoteSymb$, and of 
adversary $\Adv$, it is straightforward to see that $\Adv$ 
simulates the challenger in $\BallotIndependence$ to adversary $\adv$. 
Indeed, adversary $\Adv$ couples public key $\pk$ with message space $\votespace$,
and inputs the resulting pair $\pk'$ to $\adv$, which
corresponds to the public key computed by algorithm $\SetupSymb$, hence, 
the public key input to $\adv$ by the challenger in $\BallotIndependence$.
Thus, adversary $\adv$ behaves as if playing game $\BallotIndependence$
and output $(v_0,v_1)$ is indistinguishable from outputs that 
would be observed while playing that game. Moreover,
since $\adv$ wins with success greater than negligibly better 
than guessing, we have $1 \leq v_0,v_1 \leq \nC \leq \mC$, 
furthermore, $\{1,\dots,\nC\}\subseteq\votespace$, where 
$\mC$ is the largest integer such that $\{0,\dots,\mC\}\subseteq{\{0\} \cup\votespace}$
and for all $m_0,m_1\in\{1,\dots,\mC\}$ we have $|m_0|=|m_1|$
(hence, $|v_0| = |v_1|$, which is required to win $\INDPA$), with the same probability.
It follows that outputs of $\Enc{v_\beta}$ and $\Vote[\pk,v_\beta,\nC,\kk]$ are indistinguishable. 
Suppose $(b_1,\dots,b_\ell)$ is an output of $\Adv(b)$. It is trivial to see that $\Adv$ 
simulates the challenger in $\BallotIndependence$ to adversary $\adv$, because the aforementioned
outputs of $\EncSymb$ and $\VoteSymb$ are indistinguishable. 
Since $\adv$ wins, we have $b\not\in \{b_1,\dots,b_\ell\}$, hence, 
$\bigwedge_{1\leq i \leq \ell} (b \not= b_i)$, again with the same 
probability.

Let ${\bf m} = (\Dec{b_1},\dots,\Dec{b_\ell})$ and suppose $g$ is 
an output of $\Adv({\bf m})$. By inspection of algorithm $\TallySymb$ 
and of adversary $\Adv$, we can see that $\Adv$ 
simulates the challenger in $\BallotIndependence$ to adversary $\adv$. 
Indeed, both the algorithm and adversary initialise $\outcome$ 
as a zero-filled vector of length $\nC$, then the adversary $\Adv$
computes

\begin{algorithm}[H]
\llFor{$1 \leq i \leq |{\bf m}|$}{%
$v\leftarrow {\bf m}[i]$;
{\DontPrintSemicolon
\llIf{$1\leq v\leq \nC$}{$\outcome[v] \leftarrow \outcome[v]+1$}%
}%
}
\end{algorithm}

\noindent
which is equivalent to algorithm $\TallySymb$ computing

\begin{algorithm}[H]
\llFor{$b\in\{b_1,\dots,b_\ell\}$}{%
$v\leftarrow\Dec{b};$
{\DontPrintSemicolon
\llIf{$1\leq v\leq \nC$}{$\outcome[v] \leftarrow \outcome[v]+1$}%
}%
}
\end{algorithm}

\noindent
because algorithm $\DecSymb$ is deterministic.
Thus, output $g$ is indistinguishable from outputs that would be 
observed in game $\BallotIndependence$. It follows that the success 
of adversary $\Adv$ is equivalent to the success of $\adv$, and we 
conclude our proof by~\cite[Theorem~4]{Smyth15:BallotSecrecyFull},
since Smyth proves that games $\BallotIndependence$ and $\BallotSecrecy$ 
coincide for election schemes with zero-knowledge tallying proofs that tally 
individual ballots correctly. (We omit proving that election scheme $\encToVote$ 
has zero-knowledge tallying proofs and tallies individual ballots correctly to avoid 
recalling formal definitions of those properties. Proving the former 
is trivial, because $\tpf$ is a constant, hence, it reveals nothing about
the votes expressed in ballots $b_1,\dots,b_\ell$. The latter
follows from the definition of algorithm $\TallySymb$, by correctness
of $\Pi$, and since $\Pi$ satisfies $\INDPA$, which is required only 
to ensure ciphertexts do not collide. Formally proving these details
is left as an exercise for the reader.) 
\end{proof}

\begin{figure}[h!]
\centering
\begin{tikzpicture}[thick, font=\small]
\tikzset{
    challenger/.style={rectangle, rounded corners, draw=black, very thick,
        inner sep=0.8em, minimum width=6.2cm, align=center},
    adversary/.style={rectangle, rounded corners, draw=black, thick,
        inner sep=0.6em, minimum width=4cm, align=center},
    simbox/.style={rectangle, rounded corners, draw=black!50, thick,
        inner sep=0.5em, minimum width=4.5cm, align=center,
        fill=light-gray},
    arrowfwd/.style={-{Latex[length=3mm,width=1.5mm]}, thick},
    labelstyle/.style={font=\footnotesize},
}

\node[challenger] (chall) at (0, 6.0)
    {$\INDPA$ Challenger\\[2pt]
     {\footnotesize generates $(\pk, \sk, \votespace)$,}\\
     {\footnotesize flips bit $\beta$, encrypts $v_\beta$}};

\node[adversary, minimum height=5.4cm, minimum width=7.8cm, dashed]
    (advB) at (0, 0.2) {};
\node[above] at (advB.north) {Reduction $\Adv$};

\node[simbox] (sim) at (0, 1.7)
    {Simulates $\BallotIndependence$ game\\[2pt]
     {\footnotesize sets $\pk' \leftarrow (\pk, \votespace)$,
      forwards challenge ballot,}\\
     {\footnotesize tallies $\bb$ from decrypted ciphertexts}};

\node[adversary] (advA) at (0, -1.5)
    {Adversary $\adv$\\[2pt]
     {\footnotesize (attacks $\BallotIndependence$ of $\encToVoteSymb$)}};

\draw[arrowfwd] ([xshift=-2.0cm]chall.south) -- ([xshift=-2.0cm]sim.north)
    node[midway, left=0.15cm, labelstyle, align=right]
    {$\pk, \votespace, \kk$\\[1pt]then $b \leftarrow \Enc{v_\beta}$};

\draw[arrowfwd] ([xshift=2.0cm]sim.north) -- ([xshift=2.0cm]chall.south)
    node[midway, right=0.15cm, labelstyle, align=left]
    {$(v_0, v_1)$\\[1pt]then $(b_1, \ldots, b_\ell)$\\[1pt]then guess $g$};

\draw[arrowfwd] ([xshift=-1.5cm]sim.south) -- ([xshift=-1.5cm]advA.north)
    node[midway, left=0.15cm, labelstyle, align=right]
    {$\pk' = (\pk, \votespace)$\\[1pt]then ballot $b$\\[1pt]then $\outcome$};

\draw[arrowfwd] ([xshift=1.5cm]advA.north) -- ([xshift=1.5cm]sim.south)
    node[midway, right=0.15cm, labelstyle, align=left]
    {$(v_0, v_1, \nC)$\\[1pt]then $\bb$\\[1pt]then guess $g$};

\node[labelstyle, align=center] at (0, -3.0)
    {If $\adv$ wins $\BallotIndependence$ $\Longrightarrow$
     $\Adv$ wins $\INDPA$\\[2pt]
     {\footnotesize Contradicts security of $\Pi$,
      so $\adv$ cannot exist.}};

\end{tikzpicture}
\caption{Reduction for Proposition~3. Adversary $\Adv$ receives an $\INDPA$ challenge (an encryption of either $v_0$ or $v_1$) and uses it as the challenge ballot in a simulated $\BallotIndependence$ game for~$\adv$.  When $\adv$ returns a bulletin board $\bb = \{b_1, \ldots, b_\ell\}$, adversary $\Adv$ submits these ciphertexts to its own challenger for decryption and tallies the result.  Since $\Adv$ perfectly simulates the $\BallotIndependence$ game, any advantage $\adv$ has transfers directly to $\Adv$, contradicting the $\INDPA$ security of~$\Pi$.}
\label{fig:reduction-enc2vote}
\end{figure}

\begin{sloppypar}
We can exploit Proposition~\ref{prop:encToVoteBS} to achieve our
fourth and final reduction in the expense of ballot-secrecy proofs.
Indeed, if an election scheme tallies sets of ballots correctly (rather
than individual ballots, as previously required), then we can compute 
the election outcome using function $\correcttally$ in game $\BallotIndependence$, 
rather than the tallying algorithm, that is, by replacing 
$(\outcome,\tpf) \leftarrow \Tally$ with $\outcome \leftarrow \correcttally(\pk,\nC,\bb,\kk)$. 
It follows that election scheme 
$(\SetupSymb,\allowbreak\VoteSymb,\allowbreak\TallySymb,\allowbreak\VerifySymb)$ 
satisfies $\BallotSecrecy$ if and only if 
$(\SetupSymb,\allowbreak\VoteSymb,\allowbreak\TallySymb',\allowbreak\VerifySymb')$ 
does, assuming algorithms $\TallySymb$ and $\TallySymb'$ both tally 
sets of ballots correctly. Proposition~\ref{prop:encToVoteBS} proves that
election scheme $\encToVote$ satisfies $\BallotSecrecy$, assuming 
$\Pi$ is an asymmetric encryption scheme satisfying $\INDPA$, and
Smyth~\cite[Lemma~12]{Smyth15:BallotSecrecyFull} proves that $\encToVote$ tallies 
sets of ballots correctly, under the additional assumption that $\Pi$ satisfies 
\emph{well-definedness}, that is, ill-formed ciphertexts are distinguishable from 
well-formed ciphertexts. 
Thus, $\BallotSecrecy$ is satisfied by any election scheme derived from 
$\encToVote$ by replacing its tallying and verification algorithms, assuming 
the replacement tallying algorithm tallies sets of ballots correctly and 
uses zero-knowledge tallying proofs~\cite[Theorem~13]{Smyth15:BallotSecrecyFull}. 
Moreover, Smyth proves that universally-verifiable election schemes tally sets of 
ballots correctly~\cite[Lemma~24]{Smyth15:BallotSecrecyFull}. It follows that proofs 
of ballot secrecy are trivial for a class of universally-verifiable,
encryption-based voting systems: Any universally-verifiable election scheme derived from 
$\encToVote$ satisfies  $\BallotSecrecy$ if $\Pi$ satisfies $\INDPA$ and well-definedness, 
and tallying proofs are zero-knowledge.
\end{sloppypar}

We will use our third simplification to prove that a variant of Helios satisfies 
$\BallotSecrecy$ and the fourth to prove that a variant of Helios Mixtnet does too
(the original schemes have known vulnerabilities and they do not satisfy $\BallotSecrecy$),
thereby demonstrating the application of these results.

\LinesNumbered

\subsection{Related definitions of ballot secrecy}
\label{sec:secrecy:related}

Discussion of ballot secrecy originates from Chaum~\cite{Chaum:1981} and the earliest 
definitions of ballot secrecy are due to Benaloh \emph{et al.}~\cite{Benaloh86,BT94:ReceiptFreeVoting,Benaloh96:Thesis}.
More recently, Bernhard \emph{et al.} propose a series of ballot secrecy definitions~\cite{Bernhard12:Helios,2013:BallotIndependenceESORICS,2014:BallotIndependenceEprint,BCGPW15}.
Smyth~\cite{Smyth15:BallotSecrecyFull} shows that these definitions do not 
detect vulnerabilities that arise when an adversary controls the bulletin 
board or the communication channel. 
By comparison, the definition of ballot secrecy that we consider 
(Definition~\ref{def:ballotSecrecy}) detects such vulnerabilities
and appears to be the strongest definition in the literature.

Beyond the computational model of security, Delaune, Kremer \& Ryan
formulate a definition of ballot secrecy in the 
applied pi calculus~\cite{DKR08} and Smyth~\emph{et al.} show that 
this definition is amenable to automated reasoning~\cite{Smyth08:Obs,Smyth10:ObsA,Smyth10:thesis,2016-verifying-observational-equivalence,2017-verifying-observational-equivalence}. An alternative definition is proposed by
Cremers \& Hirschi, along with sufficient conditions which are 
also amenable to automated reasoning~\cite{Hirschi17:SufficientConditionsSecrecyDKR}.
Albeit, the scope of automated reasoning is  limited by analysis tools 
(for example, ProVerif~\cite{2010-ProVerif-manual-version-1.96}), because the function symbols and equational theory used to model
cryptographic primitives might not be suitable for automated analysis (see~\cite{DKR11:TPM-Horn-clauses,BlanchetPaiola12:EQ-lists,Bursuc12:EQ-reenc}).

\subsection{Further notions of privacy}
\label{sec:secrecy:outlook}
\label{sec:related:outlook}

Ballot secrecy formalises a notion of free-choice assuming ballots are 
constructed and tallied in the prescribed manner. Moreover, 
Smyth's definition assumes 
the adversary's capabilities are limited to 
casting ballots on behalf of some voters and controlling the 
votes cast by the remaining voters. 
We have seen that 
voting system $\encToVoteSymb$ satisfies this definition, but ballot 
secrecy does not ensure free-choice when adversaries are able to 
communicate with voters nor when voters deviate from the prescribed 
voting procedure to follow instructions provided by adversaries.
Indeed, the coins used for encryption serve as proof of how a voter voted 
in  $\encToVoteSymb$ 
and the voter may communicate those coins to the adversary. 
Stronger notions of free-choice, such as receipt-freeness~\cite{Moran06,Kiayias15,Cortier16:BeleniosRF,Smyth19:receipt-freeness} and coercion resistance~\cite{JCJ05,Gardner09:CoercionDefinition,Unruh10,Kusters12:CoercionResistance,HainesSmyth19:CoercionSoK},
are needed in the presence of such adversaries.

\emph{Coercion resistance}, informally, is the property ensuring a voter cannot prove to a coercer how they voted, even if the coercer actively participates in the voting process. \emph{Receipt-freeness} is a related but weaker property: a voter cannot construct a receipt proving their vote to a third party. Formalising coercion resistance has proven difficult. Haines and Smyth~\cite{HainesSmyth19:CoercionSoK} survey four prominent definitions and find all but one to be unsuitable, demonstrating the challenges faced in capturing this property. See also K\"usters, Truderung and Vogt~\cite{Kusters12:CoercionResistance} for a game-based treatment.

Ballot secrecy does not provide assurances when deviations from the prescribed 
tallying procedure are possible.
Indeed, ballots can be tallied individually to reveal votes.
Hence, the tallier must be trusted.
Alternatively, we can design election schemes
that distribute the tallier's role amongst several talliers and ensure 
free-choice assuming at least one tallier tallies ballots in the prescribed manner.
Extending   results in this direction is an opportunity for future work.
Ultimately, we would prefer not to trust talliers. Unfortunately, this is only known to
be possible for decentralised voting systems, for example~\cite{Schoenmakers99:PVSS,KY02,G04,HRZ09,Smyth12:decentralised-voting-system,Khazaei18:DecentralisedVotingAnalysis}, which 
are designed such that ballots cannot be tallied individually, but are unsuitable 
for large-scale elections.

\subsubsection{Everlasting privacy}

A concern with current electronic voting systems is the longevity of vote privacy. Systems like Helios and Belenios rely on computational assumptions, specifically the hardness of the discrete logarithm problem, for ballot secrecy. Advances in quantum computing or mathematical breakthroughs could eventually render today's encrypted ballots decryptable, motivating \emph{everlasting privacy}: the guarantee that votes remain secret even against computationally unbounded adversaries~\cite{Haines23:EverlastingPrivacySoK}.

Achieving everlasting privacy while maintaining verifiability is challenging, since verifiability requires publishing cryptographic evidence that could eventually be broken. Existing approaches include perfectly hiding commitments, anonymous channels that provide unconditional anonymity, and verifiable secret sharing where destroyed key shares cannot be reconstructed~\cite{Demirel12:EverlastingPrivacy}. The trade-offs between everlasting privacy and efficiency remain an active research area.

\subsubsection{Post-quantum cryptography}

Quantum computers pose a threat to current electronic voting systems. Shor's algorithm~\cite{shor1999polynomial} can efficiently solve the discrete logarithm and integer factorisation problems underlying ElGamal and RSA encryption, problems that would take classical computers thousands of years to solve. While current quantum computers can only factor small numbers, and significant engineering challenges remain, the cryptography community is actively preparing for a post-quantum future.

Several research efforts have developed post-quantum secure electronic voting based on lattice assumptions~\cite{Chillotti16:LWEVoting,Aranha21:LatticeShuffle,Aranha23:LatticeMixnet}. Recent work by Farzaliyev et al.~\cite{Farzaliyev24:LatticeZKVoting} provides comprehensive lattice-based constructions for zero-knowledge proofs of ballot correctness, while Hough et al.~\cite{Hough23:NTRUVoting} achieve significant efficiency improvements using NTRU-based constructions.

The NIST post-quantum cryptography standardisation (2022) provides a foundation for building post-quantum electronic voting systems, but the specialised primitives required, such as homomorphic encryption, verifiable shuffles, and range proofs, need additional research. Transitioning deployed systems like Helios and Belenios to post-quantum security remains an open challenge.

\section{Case studies}\label{sec:casestudies}
Armed with the concepts and techniques from the previous sections, we can examine electronic election schemes that exist in the literature as case studies. Our aim is to demonstrate how the security definitions and proof techniques developed earlier apply to real voting systems, highlighting both their strengths and vulnerabilities.

Although there are various election schemes implemented by private companies, we cannot focus on these as they generally do not present their cryptographic designs in sufficient detail for rigorous analysis.

We will focus on the family of election schemes known as Helios. Helios is particularly well-suited for our case study because it is open-source, well-documented, has been subject to extensive academic scrutiny, and has been deployed in real elections including those of the International Association for Cryptologic Research (IACR).

\subsection{Case study I: Helios}\label{sec:helios}

Helios can be informally modelled as the following election scheme
(further details appear in Figure~\ref{fig:HeliosBallotsAndTallying}):

\newcommand{\helioSetupDescription}[1][]{generates a key pair for an asymmetric #1{}homomorphic encryption scheme, proves correct key generation in zero-knowledge, and outputs the key pair and proof}

\begin{description}

\item $\SetupSymb$ \helioSetupDescription[additively-].

\item $\VoteSymb$ 
enciphers 
the vote's bitstring encoding to a tuple of ciphertexts, proves in zero-knowledge 
that each ciphertext is correctly constructed and that the vote is selected from the sequence of
candidates, and outputs the ciphertexts coupled with the proofs.

\item $\TallySymb$ 
selects ballots from the bulletin board for which proofs hold, homomorphically combines the ciphertexts 
in those ballots,\footnotemark\ decrypts the homomorphic combination to reveal the election 
outcome, and announces the outcome, along with a zero-knowledge proof of correct decryption.

\footnotetext{In Section~\ref{sec:homomorph}, we discuss the importance of homomorphic encryption in election schemes.}

\item $\VerifySymb$ checks the proofs and accepts the outcome if these checks succeed.

\end{description}

\noindent
Helios was first released in 2009 as \emph{Helios~2.0},
and the current release is \emph{Helios~3.1.4}.
A new release was planned but never materialised.\footnote{%
	\url{http://documentation.heliosvoting.org/verification-specs/helios-v4}, published c. 2012, 
	accessed 21 Sep 2017. The specification document remains marked ``NOT FINAL'' and ``IN PROGRESS'' as of 2025, and the planned Fall 2012 release never occurred. Development activity on the main Helios repository has been minimal since approximately 2016.}
Henceforth, we'll refer to the planned release as \emph{\heliosspec}.

\begin{figure}[t!]
Algorithm $\VoteSymb$ inputs a vote $v$ selected from candidates $1,\allowbreak\dots,\allowbreak\nC$
and computes ciphertexts $c_1,\allowbreak\dots,\allowbreak c_{\nC-1}$ such that if $v<\nC$, then 
ciphertext $c_v$ contains plaintext	$1$ and the remaining ciphertexts contain plaintext $0$,
otherwise, all ciphertexts contain plaintext $0$. The algorithm also
computes proofs $\sigma_1,\allowbreak\dots,\allowbreak\sigma_\nC$ demonstrating correct computation.
Proof $\sigma_j$ demonstrates that ciphertext $c_j$ contains $0$ or $1$, where
$1\leq\allowbreak j \leq\allowbreak \nC-1$, and proof $\sigma_{\nC}$ demonstrates that the homomorphic combination	
of ciphertexts $c_1 \otimes\allowbreak\cdots\allowbreak\otimes\allowbreak c_{\nC-1}$ 
contains $0$ or $1$. The algorithm outputs the ciphertexts and proofs. 

Algorithm $\TallySymb$ inputs a bulletin board $\bb$; selects all the ballots 
$b_1,\dots, b_k\in\bb$ for which proofs hold, that is, ballots 
$b_i = \Enc{m_{i,1}},\dots,\Enc{m_{i,{\nC-1}}},\allowbreak\sigma_{i,1},\dots,\sigma_{i,\nC}$ 
such that proofs $\sigma_{i,1},\dots,\sigma_{i,\nC}$ hold, where $1 \leq i \leq k$;
forms a matrix of the encapsulated ciphertexts, that is, 
\begin{align*}
	\begin{array}{r@{\hspace{3mm}}c@{\hspace{3mm}}l}
		\cellcolor{light-gray} \Enc{m_{1,1}}, & \cellcolor{light-gray} \dots, & \cellcolor{light-gray} \Enc{m_{1,\nC-1}} \\
		 \multicolumn{1}{c}{\vdots}  &       & \multicolumn{1}{c}{\vdots} 	   \\
		\cellcolor{light-gray} \Enc{m_{k,1}}, & \cellcolor{light-gray} \dots, & \cellcolor{light-gray} \Enc{m_{k,\nC-1}};
\intertext{homomorphically combines the ciphertexts in each column to derive the encrypted outcome, that is,} 
		\Enc{\Sigma_{i=1}^k m_{i,1}}, & \dots, &  \Enc{\Sigma_{i=1}^k m_{i,\nC-1}};
\intertext{decrypts the homomorphic combinations to reveal the frequency of votes $1,\dots,\nC-1$, that is,}
		\Sigma_{i=1}^k m_{i,1},     & \dots, & \Sigma_{i=1}^k m_{i,\nC-1};
	\end{array}
\end{align*}
computes the frequency of vote $\nC$ by subtracting the frequency of any other vote from the 
number of ballots for which proofs hold, that is, $k - \sum_{j=1}^{\nC-1} \sum_{i=1}^k m_{i,j}$; and 
announces the outcome as those frequencies, along with a proof demonstrating correctness
of decryption.
\caption{Helios: Ballot construction and tallying}
\label{fig:HeliosBallotsAndTallying}
\end{figure}

The proofs $\sigma_1, \ldots, \sigma_{\nC}$ in Figure~\ref{fig:HeliosBallotsAndTallying} are \emph{disjunctive proofs} (or \emph{OR-proofs}): each proves that a ciphertext encrypts $0$ \emph{or} $1$ without revealing which. This is achieved by constructing a real proof for the true case and simulating a proof for the false case---possible because the underlying proof system is zero-knowledge. The Fiat-Shamir transformation (Section~\ref{sec:fiat-shamir}) makes these proofs non-interactive, allowing them to be posted alongside ballots on the bulletin board.

\subsubsection{Helios~2.0}\label{sec:helios:2}
Analysis by Cortier \& Smyth~\cite{Smyth12:Helios,Smyth11:Helios} demonstrates that Helios 2.0 fails to provide ballot secrecy. Building on the specification of Helios 2.0 developed by Smyth, Frink, and Clarkson~\cite{Smyth15:ElectionVerifiability},  Smyth~\cite{Smyth15:BallotSecrecyFull} establishes that $\BallotSecrecy$ cannot be satisfied.

\begin{theorem}\label{thm:Helios2Secrecy}
Helios 2.0 does not satisfy $\BallotSecrecy$. 
\end{theorem}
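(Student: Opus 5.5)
To prove Theorem~\ref{thm:Helios2Secrecy}, I will exhibit a concrete probabilistic polynomial-time adversary $\adv$ that wins $\BallotSecrecyGame[\text{Helios 2.0},\adv,\kk]$ with probability non-negligibly greater than $\frac{1}{2}$. I will base it on the known ballot-copying and malleability weakness of Helios~2.0 identified by Cortier \& Smyth~\cite{Smyth12:Helios,Smyth11:Helios}. The root cause is that Helios~2.0 uses the weak Fiat-Shamir transformation (Section~\ref{sec:fiat-shamir}) for its disjunctive proofs $\sigma_1,\dots,\sigma_{\nC}$, and it performs no ballot weeding. As a result, an adversary can take an oracle ballot and produce a distinct ballot $b'\neq b$ that encrypts the same vote and whose proofs still verify.

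The adversary would proceed as follows.
\begin{enumerate}
\item Choose $\nC=2$.
\item Query the oracle on $(1,2)$ to get a ballot $b$ encrypting $v_\beta$.
\item Query the oracle on $(2,1)$ to get a ballot $b''$. This keeps the board balanced.
\item Derive from $b$ a ballot $b'$ that is not in $\{b\mid (b,v_0,v_1)\in L\}$ but encrypts the same plaintext. Re-randomising the ElGamal ciphertext homomorphically and recomputing the OR-proofs is not available, since the randomness is unknown. Instead, I will use the weak Fiat-Shamir flaw, which lets the hash ignore the statement, so a valid proof transcript can be reused or adapted for a related ciphertext. Alternatively, a permuted or re-encoded copy accepted by $\TallySymb$ would do. Concretely, I will follow~\cite{Smyth12:Helios}, where the copied ballot is obtained by permuting the ciphertext-proof pairs or by exploiting the weak Fiat-Shamir challenge, so that the proofs hold.
\item Output $\bb=\{b,b'',b'\}$.
\end{enumerate}

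This board is balanced, because $b'$ is not recorded in $L$. It also satisfies $|\bb|\le\mB$ and $1\le\nC\le\mC$. Tallying counts $b'$ as a third vote for $v_\beta$. The outcome is therefore $(2,1)$ when $\beta=0$ and $(1,2)$ when $\beta=1$, and $\adv$ outputs $g=0$ if and only if candidate~$1$ received two votes. This gives success probability $1-\neglK$, using correctness and completeness of the honest proofs.

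The verification steps are these. I will check that $b'$ passes every proof check in $\TallySymb$, so that it is included in the homomorphic combination. I will check that $b'\notin\{b,b''\}$ except with negligible probability. Finally, I will check that $\TallySymb$ of Helios~2.0, as specified in~\cite{Smyth15:ElectionVerifiability}, applies no duplicate or related-ballot weeding. The main obstacle is the first point: building $b'$ precisely so that its OR-proofs verify under the Helios~2.0 specification. I will first try the simplest route, which is to re-randomise each ciphertext $c_j$ by multiplying in an encryption of $0$. I will then use the weak Fiat-Shamir transform, whose challenge does not bind the statement, to adjust the commitments consistently, as in the Bernhard--Pereira--Warinschi-style attacks. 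If that fails, I will fall back on the ciphertext-reordering or duplication variant from~\cite{Smyth15:BallotSecrecyFull}, which exploits the specification's lack of weeding.
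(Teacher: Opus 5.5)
Your outer game strategy is sound and matches the paper's extended variant. The queries $(1,2)$ and $(2,1)$ keep the board balanced, an extra ballot not recorded in $L$ carries a $\beta$-dependent vote, and you read $\beta$ off the outcome.

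The gap is the step that carries the whole proof: you never construct a valid derived ballot. You acknowledge this yourself as the main obstacle, and your choice $\nC=2$ rules out the paper's construction. With $\nC=2$ a Helios ballot has a single ciphertext $c_1$ with proofs $\sigma_1,\sigma_2$. There is therefore no non-trivial permutation of ciphertext--proof pairs, and your ``reordering'' fallback is empty. Your re-randomisation route is also stated inconsistently. You first call it unavailable, then propose to ``adjust the commitments''. Changing the commitments changes the weak Fiat--Shamir challenge, and you cannot answer a fresh challenge on the real branch of the OR-proof without the encryption coins. Finally, the root cause is not weak Fiat--Shamir. The paper's attack works equally under the strong transformation, which is why it carries over to Helios'12.

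The paper's route is as follows. Take $\nC\geq 3$ and query the oracle with distinct $v_0,v_1\in\{1,\dots,\nC-1\}$. Fix a non-identity permutation $\chi$ on $\{1,\dots,\nC-1\}$. Move each pair $(c_j,\sigma_j)$, for $1\le j\le \nC-1$, to position $\chi(j)$ and keep $\sigma_{\nC}$.
\begin{itemize}
\item Each $\sigma_j$ still proves that its own $c_j$ encrypts $0$ or $1$.
\item $\sigma_{\nC}$ remains valid because the homomorphic product of the ciphertexts is invariant under reordering.
\item The result differs from the oracle ballot except with negligible probability, and it is a ballot for $\chi(v_\beta)$, not a copy of $v_\beta$.
\end{itemize}
Your decision rule must change accordingly. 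On board $\{b\}$, output $0$ iff $\outcome[\chi(v_0)]=1$. On the three-ballot board, output $0$ iff the outcome corresponds to votes $v_0,v_1,\chi(v_0)$.

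If you want to keep $\nC=2$ and a genuine copy, you must spell out the weak Fiat--Shamir re-randomisation. Multiply $c_1$ by $(g^s,\pk^s)$ for a known $s$, and keep every commitment, hence every challenge $e$, fixed. Then replace each response $z$, in both disjuncts of $\sigma_1$ and $\sigma_2$, by $z+es$. This only works if the Helios~2.0 hash covers the commitments alone, so you would have to verify that against the formal specification. That makes it a more specification-dependent argument than the paper's.
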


\noindent 

Cortier \& Smyth~\cite{Smyth12:Helios,Smyth11:Helios} trace the vulnerability to Helios tallying meaningfully related ballots. Specifically, Helios ballots are malleable: starting from a ballot $
  c_1,\allowbreak\dots,\allowbreak c_{\nC-1},\allowbreak
    \sigma_1,\allowbreak\dots,\allowbreak\sigma_{\nC}
$
, 
any permutation $\chi$ on $\{1,\dots,\nC-1\}$ gives
$
  c_{\chi(1)},\allowbreak\dots,\allowbreak c_{\chi(\nC-1)},\allowbreak
    \sigma_{\chi(1)},\allowbreak\dots,\allowbreak\sigma_{\chi(\nC-1)},\allowbreak\sigma_{\nC}
$. 
Thus, ballots are malleable, which is incompatible with ballot 
secrecy (\S\ref{sec:secrecy:simpleProofs}).

\begin{proof}[Proof sketch]
Suppose an adversary queries the oracle with 
(distinct) inputs $v_0,v_1\in\{1,\dots,\nC-1\}$ to derive a 
ballot for $v_\beta$, where integer $\nC \geq 3$ is chosen by the adversary and
bit $\beta$ is chosen by the challenger.
Further suppose the adversary picks a permutation $\chi$ on $\{1,\dots,\nC-1\}$,
abuses malleability to derive a related ballot $b$ for $\chi(v_\beta)$, and
outputs bulletin board $\{b\}$.
The board is balanced, because it does not contain the ballot output
by the oracle.
Suppose the adversary performs the following computation 
on input of election outcome $\outcome$: if $\outcome[\chi(v_0)] = 1$, then 
output $0$, otherwise, output $1$.
Since $b$ is a ballot for $\chi(v_\beta)$, it follows by correctness that
$\outcome[\chi(v_0)] = 1$ iff $\beta = 0$, and $\outcome[\chi(v_1)] = 1$ iff $\beta = 1$,
hence, the adversary wins the game.
\end{proof}

\noindent
For simplicity,  the  proof sketch considers an adversary that omits ballots 
from the bulletin board. Voters might detect such an adversary, because Helios
satisfies individual verifiability, hence, voters can discover if their ballot 
is omitted. The proof sketch can be extended to avoid such detection: 
Let $b_1$ be the ballot output by the oracle in the proof sketch and suppose 
$b_2$ is the ballot output by a (second) oracle query with inputs 
$v_1$ and $v_0$. Further suppose the adversary outputs (the balanced) bulletin 
board $\{b,b_1,b_2\}$ and performs the following computation on input of  
election outcome $\outcome$: 
if $\outcome$ corresponds to votes $v_0,v_1,\chi(v_0)$, then output $0$, otherwise, 
output $1$, where $\chi$ is the permutation chosen by the adversary.
Hence, the adversary wins the game.

Chang-Fong \& Essex show that Helios 2.0 does not satisfy universal
verifiability~\cite[Section~4.1]{Essex16:HeliosVerifiability}, and
Smyth, Frink \& Clarkson use their result to prove that the completeness aspect of
$\UV$ is not satisfied~\cite{Smyth15:ElectionVerifiability}.\footnote{%
  Chang-Fong \& Essex present a vulnerability~\cite[Section~4.2]{Essex16:HeliosVerifiability}
  that should violate $\Soundness$. We leave the proof of this result
  as an exercise for the reader.}

\begin{theorem}\label{thm:helios:completeness}
Helios 2.0 does not satisfy $\Completeness$.
\end{theorem}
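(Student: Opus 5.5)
To show that Helios 2.0 fails $\Completeness$, we exhibit a single probabilistic polynomial-time adversary that wins $\CompletenessGame$ with non-negligible (indeed overwhelming) probability. Following Chang-Fong \& Essex~\cite[Section~4.1]{Essex16:HeliosVerifiability}, the attack exploits a mismatch between what $\TallySymb$ does and what $\VerifySymb$ checks. In Helios 2.0, tallying selects ballots whose proofs hold and combines them homomorphically, but the verification procedure recomputes this selection and the homomorphic combination from the board in a slightly different way. The candidate discrepancy is that verification does not check that ballot components lie in the correct group. Alternatively, it may treat ballots with a malformed structure, such as the wrong number of ciphertexts for $\nC$, differently from tallying. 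The first step is therefore to fix the Helios 2.0 formalisation of Smyth, Frink \& Clarkson~\cite{Smyth15:ElectionVerifiability} and pinpoint precisely which check appears in one algorithm and not the other.

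Next, I construct the adversary $\adv(\pk,\kk)$. It outputs $\nC=2$, which is at most $\mC$ for honest setup, together with a bulletin board $\bb$ consisting of one crafted ballot. This ballot is accepted by the ballot-selection step of $\TallySymb$, so it enters the homomorphic combination, yet $\VerifySymb$ disagrees with the resulting outcome or decryption proof. The concrete plan is to take a ciphertext with a component outside the prime-order subgroup (an element of $\mathbb{Z}_p^*$ of order $2$, say) together with a proof that the weak Fiat--Shamir check still accepts. Because $|\bb|=1\le\mB$, the side conditions in the return line of $\CompletenessGame$ hold.

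The third step follows the honest tallier. $\TallySymb$ decrypts the combined ciphertext with $\sk$ and produces a Chaum--Pedersen style proof of correct decryption. With the malformed component, this decryption proof will not verify, or the recomputed combination in $\VerifySymb$ will differ. Either way, $\VerifySymb$ outputs $0$, so the adversary wins with probability $1$, or $1-\neglK$ once we account for the randomness of key generation.

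The main obstacle is the first step: the argument depends entirely on the specific, somewhat ad hoc details of the Helios 2.0 formalisation. I therefore need to check carefully that the crafted ballot really passes the tallier's selection while failing the verifier's checks, rather than simply being discarded by both. If the group-membership route does not work, the fallback is to look for a structural discrepancy, such as ballot length or how $\nC$ is handled, that the tallier tolerates and the verifier rejects.
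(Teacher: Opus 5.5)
Your overall route is the paper's. You mask a ciphertext component with an element outside the prime-order subgroup, exploit the fact that Helios~2.0 checks neither the parameters nor that ballots are built from them, and let the honest tallier's decryption proof fail on the combined ciphertext. You do not need $\TallySymb$ and $\VerifySymb$ to select ballots differently: in the attack both run the same ballot-proof check and both accept, and the disagreement arises only at decryption.

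The gap is the step you yourself flag as the main obstacle, which is exactly where the paper's key idea lies: you never explain why the masked ballot's proofs $\sigma_1,\sigma_{\nC}$ still verify. Describing it as \emph{a proof that the weak Fiat--Shamir check still accepts} does not answer this. The paper blames weak Fiat--Shamir for the Helios~3.1.4 soundness attack, not for this one, and the mechanism below works whether or not the statement is hashed. Unless something removes the mask from the verification equations, the masked ballot fails its own proof check and both algorithms discard it, which is precisely the outcome you are worried about.

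The missing idea is the paper's remark that the proof is computed \emph{such that an exponent will evaluate to zero during verification, which causes cancellation of the mask}. Write the ciphertext as $(x,y)=(g^{r},g^{m}h^{r})$ and replace $x$ by $x'=\mu x$, where $\mu$ has order $2$ modulo $p$ (e.g.\ $\mu=p-1$). In the group equations of the disjunctive Chaum--Pedersen proof, $x'$ enters only through powers $(x')^{e_j}=\mu^{e_j}x^{e_j}$ of the sub-challenges $e_0,e_1$. So the mask cancels exactly when these exponents are $0$ modulo $2$, i.e.\ even.

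The adversary therefore builds the OR-proof for its plaintext $m\in\{0,1\}$ as usual, with two changes. It picks the simulated branch's challenge even. It then resamples commitments until the real branch's challenge (the hash minus the simulated one) is also even. Each attempt succeeds with probability about $1/2$, and the resulting proof verifies for $(x',y)$. With $\nC=2$, $\sigma_1$ and $\sigma_2$ both concern this one ciphertext, so you must do this for both.

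Only then does your third step apply. There you should claim a non-negligible success probability rather than probability $1$. If you mask $x$ and $\sk$ happens to be even, the factor $\mu^{\sk}$ vanishes, decryption is correct, and the tallier's Chaum--Pedersen check may well pass. Non-negligible success is all the theorem needs.
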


\noindent
Chang-Fong \& Essex attribute the vulnerability to not checking the suitability of 
cryptographic parameters nor 
checking that ballots are constructed from such parameters.

\begin{proof}[Proof sketch]
Suppose an adversary computes a ciphertext and masks a term of that ciphertext. 
Moreover, suppose the adversary falsifies a proof of correct construction in a 
manner that hides malice. In particular, the adversary computes the proof such 
that an exponent will evaluate to zero during verification, which causes cancellation 
of the mask. (This is possible because verification does not check that ballots 
are constructed from suitable cryptographic parameters.)
Suppose the adversary computes a bulletin board containing the masked 
ciphertext and proof. Moreover, suppose that the challenger tallies that 
board. The masked ciphertext will be homomorphically combined with other
ciphertexts and decrypted, because the proof holds. Yet, the proof of correct 
decryption constructed by the challenger will fail, due to the masked
ciphertext, hence, the adversary wins the game.
\end{proof}

\noindent
The vulnerability was mitigated against in Helios 3.1.4 
by performing the necessary checks.

\subsubsection{Helios 3.1.4}\label{sec:helios:3}
Ballots remain malleable in Helios 3.1.4, hence, ballot secrecy is not satisfied,
and Smyth~\cite{Smyth15:BallotSecrecyFull} proves that $\BallotSecrecy$ is not satisfied, using the formal description of Helios~3.1.4 by Smyth, Frink \& Clarkson~\cite{Smyth15:ElectionVerifiability}.

\begin{corollary}\label{cor:Helios3Secrecy}
Helios 3.1.4 does not satisfy $\BallotSecrecy$.
\end{corollary}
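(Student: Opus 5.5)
The plan is to reuse the attack from the proof sketch of Theorem~\ref{thm:Helios2Secrecy} almost verbatim. The corollary follows once we check that the two features that attack relies on are unchanged in Helios~3.1.4. The first is that ballots are tuples $c_1,\dots,c_{\nC-1},\sigma_1,\dots,\sigma_{\nC}$ that can be permuted. The second is that the tallying algorithm accepts any ballot whose proofs hold.

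The only change between the releases was the one that mitigated Theorem~\ref{thm:helios:completeness}: added checks on the suitability of cryptographic parameters and that ballots are built from them. Using the formal description of Helios~3.1.4 by Smyth, Frink \& Clarkson~\cite{Smyth15:ElectionVerifiability}, I will argue these checks do not touch the permutation malleability.

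The concrete steps are as follows.

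First, fix $\nC\ge 3$, which $\adv$ outputs, and distinct $v_0,v_1\in\{1,\dots,\nC-1\}$. $\adv$ queries $\oracle(v_0,v_1)$ to get $b_1$ and $\oracle(v_1,v_0)$ to get $b_2$.

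Second, $\adv$ picks a non-identity permutation $\chi$ on $\{1,\dots,\nC-1\}$ and forms $b$ from $b_1$ by permuting the ciphertexts and their first $\nC-1$ proofs, keeping $\sigma_{\nC}$ fixed.

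Third, I verify that $b$ passes Helios~3.1.4 verification. Each $\sigma_{\chi(j)}$ is a proof that $c_{\chi(j)}$ encrypts $0$ or $1$, and it remains valid after relocation. This holds because the proof statement involves only that ciphertext and the public key. In particular, the weak Fiat--Shamir hash (Section~\ref{sec:fiat-shamir}) does not bind a proof to its position in the ballot. The final proof $\sigma_{\nC}$ remains valid because the homomorphic product $c_1\otimes\cdots\otimes c_{\nC-1}$ is invariant under permutation, as $\otimes$ is commutative. The new parameter checks pass because $b$ reuses exactly the honest parameters.

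Fourth, $\adv$ outputs $\bb=\{b,b_1,b_2\}$. This board is balanced because $b$ was not produced by the oracle and the queries $(v_0,v_1)$ and $(v_1,v_0)$ cancel. Then $\adv$ guesses $0$ iff the outcome matches votes $v_0,v_1,\chi(v_0)$. Choosing $\chi$ so that $\chi(v_0)\ne\chi(v_1)$ makes the two cases yield distinct outcomes, so by correctness $\adv$ wins with overwhelming probability.

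The main obstacle is the third step: making rigorous that Helios~3.1.4 verification, as formalised, genuinely accepts the permuted ballot. That means confirming that no position index or ordering of ciphertexts enters the hash inputs of the disjunctive proofs, and that the added parameter checks are about group parameters only. I would first try to settle this by reading the formal specification in~\cite{Smyth15:ElectionVerifiability} and the matching argument in~\cite{Smyth15:BallotSecrecyFull}. If the 3.1.4 formalisation hashed anything ballot-wide, I would fall back to the observation that $\sigma_{\nC}$ is the only ballot-wide proof and is permutation-invariant.
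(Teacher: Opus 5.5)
Your proposal is correct and follows the paper's route. The paper proves the corollary in one line: Helios 3.1.4 does not address related (permuted) ballots, so the attack from Theorem~\ref{thm:Helios2Secrecy} carries over unchanged, including the balanced variant with board $\{b,b_1,b_2\}$ that you use. Your checks that the relocated $\sigma_j$ and the permutation-invariant $\sigma_{\nC}$ still verify, and that the new parameter checks are unaffected, simply make explicit what the paper leaves implicit. Two small notational points: permuting to $c_{\chi(1)},\dots,c_{\chi(\nC-1)}$ gives a ballot for $\chi^{-1}(v_\beta)$, so your test should use $\chi^{-1}(v_0)$ (or take $\chi$ to be a transposition), and $\chi(v_0)\neq\chi(v_1)$ holds automatically for any permutation.
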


\noindent
A proof of Corollary~\ref{cor:Helios3Secrecy} follows from Theorem~\ref{thm:Helios2Secrecy}, 
because 
Helios 3.1.4 does not address issues arising from related ballots.

Bernhard, Pereira \& Warinschi 
show that Helios~3.1.4 does not satisfy universal verifiability~\cite[Section~3]{Bernhard12:Helios},
and Smyth, Frink \& Clarkson use their result to prove that the soundness aspect of $\UV$ is not satisfied.\footnote{%
  Bernhard, Pereira \& Warinschi present a vulnerability~\cite[p632]{Essex16:HeliosVerifiability} 
  that should violate $\Completeness$. Again, the proof of this result 
  is left as another exercise for the reader.}

\begin{theorem}\label{thm:Helios2}
Helios 3.1.4 does not satisfy $\Soundness$. 
\end{theorem}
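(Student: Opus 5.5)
We will exhibit an adversary that wins $\SoundnessGame$ against Helios~3.1.4 with non-negligible probability. We build it on the vulnerability of Bernhard, Pereira \& Warinschi, which comes from Helios~3.1.4 using the \emph{weak} Fiat-Shamir transformation (Section~\ref{sec:fiat-shamir}) for its proof of correct decryption. In the weak variant the challenge is a hash of the commitment only. The statement being proved, namely the public key, the homomorphically combined ciphertext and the claimed plaintext, is therefore not bound into the hash. Since $\Soundness$ lets the adversary choose $\pk$, $\bb$, $\outcome$ and $\tpf$ freely, there is no trusted setup to respect, and this freedom can be exploited directly.

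\textbf{Construction.} The adversary first picks ElGamal group parameters, commitment values and a response in advance, so it can compute the hash challenge $c$. It then solves for a public key and a decryption factor such that the Chaum--Pedersen verification equations hold for a claimed plaintext of its choosing. The adversary also needs a bulletin board whose ballots pass verification. We take $\nC=2$ and a single ballot $b$ that the adversary honestly encrypts for vote~$1$ under the key it has chosen, with valid disjunctive proofs. The adversary does not need $\sk$ to produce this ballot. The combined ciphertext is then just $b$'s ciphertext.

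\textbf{Forging the decryption proof.} We set the announced outcome to $\outcome=(0,1)$ (or $(2,-1)$, or any vector different from $(1,0)$). The forged decryption proof must make $\VerifySymb$ accept the claim that the ciphertext decrypts to $g^0$. With the weak Fiat-Shamir transformation the public key is fixed only after $c$ is known, so the adversary solves $\pk^c$-type equations for an exponent $x$. This yields a $\pk=g^x$ under which the decryption factor $d$ satisfies both verification checks, namely $g^{s}=a\cdot\pk^{c}$ and $\alpha^{s}=a'\cdot d^{c}$, together with $\beta/d=g^{0}$.

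\textbf{Ordering the steps.} This is the main obstacle: the ballot must be encrypted under $\pk$, yet $\pk$ is determined only after the proof's challenge. We resolve it with the same weakness, because the ballot's proofs also use weak Fiat-Shamir. Concretely, we fix the ciphertext $(\alpha,\beta)=(g^r,g^{m}\cdot h)$ with $h$ chosen so that it can later be explained as $\pk^r$ once $x$ is solved. We then produce the ballot's disjunctive proofs after $\pk$ is fixed, which is possible since those proofs also do not hash $\pk$.

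\textbf{Checking the winning condition.} Finally we verify the two conditions. First, $\VerifySymb$ returns $1$ because every proof equation holds. Second, $\outcome\neq\correcttally(\pk,\nC,\bb,\kk)$: by injectivity, $b$ is a ballot for exactly one vote, namely vote~$1$, so the correct outcome is $(1,0)$. The adversary runs in polynomial time and succeeds with probability~$1$, so $\Soundness$ fails. I am not sure about every algebraic detail of the simultaneous solution. If it proves awkward, the fallback is to let the ballot be the degenerate one from Bernhard \emph{et al.}, with ciphertext components chosen so that its proofs verify for any $\pk$.
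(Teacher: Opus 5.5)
Your overall idea is reasonable: use weak Fiat--Shamir in Helios~3.1.4, with a key you choose yourself, to make $\VerifySymb$ accept a wrong outcome. But the central algebraic step fails in the order you propose.

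\textbf{Why your ordering fails.} Write $a=g^{t}$, $a'=g^{t'}$, $c=H(a,a')$, $\pk=g^{x}$ and $\alpha=g^{r}$. Take your honest ballot $\beta=g\cdot\pk^{r}$ with claimed decryption factor $d=\beta$. The check $g^{s}=a\cdot\pk^{c}$ forces $s=t+xc$. Substituting into $\alpha^{s}=a'\cdot d^{c}$ gives $rt+rxc=t'+c+rxc$, that is, $rt=t'+c$.

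The key exponent $x$ cancels, so solving for $x$ after $c$ buys you nothing. In your \emph{Ordering the steps} paragraph the ciphertext is fixed before the challenge, and requiring $h=g^{xr}$ then pins $x$ down too. So $r$, $t$ and $t'$ are all fixed before $c$ is computed, and $rt=t'+c$ holds only with negligible probability. The forged decryption proof you describe does not exist under that ordering.

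\textbf{The repair.} The freedom lies in the ballot's encryption coins, not in the key. Choose $t,t'$ and any $x$, compute $c$, then set $r=(t'+c)/t$ modulo the group order and $s=t+xc$. This is legitimate because weak Fiat--Shamir does not hash the ciphertext. The key-generation proof, the ciphertext and the ballot's disjunctive proofs can then all be computed honestly. So $b$ genuinely is an output of $\VoteSymb$ for vote~$1$, and no proof other than the decryption proof needs forging. Your final check then goes through: $\outcome=(0,1)$ is accepted while $\correcttally(\pk,\nC,\bb,\kk)=(1,0)$ by injectivity.

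\textbf{The paper's route and your fallback.} The paper instead forges the ballot's proof of correct construction. It hashes that proof's commitments first, chooses the private key, challenges and responses as functions of the hash, and encrypts some $m>1$. It then decrypts honestly with a genuine decryption proof and claims an outcome that counts this ballot.

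Your fallback gestures at this but has two problems. First, the forged proofs do not verify ``for any $\pk$''; the key must be chosen after, and as a function of, the hash. Second, your winning-condition argument no longer applies. A ballot with forged proofs encrypting $m>1$ is not an output of $\VoteSymb$, so $\correcttally$ is $(0,0)$, not $(1,0)$. The win instead comes from the accepted outcome accounting for one ballot, since the last component is the number of valid ballots minus the others.

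In short, the paper exploits weak Fiat--Shamir in the ballot proof and decrypts honestly, while your repaired route exploits it in the decryption proof and casts an honest ballot.
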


\noindent
Bernhard \emph{et al.}
attribute vulnerabilities to application of the Fiat-Shamir
transformation without inclusion of statements in hashes (that is, weak Fiat-Shamir).

\begin{proof}[Proof sketch]
Suppose an adversary partially computes a proof of ciphertext construction,
before computing a ciphertext and without computing a key pair. In particular,
suppose the adversary computes the challenge hash. (This is possible because
weak Fiat-Shamir does not include statements in hashes, hence, ciphertexts
are not included in hashes.) Further suppose the adversary computes a private
key as a function of that hash, challenges as functions of the hash and the 
private key, and responses as functions of the challenges and some coins.
Moreover, suppose the adversary computes a public key (from the private key) 
and a proof of correct key generation. That proof is valid, because the private 
key could have been correctly computed. Suppose the adversary encrypts some
plaintext $m$ (such that $m>1$) to a ciphertext, using the aforementioned coins. Further suppose
the adversary proves correct decryption of that ciphertext. That proof is valid,
because the ciphertext is well-formed. Finally, suppose the adversary claims
$(m,m-1)$ is the election outcome corresponding to the ballot containing the 
ciphertext and falsified proof of correct construction.
The verification procedure will accept that outcome, because all proofs hold,
yet the election outcome is clearly invalid, hence, the adversary wins the game.
\end{proof}

\subsubsection{\heliosspec}\label{sec:helios:spec}

\heliosspec\ is intended to mitigate against vulnerabilities.
In particular, this specification incorporates the Fiat-Shamir transformation, rather than weak Fiat-Shamir. There are plans to incorporate what is called \emph{ballot weeding}, that is, to omit meaningfully related ballots from tallying.
Smyth, Frink \& Clarkson show that \heliosspec\ does not satisfy universal verifiability~\cite{Smyth15:ElectionVerifiability}, and Smyth shows that ballot secrecy is not satisfied either~\cite{Smyth15:BallotSecrecyFull}.

\begin{remark}\label{remark:HeliosSpec}
\heliosspec\ does not satisfy $\Soundness$.
\end{remark}
\begin{proof}[Proof sketch]
Suppose an adversary constructs a ballot, 
abuses malleability to derive a related ballot, and tallies both ballots. 
Ballot weeding will omit at least
one of those ballots. (\heliosspec\ does not yet define a particular
ballot weeding mechanism, hence, the precise behaviour is unknown. 
Nonetheless, we are assured that at least one ballot will be omitted, 
because the ballots are related.) Hence, tallying produces an election
outcome that omits a vote, which soundness forbids, thus, the 
adversary wins the game.
\end{proof}

\begin{remark}\label{rem:helios3}
\heliosspec\ does not satisfy $\BallotSecrecy$.
\end{remark}

\begin{proof}[Proof sketch]
Neither ballot weeding nor the Fiat-Shamir transformation
eliminate the vulnerability we identified in Helios 3.1.4,
hence, we conclude by the proof sketch of Theorem~\ref{thm:Helios2Secrecy}.
\end{proof}

We point out that Remarks~\ref{remark:HeliosSpec} \&~\ref{rem:helios3} are stated informally, because
there is no formal description of \heliosspec. Such a description can be derived as a straightforward variant of Helios 3.1.4 that uses ballot weeding and applies the Fiat-Shamir transformation (rather than the weak Fiat-Shamir transformation). But arguably these details provide little value,
so we do not pursue them.

\subsubsection{\heliosnext}\label{sec:helios:next}

Smyth, Frink \& Clarkson~\cite{Smyth15:ElectionVerifiability} propose \heliosnext, 
a variant of Helios that uses the Fiat-Shamir transformation and non-malleable 
ballots, to overcome the aforementioned vulnerabilities. They prove that \heliosnext\ 
satisfies verifiability, and Smyth~\cite{Smyth15:BallotSecrecyFull} proves that ballot secrecy is satisfied too.

\begin{theorem}\label{thm:heliosNextIVUV}
\begin{sloppypar}
\heliosnext\ satisfies both $\IV$ and $\UV$.
\end{sloppypar}
\end{theorem}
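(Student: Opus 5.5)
Since $\IV$ and $\UV$ are separate conditions, I will establish them one at a time. Throughout, \heliosnext{} is viewed as an instance of the generic construction of Smyth, Frink \& Clarkson~\cite{Smyth15:ElectionVerifiability}. That construction builds an election scheme from three ingredients: an additively homomorphic encryption scheme (exponential ElGamal), sigma-protocol proofs made non-interactive by the (strong) Fiat--Shamir transformation of Section~\ref{sec:fiat-shamir}, and non-malleable ballots. The plan is to reduce each game to a standard property of one of these ingredients in the random oracle model.

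For $\IV$ (Definition~\ref{def:iv}), I observe that a \heliosnext{} ballot contains ElGamal ciphertexts $(g^{r},g^{m}\pk^{r})$ with fresh coins $r$ chosen uniformly from $\mathbb{Z}_q$. Two independently generated ballots can collide only if their first components coincide, that is, $g^{r}=g^{r'}$. This happens with probability at most $1/q$ per component, even when the adversary chooses $\pk$, $\nC$, $v$ and $v'$. The adversary may also choose a degenerate $\pk$, so I will use the fact that $\VoteSymb$ outputs $\perp$ unless the parameters pass the group-membership checks that fixed Theorem~\ref{thm:helios:completeness}. With that in place, $\Succ(\IVGame)\le 1/q$, which is negligible.

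For $\UV$ (Definition~\ref{def:UV}), I need three things.
\begin{itemize}
\item \textbf{Completeness.} Honest tallying selects exactly the ballots whose proofs verify. It then homomorphically combines and decrypts them and proves correct decryption. Completeness of the sigma protocols, together with the fact that $\VerifySymb$ repeats the same ballot selection deterministically, shows that $\VerifySymb$ accepts with probability $1$ for keys output by $\SetupSymb$. The Helios~2.0 parameter attack cannot arise, because $\VerifySymb$ now checks the group parameters.
\item \textbf{Injectivity.} A ballot for $v$ decrypts, under the unique $\sk$ determined by a valid $\pk$, to the bitstring encoding of $v$. Distinct votes have distinct encodings, so $b\neq b'$ whenever $v\neq v'$.
\item \textbf{Soundness.} Suppose $\VerifySymb$ accepts $(\pk,\bb,\nC,\outcome,\tpf)$. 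The key-generation proof, the per-ballot disjunctive proofs and the decryption proof are all made non-interactive with strong Fiat--Shamir, so the statements are hashed. This lets me invoke simulation-sound extractability of the resulting proofs in the random oracle model. From this, every accepted ballot encrypts a valid encoding, every rejected ballot is not a $\VoteSymb$ output, and the announced frequencies equal the decryptions of the column products. Summing the columns then gives exactly $\correcttally(\pk,\nC,\bb,\kk)$.
\end{itemize}

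The main obstacle is soundness, specifically the step where adversarially chosen $\pk$ and ballots must still be shown to match $\correcttally$. Two points need care. First, the extractor must work when the adversary controls the key, which is exactly the gap the weak Fiat--Shamir attack (Theorem~\ref{thm:Helios2}) exploited; including $\pk$ and the ciphertexts in the hash is what closes it. Second, every ballot that fails its proofs, or is weeded as a meaningfully related duplicate, must not count under $\correcttally$. Because \heliosnext{} ballots are non-malleable, any two distinct valid ballots are separate $\VoteSymb$ outputs, which avoids the problem of Remark~\ref{remark:HeliosSpec}. I would first try to verify these two points by a rewinding/forking argument on the random oracle, and otherwise cite~\cite{Smyth15:ElectionVerifiability} for the extractability lemma.
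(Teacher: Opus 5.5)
Your decomposition is the paper's. You get $\IV$ from non-colliding ElGamal ciphertexts under keys that $\VoteSymb$ accepts, and $\UV$ from completeness plus soundness, where tallying discards ill-formed ballots and the column products decrypt to the outcome. Making injectivity explicit is also right.

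The soundness step, however, does not close as written, for two reasons.
\begin{itemize}
\item \textbf{Accepted ballots must be $\VoteSymb$ outputs.} $\correcttally$ counts $b$ for $v$ only if $b=\Vote[\pk,v,\nC,\kk;r]$ for some coins $r$. Extractability only tells you that the ciphertexts of an accepted ballot encrypt a valid encoding of some $v$. It says nothing about the proofs $\sigma_1,\dots,\sigma_{\nC}$ inside $b$ being honest-prover outputs, so ``summing the columns gives exactly $\correcttally$'' does not follow. You need the extra fact that every accepting proof of a true statement is what the honest prover outputs on suitable coins. This holds for the ElGamal ciphertexts and the disjunctive proofs provided all coins are drawn from the whole of $\mathbb{Z}_q$: in the real branch the commitment fixes the prover's exponent and verification forces the response, and in the simulated branch the challenge and response are themselves coins. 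Only with this fact is every accepted ballot a $\VoteSymb$ output for the vote it encodes.
\item \textbf{Non-malleability is not what avoids Remark~\ref{remark:HeliosSpec}.} In the $\Soundness$ game the adversary outputs $\pk$ itself and controls every coin. It can therefore run $\VoteSymb$ twice with correlated coins and obtain two related ballots that are both genuine $\VoteSymb$ outputs. Non-malleability is a guarantee for honestly generated keys and coins the adversary does not know, so it says nothing here. Soundness holds because \heliosnext{} does no weeding: $\TallySymb$ discards exactly the ballots whose proofs fail. By the facts above, these are exactly the ballots $\correcttally$ ignores.
\end{itemize}

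Two smaller corrections.
\begin{itemize}
\item \textbf{Completeness does not hold with probability $1$.} The adversary chooses $\bb$, and exponential ElGamal decryption needs a discrete logarithm. So $\TallySymb$ can recover a column sum only if it is small, which is guaranteed only if every accepted ciphertext really encrypts $0$ or $1$. The Chang-Fong--Essex attack behind Theorem~\ref{thm:helios:completeness} is precisely a ballot whose proof verifies although its ciphertext is malformed. Excluding it therefore appeals to soundness of the ballot proofs (with subgroup checks on ballot elements), and acceptance is overwhelming rather than certain.
\item \textbf{The $\IV$ bound needs a size check.} The bound $1/q$ is negligible only if $\VoteSymb$ also rejects keys whose group order is small relative to $\kk$. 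Group-membership checks alone would let the adversary choose a small $q$ and obtain collisions with non-negligible probability.
\end{itemize}
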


\begin{proof}[Proof sketch]
Smyth \emph{et al.}~\cite{Smyth15:ElectionVerifiability,Smyth18:HeliosMixnetVerifiability}
prove that ElGamal
produces ciphertexts that do not collide for correctly generated keys.
Hence, \heliosnext\ ballots do not collide, because they contain
ElGamal ciphertexts constructed using such keys. Thus, \heliosnext\
satisfies $\IV$. Smyth, Frink \& Clarkson also prove that $\UV$
is satisfied. Their proof shows that tallying discards ill-formed 
ballots and that the remaining ballots all contain ciphertexts that 
encipher bitstring encodings of votes, hence, the homomorphic combination
of those ciphertexts contain the encrypted outcome, which is decrypted
to reveal the correct outcome ($\Soundness$). Moreover, they show that 
such outcomes are always accepted ($\Completeness$). 
\end{proof}

\begin{theorem}\label{thm:heliosNext}
\heliosnext\ satisfies $\BallotSecrecy$.
\end{theorem}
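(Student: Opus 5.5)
The plan is to follow the third simplification of Section~\ref{sec:secrecy:simpleProofs} and reduce the claim to the $\INDPA$ security of the encryption underlying \heliosnext{} ballots. First, we invoke Smyth's coincidence result~\cite[Theorem~4]{Smyth15:BallotSecrecyFull}, which says that $\BallotSecrecy$ and $\BallotIndependence$ coincide for schemes with zero-knowledge tallying proofs that tally individual ballots correctly. The tallying proofs of \heliosnext{} are Fiat--Shamir proofs of correct decryption, and these are zero-knowledge in the random oracle model. Correct tallying of individual ballots follows from Theorem~\ref{thm:heliosNextIVUV}, since universally-verifiable schemes tally individual ballots correctly~\cite[Lemmata~8 \&~24]{Smyth15:BallotSecrecyFull}. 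It therefore suffices to show that \heliosnext{} satisfies $\BallotIndependence$.

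Second, we view a \heliosnext{} ballot as the output of an encryption scheme $\Pi'$. A plaintext $v$ is mapped to its bitstring encoding, each component is ElGamal-encrypted, and the proofs $\sigma_1,\dots,\sigma_{\nC}$ are attached using the strong Fiat--Shamir transformation. Decryption checks the proofs, outputting $\bot$ on failure, and otherwise decodes. We argue that $\Pi'$ satisfies $\INDPA$ by appealing to the known result that ElGamal augmented with simulation-sound extractable (strong Fiat--Shamir) proofs of plaintext knowledge is non-malleable, and hence $\INDPA$ by Bellare and Sahai~\cite{Bellare99:IND-k-CPA}, under DDH in the random oracle model.

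Third, we mimic the proof of Proposition~\ref{prop:encToVoteBS}. From an adversary $\adv$ against $\BallotIndependence$ we build $\Adv$ against $\INDPA$ of $\Pi'$:
\begin{enumerate}
\item $\Adv$ forwards the public key and $(v_0,v_1)$ from $\adv$.
\item $\Adv$ relays the challenge ballot to $\adv$.
\item $\Adv$ submits the resulting board $\bb$, with $b\notin\bb$, as its parallel decryption vector.
\item $\Adv$ recomputes the Helios outcome from the returned plaintexts: it discards entries equal to $\bot$, counts the encoded votes, and derives the count for candidate $\nC$ by subtraction.
\end{enumerate}
This reproduces $\TallySymb$ exactly, because a ballot passes the proof checks if and only if the decryption oracle returns a non-$\bot$ value. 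Soundness of the proofs also guarantees that every accepted ballot decodes to a valid vote. $\Adv$ then outputs $\adv$'s guess.

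The main obstacle is the second step: establishing that $\Pi'$ is genuinely $\INDPA$. This requires that $\adv$ cannot maul the challenge ballot into a different valid ballot, for instance by permuting ciphertexts together with their proofs, as in the attack on Helios~2.0. For \heliosnext{} we would argue that the hash in each proof binds the statement, including the full ciphertext tuple and the position index. Rearranging components therefore invalidates the proofs, and extraction by rewinding, made possible through simulation soundness, lets $\Adv$ answer well-formed adversarial ballots without the secret key. We would try first to cite the Bernhard--Pereira--Warinschi style result for Enc+PoK under strong Fiat--Shamir, and only if it does not fit the tuple format would we prove a tailored extension.
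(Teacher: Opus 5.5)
Your outer structure is the paper's. The paper's sketch also uses zero-knowledge tallying proofs and universal verifiability (Theorem~\ref{thm:heliosNextIVUV}, which gives correct tallying of individual ballots) to let Smyth's coincidence theorem reduce $\BallotSecrecy$ to $\BallotIndependence$. The paper then simply cites Proposition~21 of~\cite{Smyth15:BallotSecrecyFull} for $\BallotIndependence$ of \heliosnext{}. You instead unfold that result: you view ballots as ciphertexts of a scheme $\Pi'$ and replay the reduction from Proposition~\ref{prop:encToVoteBS}. Step~3 is fine, and it needs no extraction, because $\Adv$ already has the parallel decryption oracle. Extraction belongs inside step~2, where a reduction to DDH must answer that oracle without $\sk$. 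The unfolding is worth having, since it shows that $\BallotIndependence$ for \heliosnext{} is just non-malleability of the ballot-as-ciphertext, as for $\encToVoteSymb$.

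The gap is step~2: the citation you name does not cover it, so your argument is incomplete at its crux.
\begin{itemize}
\item The Bernhard--Pereira--Warinschi Enc+PoK theorem concerns a single ciphertext with a single Fiat--Shamir proof of knowledge. A ballot here is a tuple with separately hashed disjunctive proofs $\sigma_1,\dots,\sigma_{\nC-1}$, plus a proof $\sigma_{\nC}$ about the homomorphic product.
\item Strong Fiat--Shamir alone does not make such ballots non-malleable. \heliosspec{} uses it and still falls to the permutation attack (Remark~\ref{rem:helios3}): each $\sigma_j$ stays valid for $c_j$ in its new position, and $\sigma_{\nC}$ is permutation-invariant.
\item Your $\INDPA$ argument must therefore rest on the specific extra binding \heliosnext{} puts into its hashes. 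This paper does not spell that binding out; you assume both the index and the whole tuple are hashed.
\item If only the index is bound, you must also handle adversarial ballots that reuse some $(c_j,\sigma_j)$ from the challenge alongside fresh components and a fresh $\sigma_{\nC}$.
\end{itemize}
So the ``tailored extension'' is not a fallback. It is the heart of the proof, and essentially the content of the Proposition~21 the paper cites; either cite that directly or carry the extension out.

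Two smaller details also need fixing. First, $\Pi'$'s key generation must output \heliosnext{}'s proof of correct key generation as part of the public key; otherwise $\Adv$ cannot produce that proof without $\sk$ or control of the random oracle. Second, $\Pi'$'s messages must carry $\nC$, with $\DecSymb$ reading it off the tuple length. This is because $\adv$ chooses $\nC$ only after seeing the key and the encoding length depends on it, and $\Adv$'s two challenge messages must lie in a fixed message space and have equal length.
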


\begin{proof}[Proof sketch]
\begin{sloppypar}
Smyth proves that \heliosnext\ satisfies $\BallotIndependence$~\cite[Proposition~21]{Smyth15:BallotSecrecyFull}
and Smyth, Frink \& Clarkson prove that $\UV$ is satisfied too 
(Theorem~\ref{thm:heliosNextIVUV}), moreover, Smyth proves that 
\heliosnext\ uses zero-knowledge tallying proofs, which suffices 
for $\BallotSecrecy$ (\S\ref{sec:secrecy:simpleProofs}).\qedhere
\end{sloppypar}%
\end{proof}

\noindent
These results (summarised in Table~\ref{table:HeliosSummary}) provide strong 
motivation for future Helios releases being based upon \heliosnext, since it 
is the only variant of Helios which is proven to satisfy both ballot secrecy 
and verifiability. (As a side remark, beyond secrecy and verifiability, eligibility, which we mentioned in Section~\ref{sec:intro}, is 
  known not to be satisfied~\cite{2013-truncation-attacks-to-violate-beliefs,2014-truncation-attacks-to-violate-beliefs,Smyth16:helios-eligibility}.)

\begin{table}
\begin{center}
\begin{tabular}{ l || c | c | c| c}
        & Helios 2.0 & Helios 3.1.4 & \heliosspec & \heliosnext \\ \hline
Ballot secrecy 	          & \xmark & \xmark & \xmark & \cmark
\\ \hline
Individual verifiability  & \cmark & \cmark & \cmark & \cmark
\\ \hline
Universal verifiability   & \xmark & \xmark & \xmark & \cmark
\end{tabular}
\end{center}

\footnotesize{%
Cortier \& Smyth identify a secrecy vulnerability in Helios 2.0 
and Helios 3.1.4~\cite{Smyth12:Helios}, and Smyth shows the 
vulnerability is exploitable in \heliosspec\ when the adversary 
controls ballot collection~\cite{Smyth15:BallotSecrecyFull}. Moreover, 
Smyth proves that \heliosnext\ satisfies 
ballot secrecy. Bernhard, 
Pereira \& Warinschi identify universal-verifiability vulnerabilities 
in Helios 2.0 and Helios 3.1.4~\cite{Bernhard12:Helios}, Chang-Fong 
\& Essex identify vulnerabilities in Helios 2.0~\cite{Essex16:HeliosVerifiability}, 
and Smyth, Frink, \& Clarkson identify a vulnerability in 
\heliosspec~\cite{Smyth15:ElectionVerifiability}. Moreover, Smyth, 
Frink, \& Clarkson prove that \heliosnext\ satisfies individual and 
universal verifiability.
}
\caption{Summary of Helios security results}
\label{table:HeliosSummary}
\end{table}

\paragraph{A note on real-world deployment.} Despite the cryptographic vulnerabilities identified in the research literature, Helios has continued to be used for various elections. The International Association for Cryptologic Research (IACR) has used Helios annually since 2010 to elect board members. However, in 2025, a trustee lost their private key during the IACR election, making it impossible to decrypt the results. The IACR voided the election and subsequently implemented additional safeguards, including a 2-out-of-3 key threshold mechanism. This incident illustrates that even cryptographically sound systems can fail due to operational issues, highlighting the importance of key management and threshold cryptography in practice.

\subsection{Case study II: Helios Mixnet}\label{sec:heliosMixnet}

Helios Mixnet can be informally modelled as the following election scheme:

\begin{description}

\item $\SetupSymb$ 
  \helioSetupDescription.

\item $\VoteSymb$ enciphers the vote to a ciphertext, proves correct ciphertext construction in 
	zero-knowledge, and outputs the ciphertext coupled with the proof.

\item $\TallySymb$ 
	selects ballots from the bulletin board for which proofs hold, mixes the ciphertexts in those ballots, decrypts the ciphertexts output by the mix to reveal the election outcome (that is, the frequency distribution of votes) and any ill-formed votes (that is, votes that are not selected from the sequence of candidates), and announces that outcome, along with zero-knowledge proofs demonstrating correct decryption.

\item $\VerifySymb$ checks the proofs and accepts the outcome if these checks succeed.

\end{description}

\noindent
We have seen no implementation of Helios Mixnet by either Adida~\cite{Adida08} or Bulens, Giry \& Pereira~\cite{Pereira11:HeliosMixnet}.
Tsoukalas \emph{et al.}~\cite{Tsoukalas13:HeliosToZeus} released \emph{Zeus} as an independent variant (or fork) of Helios spliced with mixnet code to derive such an implementation. 
Building upon this, Yingtong Li released \emph{helios-server-mixnet} as an extension of Zeus with threshold asymmetric encryption and some other minor changes.

As noted before in Theorem~\ref{thm:helios:completeness}, Helios 2.0 does not satisfy completeness, which means that any implementations 
of Helios Mixnet did not satisfy completeness until Helios 
was patched because such implementations build off the Helios code and 
do not add code to check cryptographic parameters. In addition to the lack of completeness, Smyth~\cite{Smyth18:HeliosMixnetVerifiability}  
identified a soundness vulnerability in Helios Mixnet.  

\begin{remark}\label{rem:heliosM}
Zeus does not satisfy $\Soundness$.
\end{remark}

\noindent
Smyth traces the vulnerability to the weak Fiat-Shamir transformation. This vulnerability was reported to the developers of Zeus and helios-server-mixnet in 2018, who promptly adopted and deployed the proposed fix~\cite[Section~4]{Smyth18:HeliosMixnetVerifiability}.

\begin{proof}[Proof sketch]
We use the term subdistribution to refer to an incomplete frequency distribution of votes, meaning one that accounts for only a subset of the actual votes cast.
For the proof sketch, suppose an adversary constructs some ballots and mixes
the ciphertexts in those ballots. Further suppose the 
adversary decrypts the ciphertexts output by the mix to 
reveal the frequency distribution of votes and then selects some ciphertexts 
that decrypt to a (strict) subdistribution. The adversary can then prove correct 
decryption of those ciphertexts and falsify proofs of 
the remaining ciphertexts decrypting to arbitrary elements of 
the message space (by exploiting a vulnerability against
Helios~\cite{Bernhard12:Helios} due to the 
weak Fiat-Shamir transformation). 
Finally, suppose the adversary claims the 
subdistribution of votes is the election outcome. The verification 
procedure will accept that outcome, because all proofs hold,
yet the election outcome excludes votes, hence, the adversary 
wins the game.
\end{proof}

\noindent
Similarly, voting system helios-server-mixnet does not satisfy 
$\Soundness$ when a $(n,n)$-threshold is 
used~\cite{Smyth18:HeliosMixnetVerifiability}. 

\begin{sloppypar}
Smyth proposes a formal description of Helios Mixnet that uses the Fiat-Shamir
transformation and proves that $\BallotSecrecy$, $\IV$, and $\UV$ are 
satisfied~\cite{Smyth15:BallotSecrecyFull,Smyth17:HeliosMixnetVerifiability}.
\end{sloppypar}

\begin{theorem}
Helios Mixnet satisfies both $\IV$ and $\UV$.
\end{theorem}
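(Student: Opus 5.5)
The plan mirrors the argument for Theorem~\ref{thm:heliosNextIVUV}, adapted to the mixnet tallying algorithm, and treats $\IV$ and $\UV$ separately.

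\emph{Individual verifiability.} For $\IV$ it suffices to show that two outputs of $\VoteSymb$ collide only with negligible probability, whatever public key, number of candidates and votes the adversary chooses. A Helios Mixnet ballot contains a single ElGamal ciphertext together with a proof. I will invoke the result of Smyth \emph{et al.}~\cite{Smyth15:ElectionVerifiability,Smyth18:HeliosMixnetVerifiability} that ElGamal ciphertexts computed with fresh coins do not collide for correctly formed keys. The main subtlety is that in the $\IVGame$ the public key is chosen by the adversary, not by $\SetupSymb$. So $\VoteSymb$ must check the key parameters (group order, generator, key in the group) and output $\perp$ otherwise; this is exactly the Helios~3.1.4 fix noted after Theorem~\ref{thm:helios:completeness}. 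Given a valid key, the first ciphertext component is $g^r$ for uniformly random $r$, and two such values coincide with probability $1/q$, which is negligible.

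\emph{Universal verifiability.} I will split this into $\Injectivity$, $\Soundness$ and $\Completeness$.
\begin{itemize}
\item $\Injectivity$ follows from correctness of ElGamal decryption: ciphertexts on distinct plaintexts under the same valid key must differ.
\item $\Completeness$ follows from the honest tallier's behaviour. The tallier discards ballots whose proofs fail, mixes honestly and decrypts honestly, and the completeness of the shuffle and decryption proof systems then gives acceptance by $\VerifySymb$ whenever $|\bb|\le\mB$ and $\nC\le\mC$.
\item $\Soundness$ goes by reduction. Suppose $\VerifySymb$ accepts an outcome $\outcome\neq\correcttally(\pk,\nC,\bb,\kk)$. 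By the definition of $\correcttally$, the discrepancy must arise in one of three ways: a ballot with a valid construction proof whose ciphertext is not well-formed; a mix output that is not a re-encryption and permutation of its inputs; or a decryption proof accepted for a wrong plaintext. Each case contradicts the soundness of, respectively, the ballot proof, the shuffle proof, or the decryption proof.
\end{itemize}

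I expect the main obstacle to be the soundness reduction, specifically obtaining soundness of the non-interactive proofs against an adversary who also chooses the public key. The attacks on Zeus (Remark~\ref{rem:heliosM}) and Helios~3.1.4 (Theorem~\ref{thm:Helios2}) show this fails under weak Fiat-Shamir. So I will rely on the formal description using the (strong) Fiat-Shamir transformation, which hashes the statement including $\pk$ and the ciphertexts. I will then appeal to the known result that the Fiat-Shamir transformation of special-sound sigma protocols yields sound proofs in the random oracle model. This lets each case be handled with a negligible bound, and a union bound over the three cases completes the argument.
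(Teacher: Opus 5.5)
Your outline follows the paper's sketch: $\IV$ from non-colliding ElGamal ciphertexts, and $\UV$ because tallying discards ill-formed ballots, so mixing and decryption give the correct outcome, which honest tallying always produces. Your additions are appropriate: $\VoteSymb$ validating an adversarially chosen key, $\Injectivity$ as part of $\Soundness$, and strong Fiat-Shamir. The gap is your claim that, by the definition of $\correcttally$, any discrepancy must come from one of three false statements being proven.

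$\correcttally(\pk,\nC,\bb,\kk)$ counts only ballots that are literally outputs of $\VoteSymb$ for some coins. $\VerifySymb$, however, accepts a count of every ballot whose construction proof verifies. Soundness of that proof tells you only that the ciphertext is well-formed, not that the ballot lies in the image of $\VoteSymb$. Suppose verification does not check that a Schnorr-style response $z$ lies in $\mathbb{Z}_q$. The adversary can copy an honest ballot for $v$ with $z$ replaced by $z+q$ and post both. The copy is a distinct ballot whose proof verifies, and an honestly computed outcome (which $\VerifySymb$ accepts) has two votes for $v$. Yet $\correcttally$ counts one, and no false statement was proven anywhere.

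You need an additional lemma: every ballot that passes the proof check and encrypts some $v\in\{1,\dots,\nC\}$ is an output of $\VoteSymb(\pk,v,\nC,\kk;\cdot)$. This is what the paper's phrase ``tallying discards ill-formed ballots'' is really asserting. For Fiat-Shamir sigma protocols the lemma holds when verification enforces canonical encodings (responses in $\mathbb{Z}_q$, group elements in the order-$q$ subgroup). Any accepting transcript $(a,c,z)$ for a true statement with witness $r$ is then the prover's output with nonce $w = z - c\,r \bmod q$. Similarly, the ciphertexts of accepted ballots must enter the mix as a multiset. Two distinct outputs of $\VoteSymb$ (same encryption coins, different proof nonces) share a ciphertext, and both are counted by $\correcttally$.

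A smaller point on $\IV$: the bound $1/q$ is negligible only if $q$ is super-polynomial in $\kk$. Because the adversary chooses $\pk$, the parameter checks in $\VoteSymb$ must include a lower bound on the size of $q$ in terms of $\kk$, not only primality, generator and membership checks.
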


\begin{proof}[Proof sketch]
\begin{sloppypar}
In the Helios system, ballots do not collide~\ref{thm:heliosNextIVUV}, because they 
contain ElGamal ciphertexts constructed using correctly generated keys, which means that $\IV$ is satisfied. 
This also means  $\UV$ is satisfied, 
because tallying discards ill-formed ballots and votes, hence the mix 
gives the correct outcome (corresponding to $\Soundness$), and such outcomes are 
always accepted (corresponding to $\Completeness$).\qedhere
\end{sloppypar}
\end{proof}

\begin{theorem}
Helios Mixnet satisfies $\BallotSecrecy$.
\end{theorem}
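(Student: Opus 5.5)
The plan is to use the fourth simplification from Section~\ref{sec:secrecy:simpleProofs}, rather than a direct reduction to the ballot secrecy game. That paragraph shows that any universally-verifiable election scheme derived from $\encToVote$ by replacing its tallying and verification algorithms satisfies $\BallotSecrecy$, provided three conditions hold: $\Pi$ satisfies $\INDPA$ and well-definedness, and the tallying proofs are zero-knowledge~\cite[Theorem~13 \& Lemma~24]{Smyth15:BallotSecrecyFull}. So the task reduces to exhibiting Helios Mixnet in this form and checking each hypothesis.

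\textbf{Step 1: exhibit the scheme in the required form.} A Helios Mixnet ballot is an ElGamal ciphertext together with a non-interactive proof of correct construction, obtained by the (strong) Fiat-Shamir transformation. I would package these into a single encryption scheme $\Pi$ whose $\EncSymb$ outputs the ciphertext-plus-proof pair, and whose $\DecSymb$ checks the proof and then decrypts. With this choice, $\SetupSymb$ and $\VoteSymb$ of Helios Mixnet coincide with those of $\encToVote$, up to the key-generation proof. That proof is zero-knowledge and can be folded into $\GenSymb$. Only $\TallySymb$ and $\VerifySymb$ then differ from $\encToVote$: they mix, decrypt, and prove.

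\textbf{Step 2: check the hypotheses on $\Pi$.}
\begin{itemize}
\item \emph{$\INDPA$.} Use the known result that ElGamal with a simulation-sound extractable proof of knowledge (strong Fiat-Shamir in the random oracle model) is $\INDPA$; this is the encrypt-then-prove paradigm, as in Smyth's Proposition~21 argument for \heliosnext. The reduction extracts the plaintexts of the adversary's parallel-decryption queries and otherwise relies on IND-CPA of ElGamal under DDH.
\item \emph{Well-definedness.} This holds because ill-formed ciphertexts are exactly those whose proof fails, which is publicly checkable.
\end{itemize}

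\textbf{Step 3: check the remaining conditions.} Universal verifiability is given by the preceding theorem. The tallying proofs consist of shuffle proofs and decryption proofs; both are honest-verifier zero-knowledge and made non-interactive by Fiat-Shamir. Their simulators yield the zero-knowledge tallying property: the simulator receives only the outcome, re-encrypts and permutes dummy ciphertexts encrypting the outcome's multiset of votes, and simulates the proofs.

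\textbf{Main obstacle.} I expect the $\INDPA$ argument for the composed $\Pi$ to be the hardest step. It needs simulation soundness of the Fiat-Shamir proofs so that the adversary cannot maul the challenge ballot, and this is exactly where weak Fiat-Shamir fails for Zeus. I would try to cite it rather than reprove it. A second subtlety is that the simulated mixnet transcript must be indistinguishable even though the adversary also sees its own ballots inside the mix input. I would handle this by a hybrid: first replace the real proofs by simulated ones, then replace the mix outputs by fresh encryptions, using IND-CPA of ElGamal for the last hop.
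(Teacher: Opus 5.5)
Your top-level route is the paper's own. You view Helios Mixnet as $\encToVote$ with $\TallySymb$ and $\VerifySymb$ replaced, take $\UV$ from the preceding theorem, and discharge $\INDPA$, well-definedness and zero-knowledge tallying proofs so that the fourth simplification of Section~\ref{sec:secrecy:simpleProofs} applies. The paper simply cites Smyth for the hypotheses on $\Pi$ and says nothing about the tallying proofs.

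Steps~1 and~2 are reasonable, but Step~3 fails as written. As the paper describes Helios Mixnet, the ballot proof only shows correct construction. $\TallySymb$ decrypts \emph{every} mixed ciphertext whose ballot proof holds and reveals ``any ill-formed votes''. So an adversary can post an honestly proved encryption of some $m^*\notin\{1,\dots,\nC\}$, and the real tallying proof then contains $m^*$, with a decryption proof, among the mix outputs. Your simulator knows only $\outcome$ and encrypts the outcome's multiset. It therefore produces too few outputs and cannot know $m^*$, so real and simulated proofs are trivially distinguishable.

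Your hybrid hides this. The last hop does not actually need IND-CPA, because ElGamal re-encryption is perfectly re-randomising. But that hop is valid only when the dummy plaintext multiset equals the true multiset of decryptions, and that is exactly what fails here. Separately, the simulator also needs $\bb$, not just $\outcome$, because the shuffle statement refers to the input ciphertexts.

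There are two ways to repair this.
\begin{itemize}
\item Use a formalisation in which the ballot proof is a disjunctive proof that the plaintext is a candidate. Soundness then leaves no ill-formed votes after filtering, and your simulator works once it is also given $\bb$.
\item Give the simulator $\bb$ together with the multiset of decryptions of all ballots whose proofs hold, and argue that this extra leakage is harmless.
\end{itemize}
The second argument does go through. Oracle ballots always encrypt candidates, and by balancedness their contribution to that multiset is independent of $\beta$. The remaining values are decryptions of ciphertexts not produced by the oracle. The $\INDPA$ reduction $\Adv$ obtains these from its parallel decryption query, exactly as it obtains ${\bf m}$ in the proof of Proposition~\ref{prop:encToVoteBS}. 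With this second repair, however, you can no longer invoke \cite[Theorem~13]{Smyth15:BallotSecrecyFull} as a black box; you must re-run its reduction under the weaker zero-knowledge notion.
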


\begin{proof}[Proof sketch]
Smyth~\cite{Smyth18:HeliosMixnetVerifiability} demonstrates that Helios Mixnet can be 
derived from $\encToVote[\allowbreak\Pi]$ by using suitable tallying and verification 
algorithms. Smyth also proves that $\UV$ is satisfied, which is enough for 
$\BallotSecrecy$ (\S\ref{sec:secrecy:simpleProofs}), assuming $\Pi$ satisfies
$\INDPA$ and well-definedness.
\end{proof}
\subsection{Case study III: Belenios}\label{sec:belenios}

Belenios is a secret verifiable electronic voting system developed by Cortier, Gaudry, and Glondu~\cite{Cortier19:Belenios}, building upon the Helios codebase but addressing several of its security limitations. The system has been deployed in hundreds of real-world elections since its creation, including academic institutions, associations, and notably in experiments during French legislative elections~\cite{Cortier23:FrenchElections}.








\subsubsection{Key differences from Helios}

The most significant change in Belenios with respect to Helios is the addition of digital signatures to attest that ballots come from eligible voters associated with a given credential~\cite{Baloglu21:HeliosBelenios}. This addresses a fundamental limitation of Helios: vulnerability to ballot stuffing by a malicious bulletin board.

In Helios, a dishonest bulletin board could add ballots without anyone noticing, because there is no mechanism to verify that ballots originated from legitimate voters. Belenios provides unforgeability: nobody can forge a fake signature.

\subsubsection{Limitations and extensions}

Like Helios, Belenios is not coercion-resistant: voters may prove how they voted by revealing the randomness used to produce their ballot, or they may sell their voting credentials. 

Belenios does not ensure \emph{participation privacy}: the publicly available election data reveals whether a particular voter participated in the election. Although this information is typically available in traditional paper-based elections (anyone can observe people entering a polling station), an Internet voting system without participation privacy reveals voter identities on a much larger scale by publishing them online~\cite{KTV15:HeliosPrivacy}.

Several extensions to Belenios have been proposed. \emph{BeleniosVS}~\cite{Cortier19:BeleniosVS} provides secrecy and verifiability even against a corrupted voting device. Recent work has added \emph{cast-as-intended} verification mechanisms~\cite{Cortier24:BeleniosCastAsIntended}, allowing voters to verify that their device correctly encoded their intended vote. In 2024, Belenios underwent a certification campaign in France~\cite{Cortier24:BeleniosCertification}, providing additional assurance for its use in official elections.

\subsubsection{Comparison with Helios}

Table~\ref{table:HeliosBeleniosComparison} summarises the key differences between Helios and Belenios.

\begin{table}[h!]
\begin{center}
\begin{tabular}{ l || c | c }
        & Helios & Belenios \\ \hline
Ballot secrecy 	          & \xmark$^*$ & \cmark \\ \hline
Individual verifiability  & \cmark & \cmark \\ \hline
Universal verifiability   & \xmark$^*$ & \cmark \\ \hline
Eligibility verifiability & \xmark & \cmark \\ \hline
Resistance to ballot stuffing & \xmark & \cmark \\ \hline
Participation privacy & \xmark & \xmark \\ \hline
Receipt-freeness & \xmark & \xmark \\ \hline
Coercion resistance & \xmark & \xmark \\
\end{tabular}
\end{center}

\footnotesize{%
$^*$As discussed in Section~\ref{sec:helios}, existing Helios implementations (2.0, 3.1.4) do not satisfy these properties, though the proposed \heliosnext\ variant does.
}
\caption{Comparison of Helios and Belenios security properties}
\label{table:HeliosBeleniosComparison}
\end{table}

The primary advantage of Belenios over Helios is eligibility verifiability and resistance to ballot stuffing attacks. Both systems share similar limitations regarding coercion resistance and participation privacy, which remain active areas of research.

\section{Summary}\label{sec:summary}
We have introduced and detailed concepts from cryptography that can be used in designing election schemes. We have given a brief introduction to game-based cryptography, which we used to formulate elections as games. Our emphasis has been on studying definitions of secrecy and verifiability, which we formally defined with cryptography games.  

For definition and illustration purposes, we have examined a proposed election scheme based on four building blocks, namely the primitives  $\SetupSymb$, $\VoteSymb$, $\TallySymb$, and $\VerifySymb$, as illustrated in Figure~\ref{figure-simplevoting} and, more precisely, in Figure~\ref{figure-voting}. We have detailed these primitives or algorithms in Definition~\ref{def:election}, which we now briefly recall. $\SetupSymb$ randomly generates a public and private key pair, where the voters use the public key and the tallier uses the private key. $\VoteSymb$ allows the voters to vote using the public key. $\TallySymb$ adds up the encrypted votes, which is possible with a homomorphic asymmetric encryption scheme. $\VerifySymb$ uses the public key or keys from the primitive $\SetupSymb$ and a proof generated by the primitive $\TallySymb$ to verify the election outcome is correct.  

We have then stated what we mean exactly by secrecy and verifiability by introducing further concepts. 

We have defined election correctness as requiring that honestly executed elections produce outcomes that match the cast votes with overwhelming probability, ensuring the scheme functions as intended when all parties follow the protocol correctly.

We have defined soundness as requiring the $\VerifySymb$ algorithm to only accept correct outcomes; Definition~\ref{def:UVSoundness}. To complement soundness, we have defined completeness as requiring election outcomes (that are produced by algorithm $\TallySymb$) to be accepted by algorithm $\VerifySymb$;   Definition~\ref{def:UVComplete}. We have then defined  universal verifiability as a combination of soundness and completeness; 	Definition~\ref{def:UV}. 

Using the aforementioned concepts, we have then examined the voting schemes Helios, Helios Mixnet, and Belenios through detailed case studies.

\subsection{Other voting systems}
Beyond the Helios family, several other electronic voting systems have received formal security analysis.

\emph{JCJ}~\cite{JCJ05} introduced the first formal treatment of coercion-resistant electronic voting, allowing voters to produce fake credentials that are indistinguishable from real ones, thereby enabling them to deceive coercers. \emph{Civitas}~\cite{CCM08} implements the JCJ protocol with practical improvements, and its source code is publicly available. However, the JCJ/Civitas approach has quadratic complexity in the number of submitted ballots during the tallying phase, limiting its applicability to large-scale elections. Moreover, the original JCJ definition of coercion resistance appears to require a patch to be satisfiable~\cite{HainesSmyth19:CoercionSoK}, and recent work by Cortier et al.~\cite{Cortier24:JCJ} has identified potential weaknesses in the ballot cleansing procedure that may leak information to coercers.

\emph{Athena}~\cite{Smyth19:Athena} advances the JCJ paradigm by delivering a verifiable, coercion-resistant voting system with linear complexity, addressing the scalability limitations of JCJ/Civitas. The scheme preserves the fake credential approach that underpins coercion resistance while restructuring the cryptographic operations to avoid the quadratic cost of credential validation. This makes coercion-resistant remote voting more practical for elections with large numbers of voters.

\emph{Pr\^et \`a Voter}~\cite{ChaumRyanSchneider2005} uses paper-based cryptographic mechanisms with mix-network tallying and has been studied extensively for ballot secrecy and verifiability. \emph{Scantegrity}~\cite{Chaum08,Chaum08II} provides end-to-end verifiability for optical scan systems using invisible ink confirmation codes. We do not provide the same depth of analysis for these systems, but refer interested readers to the cited works and the systematization by Cortier et al.~\cite{Cortier16:VerifiabilitySoK}, which surveys verifiability notions across multiple voting protocols.

\subsection{Related tutorials and surveys}
Several other works provide introductions to electronic voting security. Bernhard et al.~\cite{Bernhard14:voting-tutorial} give a gentle introduction to cryptographic voting aimed at non-specialists. The systematizations of knowledge by Bernhard et al.~\cite{BCGPW15} on ballot privacy definitions and Cortier et al.~\cite{Cortier16:VerifiabilitySoK} on verifiability notions provide comprehensive treatments of the definitional landscape. For readers interested in the formal methods approach to analyzing voting protocols, Delaune et al.~\cite{DKR08} develop techniques based on the applied pi calculus.

\section{Conclusion}\label{sec:conclusion}
To better understand electronic voting, we have used a game-based approach coupled with formal definitions, such as soundness and completeness, that shows how third parties can use  these tools to detect subtle vulnerabilities in voting systems. We have seen how analysis drives development and ultimately leads to systems that are provably secure. This clearly demonstrates the need for security definitions coupled with analysis to ensure security of voting systems. 

For such a dynamic topic as electronic voting, it is impossible to give an up-to-date and complete overview, but this tutorial offers a good starting point for interested students, researchers, practitioners, and readers. We hope this \manuscript\ advances the reader's understanding of these topics and themes. Overall, we want to inspire more research and work in electronic voting, with the ultimate aim of enabling democratic institutions with secure voting schemes. 

\appendix

\section{List of Symbols}\label{app:notation}

The following tables collect the notation used throughout this \manuscript{}, grouped by theme.

\bigskip

\noindent\textbf{Cryptographic primitives}
\medskip

\noindent
\begin{tabular}{@{}l@{\quad}p{0.78\textwidth}@{}}
$\Pi$ & Encryption scheme, consisting of the tuple $(\GenSymb,\EncSymb,\DecSymb)$. \\[4pt]
$\GenSymb$ & Key generation algorithm for an encryption scheme. \\[4pt]
$\EncSymb$ & Encryption algorithm. \\[4pt]
$\DecSymb$ & Decryption algorithm. \\[4pt]
$\pk$ & Public key for an asymmetric encryption scheme. \\[4pt]
$\sk$ & Private (secret) key for an asymmetric encryption scheme. \\[4pt]
$\votespace$ & Message space (space of plaintexts) for an encryption scheme. \\[4pt]
\end{tabular}

\bigskip

\noindent\textbf{Election scheme}
\medskip

\noindent
\begin{tabular}{@{}l@{\quad}p{0.78\textwidth}@{}}
$\Gamma$ & Election scheme, a tuple $(\SetupSymb, \VoteSymb, \TallySymb, \VerifySymb)$. \\[4pt]
$\SetupSymb$ & Setup algorithm that generates keys and election parameters. \\[4pt]
$\VoteSymb$ & Vote algorithm that constructs a ballot from a vote. \\[4pt]
$\TallySymb$ & Tally algorithm that computes the election outcome. \\[4pt]
$\VerifySymb$ & Verify algorithm that audits the election outcome. \\[4pt]
$\encToVoteSymb$ & Construction that builds an election scheme from an encryption scheme. \\[4pt]
\end{tabular}

\bigskip

\noindent\textbf{Election parameters}
\medskip

\noindent
\begin{tabular}{@{}l@{\quad}p{0.78\textwidth}@{}}
$\kk$ & Security parameter. \\[4pt]
$\nC$ & Number of candidates in an election. \\[4pt]
$\mC$ & Maximum number of candidates. \\[4pt]
$\nB$ & Number of ballots in an election. \\[4pt]
$\mB$ & Maximum number of ballots. \\[4pt]
\end{tabular}

\bigskip

\noindent\textbf{Votes and ballots}
\medskip

\noindent
\begin{tabular}{@{}l@{\quad}p{0.78\textwidth}@{}}
$v$ & A vote (selected from candidates $1, \ldots, \nC$). \\[4pt]
$b$ & A ballot (encrypted vote). \\[4pt]
$\bb$ & Bulletin board (collection of ballots). \\[4pt]
$\outcome$ & Election outcome (vote tallies for each candidate). \\[4pt]
$\tpf$ & Tallying proof demonstrating correctness of the outcome. \\[4pt]
$\auditoutcome$ & Audit outcome. \\[4pt]
$\correcttally$ & Function mapping a bulletin board to the vector of vote counts per candidate. \\[4pt]
$\balanced$ & Predicate that holds when vote vectors are balanced (equal totals). \\[4pt]
\end{tabular}

\bigskip

\noindent\textbf{Assignments and general notation}
\medskip

\noindent
\begin{tabular}{@{}l@{\quad}p{0.78\textwidth}@{}}
$\leftarrow$ & Deterministic assignment. \\[4pt]
$\leftarrow_R$ & Uniformly random sampling from a finite set. \\[4pt]
$\adv$ & Adversary in a security game. \\[4pt]
$\Adv$ & Second adversary or reduction algorithm in a security proof. \\[4pt]
$\oracle$ & Oracle in a security game. \\[4pt]
$\Succ(\cdot)$ & Success probability of a game (probability of outputting $\top$). \\[4pt]
$\negl$ & Negligible function. \\[4pt]
\end{tabular}

\bigskip

\noindent\textbf{Security properties}
\medskip

\noindent
\begin{tabular}{@{}l@{\quad}p{0.78\textwidth}@{}}
$\INDCPA$ & IND-CPA: indistinguishability under chosen-plaintext attack. \\[4pt]
$\INDPA$ & IND-PA0: indistinguishability under plaintext-awareness. \\[4pt]
$\CNMCPA$ & CNM-CPA: ciphertext non-malleability under chosen-plaintext attack. \\[4pt]
$\BallotSecrecy$ & Ballot secrecy game. \\[4pt]
$\BallotIndependence$ & Ballot independence game (IND-CVA). \\[4pt]
$\IV$ & Individual verifiability. \\[4pt]
$\UV$ & Universal verifiability. \\[4pt]
$\Completeness$ & Completeness (honest outcomes pass verification). \\[4pt]
$\Soundness$ & Soundness (verification rejects incorrect outcomes). \\[4pt]
\end{tabular}

\section{Glossary of Voting Terms}\label{app:glossary}

The following table defines voting-specific terminology used throughout this \manuscript{}.

\bigskip

\noindent\textbf{Core verifiability concepts}
\medskip

\noindent
\begin{tabular}{@{}l@{\quad}p{0.78\textwidth}@{}}
\emph{Verifiability} & The ability to prove that no undue influence has occurred in an election and that the outcome correctly reflects the votes cast. \\[4pt]

\emph{Individual verifiability}~\cite{Smyth15:ElectionVerifiability} & A voter can check whether their ballot is collected and included in the election record. Formalized as the requirement that ballots do not collide. \\[4pt]

\emph{Universal verifiability}~\cite{Smyth15:ElectionVerifiability} & Anyone can check whether an election outcome corresponds to the votes expressed in collected ballots. Formalized as the combination of soundness and completeness. \\[4pt]
\end{tabular}

\bigskip

\noindent\textbf{Components of universal verifiability}
\medskip

\noindent
\begin{tabular}{@{}l@{\quad}p{0.78\textwidth}@{}}
\emph{Soundness}~\cite{Smyth15:ElectionVerifiability} & The verification algorithm only accepts outcomes that correspond to votes expressed in collected ballots. An adversary cannot produce a proof for an incorrect outcome that passes verification. \\[4pt]

\emph{Completeness}~\cite{Smyth15:ElectionVerifiability} & The verification algorithm accepts outcomes computed honestly by the tallying algorithm. Ensures that correctly computed election outcomes are not rejected. \\[4pt]

\emph{Injectivity}~\cite{Smyth15:ElectionVerifiability} & Ballots for distinct votes never collide. Ensures that each ballot can be interpreted as encoding at most one vote. \\[4pt]
\end{tabular}

\bigskip

\noindent\textbf{Core security properties}
\medskip

\noindent
\begin{tabular}{@{}l@{\quad}p{0.78\textwidth}@{}}
\emph{Ballot secrecy}~\cite{Smyth15:BallotSecrecyFull} & A voter's vote is not revealed to anyone, including election organizers. Formalized as the inability to distinguish between an instance of the voting system in which voters cast some votes from another instance in which the voters cast a permutation of those votes. \\[4pt]
\end{tabular}

\bigskip

\noindent\textbf{Fine-grained verifiability}
\medskip

\noindent
\begin{tabular}{@{}l@{\quad}p{0.78\textwidth}@{}}
\emph{End-to-end verifiability} & The combination of cast-as-intended, stored-as-cast, and tallied-as-stored verifiability. Provides comprehensive verification throughout the voting process. \\[4pt]

\emph{Cast-as-intended}~\cite{Cortier16:VerifiabilitySoK} & A voter can verify that their voting device correctly encoded their intended vote into a ballot. \\[4pt]

\emph{Stored-as-cast}~\cite{Cortier16:VerifiabilitySoK} & A voter can verify that their ballot was correctly stored on the bulletin board as they cast it. \\[4pt]

\emph{Tallied-as-stored}~\cite{Cortier16:VerifiabilitySoK} & Anyone can verify that the ballots stored on the bulletin board were correctly tallied to produce the election outcome. \\[4pt]
\end{tabular}

\bigskip

\noindent\textbf{Related properties}
\medskip

\noindent
\begin{tabular}{@{}l@{\quad}p{0.78\textwidth}@{}}
\emph{Unforgeability} & Only authorized voters can construct valid ballots. Prevents adversaries from injecting fraudulent ballots into the election. \\[4pt]

\emph{Eligibility verifiability}~\cite{Smyth15:ElectionVerifiability} & Deprecated term, used as a synonym for unforgeability in earlier work. Current terminology prefers unforgeability. \\[4pt]

\emph{Ballot independence} & Ballots constructed by honest voters do not leak information about votes cast by other honest voters. Related to the IND-CVA security game. \\[4pt]

\emph{Final-agreement}~\cite{Hirschi21:BulletinBoard} & A property ensuring that all participants eventually agree on the final state of the bulletin board. Necessary for verifiability with non-idealized bulletin boards. \\[4pt]
\end{tabular}

\section{Asymmetric encryption}\label{app:sec:asym}
We briefly described public-key or asymmetric encryption in Section~\ref{sec:asym}. Now we give a more precise definition, which can be found in cryptography textbooks, such as Katz and Lindell~\cite{katz2014introduction}. 

\begin{definition}[Asymmetric encryption scheme~\cite{katz2014introduction,Smyth15:ElectionVerifiability}]\label{def:enc}
An \emph{asymmetric encryption scheme} is a tuple of probabilistic polynomial-time algorithms 
$(\GenSymb,\EncSymb,\allowbreak\DecSymb)$, such that:

\begin{itemize}
\item \emph{\textbf{Gen}}, denoted $(\pk,\sk, \mathcal{M} )\leftarrow \Gen$, inputs a security parameter 
$\kk$ and outputs a key pair $(\pk,\sk)$ and message space $\mathcal{M}$.

\item \emph{\textbf{Enc}}, denoted $c\leftarrow \Enc{m}$, inputs a public key $\pk$ and message $m\in\mathcal{M}$, and outputs a ciphertext $c$. 

\item \emph{\textbf{Dec}}, denoted $m\leftarrow \Dec{c}$, inputs a private key $\sk$ and ciphertext $c$, and outputs a message $m$ or an error symbol. We assume $\DecSymb$ is deterministic.

\end{itemize}

\noindent
Moreover, the scheme must be \emph{correct}: there exists a negligible function
$\negl$, such that for all security parameters $\kk$ and messages $m$, we have
$$\Pr[
	(\pk,\sk,\mathcal{M}) \allowbreak\leftarrow\Gen; 
	c\leftarrow \Enc{m}:
	m\in\mathcal{M} \Rightarrow
	\Dec[\sk]{c} = {m}] > 1-\neglK  \,.$$
\end{definition}

\section{Definition of $\INDPA$}\label{sec:INDPA}
In Section~\ref{sec:indgame} we quickly discussed the importance of indistinguishability, which we defined with a game. Now we give a definition of \emph{indistinguishability under parallel attack}, as defined by Bellare \& Sahai~\cite{Bellare99:IND-k-CPA}, which we used in Section~\ref{sec:secrecy:example}. 

\begin{definition}[$\INDPA$~\cite{Bellare99:IND-k-CPA}]\label{def:INDPA}
Let $\Pi = (\GenSymb,\EncSymb,\allowbreak\DecSymb)$ be 
an asymmetric encryption scheme, $\adv$ be an adversary, $\kk$ be the security parameter,
and $\gameINDPA$ be the following game.

{\upshape
\begin{inlineexperiment}{$\gameINDPA$}
			$(\pk,\sk,\mathcal{M}) \leftarrow \Gen$\;
			$(m_0,m_1) \leftarrow \adv(\pk,\mathcal{M},\kk)$\;
			$\beta \leftarrow_R \{0,1\}$\;
			$c \leftarrow \Enc{m_\beta}$\;
			${\bf c} \leftarrow \adv(c)$\;
			${\bf m} \leftarrow (\Dec{{\bf c}[1]},\dots,\Dec{{\bf c}[|{\bf c}|]}$\;
			$g\leftarrow\adv({\bf m})$\;
			\Return $(g = \beta \mathrel) \wedge \bigwedge\limits_{1\leq i \leq |{\bf c}|} (c \not={\bf c}[i])$\;
\end{inlineexperiment}
}

\noindent
In the above game, we require the plaintexts $m_0,m_1\in\mathcal{M}$ and $|m_0| = |m_1|$.
We say that the encryption scheme $\Pi$ satisfies indistinguishability under parallel attack $\INDPA$, if for all probabilistic polynomial-time adversaries $\adv$,
there exists a negligible function $\negl$, such that for all security parameters $\kk$, the inequality
$$\Succ(\gameINDPA) \leq \frac{1}{2} + \neglK \,, $$
holds. 
\end{definition}

\ifdefined\acmformat
  \bibliographystyle{ACM-Reference-Format}
\else
  \bibliographystyle{plain}
\fi
\bibliography{main-tutorial}

\end{document}